\newtheorem{algo}{Algorithm}
\newcommand{\TestL}[1]{\begin{tikzpicture}[scale=0.6]\useasboundingbox (0,0) rectangle (1,1);\draw (0,0) rectangle (1,1) (0.5,0.5) node{#1} ;\end{tikzpicture}}
\newcommand{\Test}{\begin{tikzpicture}[scale=0.6]\useasboundingbox (0,0) rectangle (1,1);\draw (0,0) rectangle (1,1);\end{tikzpicture}}
\newcommand{\ExecL}[2]{\begin{tikzpicture}[scale=0.6]\useasboundingbox (0,0) rectangle (#1,1);\draw [fill=gray!40!white] (0,0) rectangle (#1,1) (#1/2,0.5) node{#2} ; \draw[ultra thick](#1,-0.03) -- (#1,1.03);\end{tikzpicture}}
\theoremstyle{plain}
\newtheorem{theorem}{Theorem}
\newtheorem{lemma}[theorem]{Lemma}
\newtheorem{corollary}[theorem]{Corollary}
\newtheorem{proposition}[theorem]{Proposition}
\newcommand{\keywords}[1]{} 
\newcommand{\T}{$T_1$\xspace}
\newcommand{\Ex}{E\xspace}
\newcommand{\tbeatthresh}{$T_2$\xspace}
\newcommand{\ub}{\bar p\xspace}
\newcommand{\p}{\ub}
\newcommand{\ratio}{\rho_{\infty}}
\newcommand{\mratio}{\rho}
\newcommand{\algoname}[1]{\textsc{#1}\xspace}
\newcommand{\OPT}{\algoname{Opt}}
\newcommand{\ALG}{\algoname{Alg}}
\newcommand{\BOUND}{\algoname{Threshold}}
\newcommand{\RTE}{\algoname{Random}}
\newcommand{\thresh}{\algoname{Threshold}}
\newcommand{\beat}{\algoname{Beat}}
\newcommand{\ute}{\algoname{UTE}}
\newcommand{\TA}{\algoname{DelayAll}}
\newcommand{\eps}{\epsilon}
\title{An Adversarial Model for Scheduling with Testing
\thanks{This research was carried out in the framework of \textsc{Matheon} supported by Einstein Foundation Berlin,  the German Science Foundation (DFG) under contract  ME~3825/1 and Bayerisch-Französisches Hochschulzentrum (BFHZ). Further support was provided by EPSRC grant EP/S033483/1 and the ANR grant ANR-18-CE25-0008. The second author was supported by a study leave granted by University of Leicester during the early stages of the research.
A preliminary version of this paper appeared in \emph{The 9th Innovations in Theoretical Computer Science Conference} (ITCS), January 2018~\cite{DurrEMM/18}.}
}
\author{Christoph D\"urr\thanks{Sorbonne Universit\'e, CNRS, Laboratoire d'informatique de Paris 6, Paris, France.}
                    \and
Thomas Erlebach\thanks{School of Informatics, University of Leicester, UK.}
                    \and
Nicole Megow\thanks{Department of Mathematics and Computer
  Science, University of Bremen, Germany.}
                    \and
Julie Mei\ss{}ner\thanks{Institute of Mathematics, Technical University of Berlin, Germany.}
}
\begin{document}
\maketitle


\begin{abstract}
\noindent We introduce a novel adversarial model for scheduling with explorable uncertainty. In this model, the processing time of a job can potentially be reduced (by an \emph{a priori} unknown amount) by testing the job. Testing a job~$j$ takes one unit of time and may reduce its processing time from the given upper limit $\bar{p}_j$ (which is the time taken to execute the job if it is not tested) to any value between $0$ and~$\bar{p}_j$. This setting is motivated e.g.\ by applications where a code optimizer can be run on a job before executing it. We consider the objective of minimizing the sum of completion times on a single machine. All jobs are available from the start, but the reduction in their processing times as a result of testing is unknown, making this an online problem that is amenable to competitive analysis. The need to balance the time spent on tests and the time spent on job executions adds a novel flavor to the problem. We give the first and nearly tight lower and upper bounds on the competitive ratio for deterministic and randomized algorithms. We also show that minimizing the makespan is a considerably easier problem for which we give optimal deterministic and randomized online  algorithms.
\keywords{Explorable Uncertainty \and Competitive Analysis \and Lower Bounds \and Scheduling}
\end{abstract}

\section{Introduction}

Uncertainty in scheduling has been modeled and investigated in many different ways, particularly in the frameworks of online optimization, stochastic optimization, and robust optimization. All these different approaches have the common assumption that the uncertain information, e.g., the processing time of a job, cannot be explored before making scheduling decisions. However, in many applications there is the opportunity to gain exact or more precise information at a certain additional cost, e.g., by investing time, money, or energy. It is a challenging problem to design algorithms that balance the cost for data exploration and the benefit for the quality of a solution. This involves quantifying the trade-off between exploration and exploitation, as it is ubiquitous in numerous applications.

In this paper, we introduce a novel model for scheduling with explorable uncertainty. Given a set of $n$ jobs, every job $j$ can optionally be \emph{tested} prior to its execution. A job that is executed without testing has processing time 
$\bar{p}_j\geq 0
$, while a tested job has processing time $p_j$ with $0 \le p_j \le \bar{p}_j$. Testing a job takes one unit of time on the same resource (machine) that processes jobs.  A tested job does not need to be executed right after its test.


Initially the algorithm knows for each job $j$ only the upper limit $\bar{p}_j$, and gets to know the time $p_j$ only after a test. Tested jobs can be executed at any time after their test. An algorithm must carefully balance testing and execution of jobs by evaluating the benefit and cost for testing.  The resulting schedule is constructed by the algorithm adaptively. This means that at every moment, the choice of a job, and the decision whether to execute or to test it, may depend on the outcome of previous tests.


We focus on scheduling on a single machine. Unless otherwise noted, we consider the sum of completion times as the minimization objective. We use competitive analysis to assess the performance
of algorithms.

For the standard version of this single-machine scheduling problem, i.e., without testing, it is well known that the Shortest Processing Time (\algoname{SPT}) rule is optimal for minimizing the sum of completion times. The addition of testing, combined with the fact that the processing times $p_j$ are
initially unknown to the algorithm, turns the problem into an online
problem with a novel flavor.
An algorithm must decide which jobs to execute untested and which
jobs to test. Once a job has been tested, the algorithm must decide whether
to execute it immediately or to defer its execution while testing or
executing other jobs. At any point in the schedule, it may be difficult
to choose between testing a job (which might reveal that it has a very short
processing time and hence is ideally suited for immediate execution) and
executing an untested or previously tested job. Testing a job yields
information that may be useful for the scheduler, but may delay the completion
times of many jobs. Finding the right balance between tests and executions
poses an interesting challenge.



\subsection{Motivation and applications}

Scheduling with testing is motivated by a range of application settings
where 
an information-revealing test can be applied to jobs before they are executed which leads to a trade-off regarding how to allocate resources for performing a test and actually executing the job. We discuss some examples of such settings from very different domains. 

First, consider the execution of computer programs on a processor. A test
could correspond to a code optimizer that takes unit time to process the
program and  potentially reduces its running-time. The upper limit of a job
describes the running-time of the program if the code optimizer is not
executed.  See \cite[Chapter 5]{cardoso2017embedded} for an overview of
various code otimization techniques.

Second, consider the transmission of files over a network link. It is possible
to run a compression algorithm that can reduce the size of a file by an
\emph{a priori} unknown amount. If a file is incompressible (e.g., if it is
already compressed), its size cannot be reduced at all. Running the
compression algorithm corresponds to a test.  See
\cite{rothstein2013adaptive,wiseman2005efficient} for some practical
techniques balancing compression time with transmission time.


An algorithmic application concerns jobs, which can be executed in two
different modes, a \emph{safe} mode and an \emph{alternative} mode. The safe
mode is always possible. The alternative mode may have a shorter processing
time, but is not possible for every job. A test is necessary to determine
whether the alternative mode is possible for a job and what the processing
time in the alternative mode would be.  One example would be computing
shortest paths in several given graphs. Solving it in the safe mode would
involve the Bellman-Ford algorithm, while the faster alternative mode uses
Dijkstra's algorithm requiring a preliminary non-negativity test on the edge
weights. This situation would be faced by a server that
solves shortest paths problems submitted by users.
See \cite{kerschke2019automated} for a survey on algorithm selection
techniques.



As a final application area consider scenarios, where a diagnosis can be carried out to determine the exact processing time of a job. This is the case in very different domains such as diagnostics in {\em maintenance} or in {\em medical environments} such as emergency departments. A fault diagnosis can determine the time needed to repair or replace a device, which allows for an efficient schedule of maintenance operations. There is a vast amount of literature on maintenance models; see e.g. \cite{Nicolai2008,PierskalleV76}. In medical diagnostics, 
information can be acquired about the time needed for consultation, treatment session and other activities with the patient. This information can help to prioritize and efficiently allocate limited medical resources; cf.~\cite{LeviMS19,AlizamirVS13,MillsAZ13}. Assuming that the resource that performs the diagnosis is the same resource that executes the job, e.g., an engineer or a medical doctor, we are in our problem setting of scheduling with testing with the trade-off regarding how to allocate resources between diagnostics and actual execution of jobs.

%
%

In some applications, it may be appropriate to allow the time for testing a job
to be different for different jobs (e.g., proportional to the upper limit of a
job).  Furthermore, there are applications where the job processing time is $p_j$ even if executed untested, and the test reveals $p_j$, which otherwise is only known to belong to the interval $[0,\bar p_j]$.
We leave the consideration of such generalizations of the problem to
future work.

\subsection{Our contribution}

A scheduling algorithm in the model of explorable uncertainty has to make two types of decisions: which jobs should   be tested, and in what order should job executions and tests be scheduled.  There is a subtle compromise to be found between investing time to test jobs and the benefit one can gain from these tests. We design scheduling algorithms that address this exploration-exploitation question in different ways and provide nearly tight bounds on the competitive ratio. In our analysis, we  first show that worst-case instances have a particular structure that can be described by only a few parameters. This goes hand in hand with analyzing also the structure of both an optimal and an algorithm's schedule. Then we express the total cost of both schedules as functions of these few parameters. It is noteworthy that, under the assumptions made, we typically characterize the {\em exact} worst-case ratios of the considered algorithms.  
Given the parameterized cost ratio, we analyze the worst-case parameter choice. This technical part  involves second order analysis which we perform with computer assistance. 
These computations are provided as notebook- and pdf-files
at a companion webpage.\footnote{%
Files that can be opened with the algebraic solver \emph{Mathematica}
are available at the URL \url{http://cslog.uni-bremen.de/nmegow/public/mathematica-SwT.zip}.}

Two variants of the problem attracted our attention in particular.  In an
\emph{uniform} instance all jobs have the same upper limit $\bar p_j$, which
makes them initially undistinguishable to the scheduler.  This means that the
algorithm's decision whether to test a job, does not depend on the job itself,
but only on the outcome of previous tests.  Moreover in an \emph{extreme
uniform instance}, after testing a job $j$ its processing time is either $0$
or $\bar p_j$.  This means that the benefit of a test is either maximized or
none at all.  Intuitively one would think that extreme uniform instances
capture the worst case instances of the problem, hence it is not surprising
that our lower bound constructions are of this form.   In addition we design
specific algorithms for these variants.  One motivation was to follow a detour
in order to find a better deterministic algorithm for the general problem,
which unfortunately failed. Another motivation is that there is a huge amount of
literature for scheduling problems with equal processing time. Therefore we
believe that these variants are interesting for their own.

\begin{table}[tb]
    \begin{center}
    \begin{tabular}{l|l|l@{\:}l}
        competitive ratio & lower bounds & \multicolumn{2}{l}{upper bounds}
        \\ \hline & \\[-0.8em]
        deterministic algorithms & 1.8546 (Thm \ref{thm:detLB}) & 2 &\textsc{Threshold} (Thm \ref{th:bound2}) \\
        randomized algorithms & 1.6257 (Thm \ref{thm:randLB}) & 1.7453 &\textsc{Random} (Thm \ref{thm:RTE}) \\
        uniform instances (det) & 1.8546 (Thm \ref{thm:detLB}) & 1.9338* &\textsc{BEAT} (Thm \ref{thm:uniformUB}) \\
        extreme uniform instances (det) & 1.8546 (Thm \ref{thm:detLB}) & 1.8668 &\textsc{UTE} (Thm \ref{thm:ext-uniform-UB})\\
        extreme uniform with 
        $\p\approx1.989$ (det) & 1.8546 (Thm \ref{thm:detLB}) & 1.8552 &\textsc{UTE} (Cor \ref{cor:ute}) \\
    \end{tabular}
    \end{center}
    \caption{Our results for minimizing the sum of completion times.
    * holds asymptotically}
    \label{table:contributions}
\end{table}

Our results are the following. For scheduling with testing on a single machine
with the objective of minimizing the sum of completion times, we present a
$2$-competitive deterministic algorithm and prove that no deterministic
algorithm can achieve competitive ratio less than $1.8546$. We then present a
$1.7453$-competitive randomized algorithm, showing that randomization provably
helps for this problem. We also give a lower bound of $1.626$ on the best
possible competitive ratio of any randomized algorithm. Both lower bounds hold
even for \emph{extreme uniform instances}, i.e.\ instances with uniform upper
limits where every processing time is either $0$ or equal to the upper limit.
For such instances we give a $1.8668$-competitive algorithm. In the special
case where the upper limit of all jobs is $\approx 1.9896$, the value used in
our deterministic lower bound construction, that algorithm is even
$1.8552$-competitive, which is nearly optimal. For the case of uniform upper limits and arbitrary
processing times, we give a deterministic $1.9338$-competitive algorithm. An
overview of these results is shown in Table~\ref{table:contributions}.

Finally, we give 
tight results for the simpler problem of minimizing
the makespan in scheduling with testing. The best possible deterministic algorithm has competitive ratio $\varphi \approx 1.618$, where $\varphi$ is the Golden ratio. The optimal randomized algorithm has competitive ratio $4/3$.

In the problem that we introduce in this paper, the interplay between the 
online algorithm and the adversary has a novel flavor due to the presence
of tests:
Testing a job forces the adversary to select a specific
processing time right away, while otherwise the adversary can make this choice \emph{after} the
algorithm has completed all jobs.  To our knowledge, this kind of interaction
does not appear in the standard online computation framework. 

From a technical perspective, our contribution consists of two parts. First we
present techniques to modify instances in an adversarial manner, while
reducing the number of distinct job parameters. 
This allows us to describe the
competitive ratio with a few parameters. Second we show how second order
analysis can be used to optimize these parameters.   

\paragraph{Organization of the paper.} In Section~\ref{sec:preliminaries}, we give the problem definition, some observations and structural properties. Section~\ref{sec:det} is devoted to lower and upper bounds for deterministic algorithms for general instances for minimizing the sum of completion times. Section~\ref{sec:rand} addresses randomized algorithms. In Section~\ref{sec:uniform}, we give more fine-grained results for special cases of the problem with uniform upper bounds. Finally, in Section~\ref{sec:makespan} we give optimal deterministic and randomized algorithms for minimizing the makespan.

\subsection{Related work} 

The arguably most classical framework modeling sequential decision making problems 
	with an exploration-exploitation trade-off is the stochastic multi-armed bandit problem. In each round, one choses from a set of actions (bandit arms) 
	and obtains some observable payoff, where the goal is to maximize the total payoff. Since its introduction in 1933 in~\cite{Thompson33} a plethora of variants has been analyzed and till today this is an actively studied area with applications particularly in online auctions, adaptive routing, and the theory of learning in games; see e.g. \cite{GittinsGW11-book,BubeckC12}.

One of the oldest stochastic problems with explicit exploration cost is Weitzman's Pandora's box problem \cite{Weitzman1979}. Given $n$ random variables with probability distributions, the goal is to find a single variable of largest value, but one needs to pay a cost for each probe of a variable. Its solution can be stated as a special case of the Gittins index theorem~\cite{gittins74,KleinbergWW16}. A nice exposition of an application of a variation of the Gittins index to a problem that can be stated as `playing golf with two balls' can be found in~\cite{Dumitriu2003}.
Only recently, combinatorial otimization problems have been studied in this context with the goal of optimizing the sum of query costs and the objective value of the selected solution. This includes problems such as matching, set cover, facility location, and prize-collecting Steiner tree; see, e.g., \cite{singla2018price,gupta2019markovian} and references therein, also with uncertainty in the cost function~\cite{YamaguchiM19}. 
 
 Other stochastic problems taking exploration cost into account, such as stochastic knapsack~\cite{DeanGV08,Ma18}, orienteering~\cite{GuptaKNR15,BansalN15}, matching~\cite{ChenIKMR09,BansalGLMNR12,BlumDHPSS20,BehnezhadFHR19,AssadiKL19}, and probing problems \cite{AdamczykSW16,GuptaN13,GuptaNS16}, employ a {\em query-commit} model, which means that queried elements must be part of the solution, or it is required that the solution elements are queried. These are quite strong restrictions which change the nature of the  benefit-cost trade-off that an algorithm experiences when making queries.

 All these models have in common 
 that the uncertain information follows some stochastic model.
 We follow a more pessimistic approach, by studying an online or robustness model where the algorithm has no prior stochastic information. As usual in the absence of a known distribution, we assume the worst case and let an adversary chose the hidden information. 
 
This adversarial model falls in the area of deterministic explorable~(or queryable) uncertainty, where additional information about the input can be learned using a query operation, a test in our setting. 
The line of research on optimization with explorable uncertain data has been initiated by Kahan~\cite{Kahan91} in 1991. His work concerns
selection problems with the goal of minimizing the number of queries that are necessary to find the optimal solution.
After the initiation by Kahan~\cite{Kahan91} on selection problems. further problems have been studied in this uncertainty model including finding the~$k$-th smallest value in a set of uncertainty intervals~\cite{Kahan91,GuptaSabharwal:16:The-update,FederMPOW03} (also with non-uniform query cost~\cite{FederMPOW03}), caching problems in distributed databases~\cite{OlstonW00}, computing a function value~\cite{KhannaT01}, and classical combinatorial optimization problems, such as shortest path~\cite{FederMOCOP2007}, finding the median~\cite{FederMPOW03}, the knapsack problem~\cite{GoerigkGISS15}, and the MST problem~\cite{ErlebachHKMR08,MegowMeisner:17:Randomization,FockeMM17}.
While most work aims for minimal query sets to guarantee exact optimal solutions, Olsten and Widom~\cite{OlstonW00} initiate the study of trade-offs between the number of queries and the precision of the found solution. They are concerned with caching problems. Further work in this vein can be found in~\cite{KhannaT01,FederMOCOP2007,FederMPOW03}.

In all this previous work, the execution of queries is separate from the actual optimization problem being solved. In our case, the tests
are executed by the same machine that runs the jobs. Hence, the tests are not considered separately, but they directly affect
the objective value of the actual problem (by delaying the completion of other jobs while a job is being tested). Therefore, instead of minimizing the number of tests needed until an optimal schedule can be computed (which would correspond to the standard approach in the work on explorable uncertainty discussed above), in our case the tests of jobs are part of the schedule, 
and we are interested in the sum of completion times as the single objective function.

Our adversarial model is inspired by (and draws motivation from) recent work on a stochastic model of
scheduling with testing introduced by Levi, Magnanti and Shaposhnik~\cite{LeviMS19,Shaposhnik16}.
They consider the problem of minimizing the weighted sum of completion times on
one machine for jobs whose processing times and weights are random variables
with a joint distribution, and are independent and identically distributed across
jobs. In their model, testing a job does not make its processing time shorter,
it only provides information for the scheduler (by revealing the exact weight
and processing time for a job, whereas initially only the distribution is
known). They present structural results about optimal policies and
efficient optimal or near-optimal solutions based on dynamic
programming.

Scheduling problems, in general, have been studied extensively over decades. They occur in many different variations in a wide range of applications ranging from traditional production scheduling and project planning to new  resource management tasks arising in the advent of internet technology such as distributed cloud service networks and the allocation or virtual machines to physical servers. For a general overview and classification, we refer to the reference works \cite{Leung04-handbook-scheduling,Pinedo-book}. 

The most common frameworks for modeling scheduling with uncertain input are stochastic scheduling \cite{moehringRW84,moehringSU99,MegowV14}, online scheduling \cite{pruhsST04,FiatW1996-book}, a generalization of the former two \cite{MegowUV06,chouQS06} and robust scheduling \cite{DemeulemeesterH10,kouvelisY97,KasperskiZ2019}. These models differ in the way that information is made available to an algorithm and in the performance metrics. We do not aim at a comprehensive review and, instead, refer the reader to the pointers in the literature. Regarding the access to information, our scheduling with testing model is closest to online and robust optimization, where information (e.g. about job processing times) is revealed incrementally and adversarially. However, in stochastic scheduling, a job's processing time can be explored by partially executing a job and observing its processing time. Clearly, there is much less flexibility in exploiting the learned information, than when testing, as the job might have finished before any action can be taken. 

A new learning-based scheduling model was proposed by Marban, Rutten and Vredeveld~\cite{MarbanRV11}. They introduce a Bayesian model, in which jobs belong to classes and the stochastic processing times of jobs in the same class are drawn from the same unknown distribution. This distribution can be learnt by executing jobs. Besides this Bayesian model and the aforementioned stochastic model of scheduling with testing by Levi et al.~\cite{LeviMS19}, none of the traditional uncertainty models for scheduling takes the opportunity of actively exploring unknown information at some cost into account explicitly.

Finally, it appears noteworthy that the concept of taking exploration cost into account when dealing with uncertainty gains momentum also in other fields such as, e.g., random graphs. Recently some research papers ask the question of how many edges must be queried in a given random graph, in order to verify that some graph property is satisfied. There are results on finding Hamiltonian cycles~\cite{FerberKSV16} and finding paths~\cite{FerberKSV17} in random graphs with few queries.

\section{Preliminaries}\label{sec:preliminaries}
\paragraph{Problem definition.}
The problem of scheduling with testing is defined
as follows.
We are given $n$ jobs to be scheduled on a single machine.
Each job $j$ has an upper limit on the processing time\footnote{We define the problem with rational numbers for the ease of representing them in a computer, but all our results and proofs also hold for real numbers.}~$\p_j\in \mathbbm{Q}^+$. It can either be
executed untested (taking time $\p_j$), or be tested (taking
time~$1$) and then executed at an arbitrary later time
(taking time $p_j\in \mathbbm{Q}^+$, where $0\le p_j\le \p_j$). Initially only
$\p_j$ is known for each job, and $p_j$ is only revealed after
$j$ is tested. The machine can either test or execute a
job at any time. The completion time of job $j$ is denoted by
$C_j$. Unless noted otherwise, we consider the objective of
minimizing the sum of completion times $\sum_j C_j$.

\paragraph{The optimal offline solution.} If the processing times $p_j$ that jobs have after testing are known,
an optimal schedule is easy to determine: Testing and executing
job $j$ takes time $1+p_j$, so it is beneficial to test the job only
if $1+p_j<\bar p_j$. Since the \algoname{SPT} rule is optimal for minimizing the sum of completion times, in the optimal schedule, jobs are ordered by non-decreasing $\min \{1 + p_j, \bar p_j \}$. In this order, the jobs
with $1+p_j<\bar p_j$ are tested and executed while jobs with $1+p_j\ge\bar p_j$
are executed untested. (For jobs with $1+p_j=\bar p_j$ it does not matter
how they are processed.)

\paragraph{Performance analysis.}
We compare the performance of an algorithm $\ALG$ to the optimal
schedule using competitive analysis~\cite{BorodinEY98}. We
denote by $\ALG(I)$ the objective value (cost) of the schedule
produced by $\ALG$ for an instance~$I$, and by $\OPT(I)$ the
optimal cost.
An algorithm $\ALG$ is
\emph{$\mratio$-competitive} or \emph{has competitive
ratio at most $\mratio$} if $\ALG(I)/\OPT(I)\le \mratio$ for all instances
$I$ of the problem. For randomized algorithms, $\ALG(I)$ is
replaced by $E[\ALG(I)]$ in this definition. If the instance $I$ is clear from the context and no confusion can arise, we also write $\ALG$ for $\ALG(I)$ and $\OPT$ for $\OPT(I)$.


When we analyze an algorithm or the optimal schedule, we will typically
first argue that the schedule has a certain structure with
different blocks of tests or job completions. Once we have
established that structure, the cost of the
schedule can be calculated by adding the cost
for each block taken in isolation, plus the effect of the block on
the completion times of later jobs. For example, assume that we have $n$ jobs
with upper limit $\p$, that $\alpha n$ of these jobs are \emph{short}, with
processing time~$0$, and $(1-\alpha)n$ jobs are \emph{long}, with processing time~$\p$.
If an algorithm (in the worst case) first tests the $(1-\alpha)n$ long
jobs, then tests the $\alpha n$ short jobs and executes them immediately,
and finally executes the $(1-\alpha)n$ long jobs that were tested earlier
(see also Figure~\ref{fig:cost}),
the total cost of the schedule can be calculated as
$$
(1-\alpha)n^2 + \frac{\alpha n(\alpha n+1)}{2} +
\alpha n (1-\alpha)n + \frac{(1-\alpha)n((1-\alpha)n+1)}{2}\p
$$
where $(1-\alpha)n^2$ is the total delay that the $(1-\alpha )n$ tests
of long jobs add to the completion times of all $n$ jobs,
$\frac{\alpha n(\alpha n+1)}{2}$ is the sum of completion times
of a block with $\alpha n$ short jobs that are tested and executed,
$\alpha n (1-\alpha)n$ is the total delay that the block of short jobs with
total length $\alpha n$ adds to the completion times of the
$(1-\alpha)n$ jobs that come after it, and
$\frac{(1-\alpha)n((1-\alpha)n+1)}{2}\p$ is the sum of completion
times for a block with $(1-\alpha)n$ job executions with processing
time $\p$ per job.
\begin{figure}[tbh]
	\begin{center}
    \includegraphics[width=10cm]{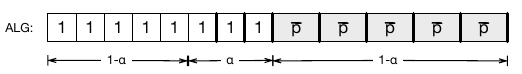}
    \caption{Typical schedule produced by an algorithm. White boxes represent tests 
    and grey boxes represent job executions.  The completion time of a job is depicted by a thick bar. Test and execution of a job might be separated. A job of length $0$ completes immediately after its test.}
    \label{fig:cost}
    \end{center}
\end{figure}
\paragraph{Lower limits.}
A natural generalization of the problem would be to allow each
job $j$ to have, in addition to its upper limit $\p_j$, also
a lower limit $\ell_j$, such that the processing time after
testing satisfies $\ell_j\le p_j\le \p_j$. We observe that
the presence of lower limits has no effect on the optimal
schedule, and can only help an algorithm. As we are interested
in worst-case analysis, we assume in the remainder of the paper
that every job has a lower limit of~$0$. Any algorithm that
is $\rho$-competitive in this case is also $\rho$-competitive
in the case with arbitrary lower limits (the algorithm can simply
ignore the lower limits).

\paragraph{Preemption.} The ability to preempt the execution of a test or of a (tested or untested) job would be of no benefit to the algorithm or the adversary
as no new information is obtained during the execution. Therefore, we only consider algorithms and schedules that do not preempt tests and that do not preempt job executions. However, as noted above, the execution of a tested job can be scheduled any time after the completion of the test.

\paragraph{Jobs with small $\bar p_j$.}
We will consider several algorithms and prove competitiveness for them. We observe that any $\rho$-competitive algorithm may process jobs with $\bar p_j < \rho$ without testing in order of increasing $\bar p_j$ at the beginning of its schedule.
\begin{lemma}\label{lem:all_jobs_greater_c}
   Without loss of generality any algorithm $\ALG$ (deterministic or randomized) claiming competitive ratio $\rho$ starts by scheduling untested all jobs~$j$ with $\ub_j<\rho$ in increasing order of $\ub_j$. Moreover, worst case instances for $\ALG$ consist solely of jobs~$j$ with $\ub_j \geq \rho$.
\end{lemma}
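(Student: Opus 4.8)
The plan is an exchange argument at the level of algorithms, combined with a lower bound on $\OPT$ that peels off the contribution of the jobs with small upper limit. Call a job \emph{small} if $\ub_j<\rho$ and \emph{large} otherwise; write $S,L$ for these sets, $m=|L|$, $P_S=\sum_{j\in S}\ub_j$, and let $I'$ be the sub-instance on $L$ alone. Let $c_S$ be the sum of completion times obtained by running the small jobs untested in non-decreasing order of $\ub_j$ in isolation. Given $\ALG$, define $\ALG'$ to first run the small jobs untested in non-decreasing order of $\ub_j$ and then run $\ALG$ on $I'$, time-shifted behind that block; since the block has total length $P_S$ and delays each large job by $P_S$,
\begin{equation*}
\E{\ALG'(I)} \;=\; c_S + mP_S + \E{\ALG(I')} \;\le\; c_S + mP_S + \rho\,\OPT(I')
\end{equation*}
(for a deterministic $\ALG$ the expectations are vacuous).

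Both assertions of the lemma now reduce to the single inequality
\begin{equation}\label{eq:peel}
\OPT(I)\;\ge\;\OPT(I') + \frac{c_S + mP_S}{\rho}.
\end{equation}
Indeed, \eqref{eq:peel} gives $\E{\ALG'(I)}\le\rho\,\OPT(I)$ directly, so $\ALG'$ is $\rho$-competitive and of the stated form; and for the ``moreover'' part \eqref{eq:peel} shows that if $\ALG$ is already of this reduced form and $\ALG(I')\le\rho\,\OPT(I')$ for every instance $I'$ with only large jobs, then $\ALG(I)=c_S+mP_S+\ALG(I')\le\rho\bigl(\OPT(I')+(c_S+mP_S)/\rho\bigr)\le\rho\,\OPT(I)$ for every $I$; hence in the analysis it suffices to consider instances with all $\ub_j\ge\rho$.

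To prove \eqref{eq:peel} I would use the structure of an optimal schedule recalled in the introduction: $\OPT$ runs the jobs in non-decreasing order of $q_j:=\min\{1+p_j,\ub_j\}$, executing a tested job immediately after its test, so its cost equals $\sum_{a\le b}q_{(a)}$ over position pairs in that order. Classifying each pair according to whether its two members are small or large, and using that the relative $q$-order of the large jobs and their individual test decisions are the same whether or not $S$ is present, one gets
\begin{equation*}
\OPT(I) \;=\; \OPT(I') \;+\; o_S \;+\; \sum_{s\in S}\bigl(\,(m-b_s)\,q_s \;+\; W_s\,\bigr),
\end{equation*}
where $o_S$ is the optimal in-isolation cost of the small jobs, $b_s$ is the number of large jobs preceding $s$ in $\OPT$'s order, and $W_s$ the total $q$-work of those large jobs. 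It then remains to show (i) $o_S\ge c_S/\rho$, and (ii) $(m-b_s)q_s+W_s\ge m\,\ub_s/\rho$ for each $s\in S$; summing (ii) over $S$ and adding (i) yields \eqref{eq:peel}. For (i): every small job satisfies $\ub_j\le\rho\,q_j$ (clear when $\ub_j\le1$, since then $q_j=\ub_j$; and when $\ub_j>1$ because then $q_j\ge1$ and $\ub_j<\rho$, using $\rho\ge1$), so running the small jobs untested in $q$-order costs at most $\rho\,o_S$, while running them untested in $\ub$-order is at least as cheap, because processing untested jobs in non-decreasing order of their length minimizes the sum of completion times. For (ii): every large job has $q\ge1$, so $W_s\ge b_s$; if $q_s\ge1$ then $(m-b_s)q_s+W_s\ge(m-b_s)+b_s=m\ge m\,\ub_s/\rho$ because $\ub_s<\rho$, and if $q_s<1$ then $q_s=\ub_s<1$ and $(m-b_s)\ub_s+W_s\ge(m-b_s)\ub_s+b_s\ub_s=m\,\ub_s\ge m\,\ub_s/\rho$.

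The step that needs care is the mixed-pair estimate (ii): $\OPT$ may test a small job, so $q_s$ can be far below $\ub_s$, and it may schedule several large jobs ahead of a small job, so that job delays fewer than $m$ large jobs. The point is that the two losses compensate — a large job placed before $s$ contributes at least one unit of work to $s$'s own completion time, and a small job with $q_s\ge1$ delays each of the $m$ large jobs by at least a unit while having $\ub_s<\rho$ anyway. Setting up the clean decomposition of $\OPT(I)$ so that the residual is charged job-by-job to the small jobs with nothing counted twice is the main obstacle; the remaining estimates are routine.
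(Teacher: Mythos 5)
Your proof is correct, and it takes a genuinely different route from the paper's. The paper first argues that in a worst-case instance every small job has $p_j = 0$, after which both $\OPT$ and the transformed algorithm $\ALG'$ schedule the entire block of small jobs first; this yields a clean additive split of both $\ALG'(I)$ and $\OPT(I)$ into a ``small part,'' a ``delay part,'' and $\ALG'(I')$ resp.\ $\OPT(I')$, and the conclusion follows by bounding the small and delay parts termwise (for both the sum of completion times and the length of the small block) by a factor $\rho$. You instead keep the processing times of small jobs general and prove the single unconditional inequality $\OPT(I) \ge \OPT(I') + (c_S + m P_S)/\rho$ from the pairwise expansion $\OPT(I)=\sum_{a\le b}q_{(a)}$, classifying pairs as small--small, large--large, or mixed and charging the residual job by job. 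Your approach buys a proof that does not require establishing the worst-case structure $p_j=0$ for small jobs (a step the paper asserts but does not argue in detail) and does not rely on $\OPT$ scheduling all of $J$ before all of $I'$; the price is the somewhat more intricate mixed-pair estimate (ii), which you handle correctly by noting that every large job has $q_\ell\ge1$ (using $\ub_\ell\ge\rho\ge1$) and splitting on whether $q_s\ge1$ or $q_s=\ub_s<1$. The peeling inequality \eqref{eq:peel} then delivers both assertions of the lemma in one stroke, exactly as you describe.
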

%
\begin{proof}
    We transform $\ALG$ into an algorithm $\ALG'$ which obeys the claimed behavior and show that its ratio does not exceed $\rho$.
    Consider an arbitrary instance $I$.

    Let $J$ be the sequence of jobs $j$ with $\ub_j < \rho$ ordered in increasing $\ub_j$ order.  We divide $J$ into $J_0, J_1$, where $J_0$ consists of the jobs $j$ with $0\leq \ub_j<1$ and $J_1$ consists of the jobs $j$ with $1\leq \ub_j < \rho$.
    $\ALG'$ starts by executing the job sequence $J$ untested, and then schedules all remaining jobs as $\ALG$, following the same decisions to test and the order of tests and executions.
    In a worst-case instance all the jobs in $J$ have processing time $0$.  By optimality of the \algoname{SPT} policy $\OPT$ schedules first $J_0$ untested as well, and then schedules $J_1$ tested spending time $1$ on each job.
    The ratio of the costs of these parts is 
    \[
        \frac{\ALG'(J)}{\OPT(J)} < \rho
    \]
    where the inequality follows from $\ub_j/\min\{1,\ub_j\} < \rho$ for all $j\in J$.  Let $\textrm{len}$ denote the length of a schedule. Then by the same argument we have
    \[
        \frac{\textrm{len}(\ALG'(J))}{\textrm{len}(\OPT(J))} < \rho.
    \]

    Let $I'$ be the instance $I$ without the jobs in $J$.  Let $k$ be the number of jobs in $I'$. 
    Since $I'$ contains only jobs with large upper limit, we have $\ALG(I')=\ALG'(I')$.
    We have
    \begin{align*}
        \ALG'(I) &= \ALG'(J) + k\cdot\textrm{len}(\ALG'(J)) + \ALG'(I')
    \\
        \OPT(I) &= \OPT(J) + k\cdot\textrm{len}(\OPT(J)) + \OPT(I').
    \end{align*}
    From these (in)equalities we conclude
    \begin{align*}
        \frac{\ALG(I)}{\OPT(I)} \leq \rho &\Rightarrow
        \frac{\ALG'(I)}{\OPT(I)} \leq \rho
        \\
        \frac{\ALG(I)}{\OPT(I)} \geq \rho &\Rightarrow
        \frac{\ALG'(I)}{\OPT(I)} \le \frac{\ALG(I')}{\OPT(I')}
    \end{align*}
    which means that if $\ALG$ is $\rho$ competitive then so is $\ALG'$ and that there are worst-case instances for $\ALG$ only with jobs having upper limit at least $\rho$.
    \qed
\end{proof}

\paragraph{Increasing or decreasing \ALG and \OPT.}

Throughout the paper we sometimes consider worst-case instances consisting of only a few different job types.  In order to do so we need to change carefully the parameters of a given instance, in such a way that the competitive ratio does not decrease and the number of distinct job types decreases.  The following generic proposition allows us to do so in some cases.

\begin{proposition}\label{prop:change-indep-deltap}
    Fix some algorithm $\ALG$ and consider a family of instances described by some parameter $x\in[\ell,u]$, which could represent $p_j$ or $\ub_j$ for some job $j$ or for some set of jobs. Suppose that both $\OPT$  and $\ALG$ are linear in $x$ for the range $[\ell,u]$.  Then the ratio $\ALG/\OPT$ is maximized, among all choices of $x\in [\ell,u]$, for at least one of the two choices $x=\ell$ or $x=u$. Moreover, if $\OPT$ and $\ALG$ are increasing in $x$ with the same slope, then this holds for $x=\ell$.
\end{proposition}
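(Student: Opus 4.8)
The plan is to exploit that a quotient of two affine functions is a Möbius transformation of the parameter, hence monotone on any interval that avoids its pole. Concretely, using the linearity assumption I would write $\ALG(x) = ax+b$ and $\OPT(x) = cx+d$ for suitable constants $a,b,c,d$. A ratio of costs is only meaningful when $\OPT(x) > 0$, so I may assume $cx+d > 0$ throughout $[\ell,u]$; in particular the function $f(x) := \ALG(x)/\OPT(x)$ is well defined and differentiable on the whole interval, with no pole inside it.

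The key step is then a one-line derivative computation:
\[
 f'(x) \;=\; \frac{a(cx+d) - c(ax+b)}{(cx+d)^2} \;=\; \frac{ad-bc}{(cx+d)^2}.
\]
The numerator $ad-bc$ is a constant, so the sign of $f'$ is the same at every point of $[\ell,u]$, i.e.\ $f$ is monotone on $[\ell,u]$. If $ad-bc \ge 0$, then $f$ is nondecreasing and $f(u) \ge f(x)$ for all $x \in [\ell,u]$; if $ad-bc \le 0$, then $f$ is nonincreasing and $f(\ell) \ge f(x)$ for all $x \in [\ell,u]$. In both cases the ratio at (at least) one of the two endpoints is no smaller than the ratio at any interior point, which is the first assertion.

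For the second assertion, assume $a=c>0$ (equal slopes, both increasing). Since no algorithm can beat the optimum, $\ALG(x) \ge \OPT(x)$ for every instance in the family, so $\delta := \ALG(x) - \OPT(x) = b-d \ge 0$ (the difference is constant because the slopes agree). Then $f(x) = 1 + \delta/\OPT(x)$, and since $\OPT$ is positive and increasing on $[\ell,u]$ the term $\delta/\OPT(x)$ is nonincreasing; hence $f$ is nonincreasing and $f(\ell) \ge f(x)$ for all $x\in[\ell,u]$, as claimed. (This is consistent with the derivative formula, since here $ad-bc = a(d-b) \le 0$.)

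The argument is entirely elementary, so there is no genuine difficulty; the only point that needs a word of care is ensuring that the denominator $\OPT(x)$ stays strictly positive on all of $[\ell,u]$, so that $f$ really is a pole-free differentiable function there and its extrema are attained at the endpoints. This holds automatically whenever the competitive ratio is defined for the family of instances under consideration.
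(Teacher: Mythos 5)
Your proof is correct and takes essentially the same route as the paper: write both costs as affine functions of $x$, compute the derivative of the quotient, observe that its sign is constant, and for the second claim use $\ALG\ge\OPT$ together with equal slopes to rewrite the ratio as $1+\delta/\OPT(x)$ with $\delta\ge0$ constant. The paper phrases the second part slightly differently (shrinking both $\ALG$ and $\OPT$ by the same positive amount increases the ratio), but it is the same observation.
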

\begin{proof}
    The proof follows from the fact that an expression of the form $\ALG/\OPT=(a+b x)/(a'+b' x)$ is monotone in $x$.  Indeed its derivative is
    \[
            \frac{a' b - a b'}{(a' + b' x)^2}
    \]
    whose sign does not depend on $x$.
    The last statement follows from the fact that if $\ALG>\OPT$ and $0<\delta\leq \OPT$ then $(\ALG-\delta)/(\OPT-\delta) > \ALG/\OPT$.
    \qed
\end{proof}
We can make successive use of this proposition in order to show useful properties on worst-case instances.
\begin{lemma}\label{lem:linear-pj}
    Suppose that there is an interval $[\ell',u']$ such that \OPT schedules all jobs~$j$ with $p_j\in[\ell',u']$ either all tested or all untested, independently of the actual processing time in $[\ell',u']$.  Suppose that this holds also for $\ALG$.
Moreover, suppose that the schedules of $\OPT$ and $\ALG$ do not change
(in the sense that the order of all tests and job executions remains the same)
when changing the processing times in $[\ell',u']$ as long as the
relative ordering of job processing times does not change.
    Then there is a worst-case instance for $\ALG$ where every job $j$ with $p_j\in[\ell',u']$ satisfies $p_j\in\{\ell',u'\}$.
\end{lemma}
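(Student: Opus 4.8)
The plan is to reduce the general case to the two-point case by moving the processing times of the jobs in $[\ell',u']$ one at a time (or in groups that behave identically) toward the endpoints, using Proposition~\ref{prop:change-indep-deltap}. First I would fix a worst-case instance $I$ and look at the set $S$ of jobs with $p_j\in[\ell',u']$. By hypothesis, in both the $\ALG$-schedule and the $\OPT$-schedule, all jobs of $S$ are treated the same way (all tested or all untested in each schedule), and the order of tests and executions is invariant as long as the relative order of the $p_j$ values within $S$ does not change. The idea is to pick the job $j^\star\in S$ whose processing time is currently \emph{not} at an endpoint and that is, say, smallest among such jobs, and treat $x=p_{j^\star}$ as the parameter; its admissible range is an interval $[\ell,u]\subseteq[\ell',u']$ chosen small enough that the relative ordering inside $S$ is preserved (so $u$ is the next larger $p_i$ value, or $u'$ if none, and similarly for $\ell$).

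Next I would argue that on this range both $\OPT$ and $\ALG$ are affine (linear) in $x$. This is where the structural hypotheses are used: since neither schedule changes its combinatorial structure as $x$ varies in $[\ell,u]$, the completion time of every job is a fixed constant plus possibly one copy of $p_{j^\star}$ (it contributes $p_{j^\star}$ to $C_i$ exactly for those jobs $i$, including $j^\star$ itself, that are executed after $j^\star$'s execution block, if $j^\star$ is executed untested, or after its execution if tested — in either case the dependence is linear with a fixed integer coefficient). Summing over jobs, $\OPT$ and $\ALG$ are each of the form $a+bx$ with $a,b$ constant on $[\ell,u]$. Then Proposition~\ref{prop:change-indep-deltap} applies: the ratio $\ALG/\OPT$ is monotone on $[\ell,u]$, so moving $x$ to $\ell$ or to $u$ does not decrease the ratio. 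Doing so either pushes $p_{j^\star}$ to an endpoint of $[\ell',u']$ or makes it coincide with another $p_i$ value in $S$, in which case the two jobs can henceforth be moved together as a single group (they play symmetric roles in both schedules). Either way a suitable potential — for instance the number of distinct values in $\{p_j : j\in S\}\setminus\{\ell',u'\}$, or more carefully the number of jobs with $p_j\notin\{\ell',u'\}$ — strictly decreases, so after finitely many steps all jobs of $S$ have $p_j\in\{\ell',u'\}$, and the ratio has not decreased, yielding the claimed worst-case instance.

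One technical point I would be careful about: when several jobs of $S$ share the same value $x$ and we vary that common value, the relative ordering among them is unchanged (they stay tied), and by the last hypothesis the schedules' structure is stable, so the group can legitimately be treated as a single parameter; I would state this grouping explicitly so that the monotonicity argument applies verbatim. I would also note that if at some step moving toward $\ell$ and moving toward $u$ both keep the ratio equal (the ratio is constant), we may pick either endpoint, so the process is still well-defined and terminates. Finally, I would remark that the instance obtained is still a valid instance of the problem (processing times remain in $[0,\p_j]$, since $[\ell',u']$ was a subinterval of the feasible range), and it remains a worst-case instance because its ratio is at least that of $I$.

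The main obstacle, and the step deserving the most care in the write-up, is justifying the linearity of $\OPT$ and $\ALG$ in $x$ rigorously: one must verify that under the stated invariance of the schedule structure, the coefficient of $p_{j^\star}$ (or of the common group value) in $\sum_j C_j$ is genuinely a fixed integer on the whole subinterval, i.e.\ that no job's completion time picks up or drops a copy of $x$ as $x$ moves — this is exactly what the hypothesis "the order of all tests and job executions remains the same" buys us, but it needs to be spelled out, including the subtlety that for a tested job $j^\star$ the quantity $p_{j^\star}$ appears in $C_i$ for every $i$ whose execution is placed after $j^\star$'s execution slot, which is an invariant set under the hypothesis. Everything after that is a routine application of Proposition~\ref{prop:change-indep-deltap} plus a finiteness argument.
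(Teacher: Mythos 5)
Your proposal is correct and follows essentially the same argument as the paper: fix a worst-case instance, group the jobs of $S$ that share a common processing time $x$ strictly inside $(\ell',u')$, observe (using the structural hypotheses) that both $\ALG$ and $\OPT$ are linear in $x$ on the sub-interval bounded by the nearest other processing-time values or by $\ell',u'$, apply Proposition~\ref{prop:change-indep-deltap} to push $x$ to one of these boundary values without decreasing the ratio, and repeat, noting that the number of distinct processing times strictly inside $(\ell',u')$ decreases at each step. The only small caveat: of your two candidate potentials, use the count of \emph{distinct} values in $(\ell',u')$ rather than the count of jobs with $p_j\notin\{\ell',u'\}$, since the latter need not strictly decrease when two interior values merge; this is the potential the paper uses.
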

\begin{proof}
    Fix some worst-case instance for the algorithm $\ALG$.
    Let $S$ be the set of jobs $j$ with $p_j = x$ for some $x$ with $\ell'< x < u'$.  
    Let $\ell,u$ be the values  $\ell=\max(\{\ell'\}\cup\{p_i:p_i<x\})$ and
    $u=\min(\{u'\}\cup\{p_i:p_i>x\})$.
    Informally $\ell$ is the largest processing time strictly smaller than $x$ or $\ell'$ if $x$ is already the smallest processing time or if this would make $\ell$ smaller than $\ell'$. 
    Also $u$ is the smallest processing time strictly larger than $x$ or $u'$ if $x$ is already the largest processing time or if this would exceed $u'$. 
    Since the schedules are preserved when changing the processing times of $S$, both costs $\ALG$ and $\OPT$ are linear in $x$ within $[\ell,u]$.
    Now we can use Proposition~\ref{prop:change-indep-deltap} to show that there is a worst-case instance where all jobs in $S$ have processing time either $\ell$ or $u$.  In both cases we have reduced the number of distinct processing times strictly being between $\ell'$ and~$u'$.  By repeating this argument sufficiently often we obtain the claimed statement.
    \qed
\end{proof}

\section{Deterministic Algorithms}\label{sec:det}

\subsection{Algorithm \BOUND} 
\label{sec:bound2}%

We show a competitive ratio of $2$ for a natural algorithm that uses a threshold
to decide whether to test a job or execute it untested.

\begin{algo}[\BOUND]
 First jobs with $\p_j< 2$ are scheduled in order of non-decreasing upper limits without testing. Then all remaining jobs are
tested. If the revealed processing time of job~$j$ is $p_j \leq 2$ (short jobs), then the job
is executed immediately after its test. After all pending jobs (long jobs) have been
tested, they are scheduled in order of increasing processing time $p_j$.
\end{algo}

By Lemma~\ref{lem:all_jobs_greater_c} we may restrict our competitive analysis w.l.o.g.\ to instances  with $\p_j \geq 2$. Note, that on such instances \BOUND tests all jobs. From a simple interchange argument it follows that the structure of the algorithm's solution in a worst-case instance is as follows.
\smallskip
\begin{compactitem}
\item Test phase: The algorithm tests all jobs that have $p_j>2$, and defers them.
\item Short jobs phase: The algorithm tests short jobs ($p_j\le 2$) and executes each
  of them right away. The jobs are tested in order of non-increasing processing time.
\item Long jobs phase: The algorithm executes all deferred long jobs in order of
  non-decreasing processing times.
\end{compactitem}
\medskip

An optimal solution will not test jobs with $p_j+1 \geq \p_j$.
It sorts jobs in non-decreasing order of values $\min\{1+p_j, \p_j\}$.

First, we analyze and simplify worst-case instances.



\begin{lemma}\label{lem:wc-short-jobs}
	There is a worst-case instance for \BOUND in which all short jobs with $p_j\leq 2$ have processing time either $0$ or $2$. 
\end{lemma}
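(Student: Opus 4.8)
The plan is to use the perturbation machinery of Proposition~\ref{prop:change-indep-deltap} (in the spirit of Lemma~\ref{lem:linear-pj}) applied to the short jobs, with target interval $[0,2]$. By Lemma~\ref{lem:all_jobs_greater_c} we may assume $\p_j\ge 2$ for all jobs, so \BOUND tests every job: its schedule consists of the test phase, then the short-jobs phase in which short jobs ($p_j\le 2$) are tested and immediately executed in non-increasing order of $p_j$, then the long-jobs phase; meanwhile \OPT tests job $j$ exactly when $1+p_j<\p_j$ and processes jobs by non-decreasing $\min\{1+p_j,\p_j\}$. Fix a worst-case instance; while some short processing time lies strictly inside $(0,2)$, pick such a value $x$, let $S$ be the set of short jobs with $p_j=x$, and vary all of $S$ together (moving $S$ in lock-step is what keeps \BOUND's and \OPT's orderings among short jobs stable) over $x\in[\ell,u]$, where $\ell=\max(\{0\}\cup\{p_i:i\text{ short},p_i<x\})$ and $u=\min(\{2\}\cup\{p_i:i\text{ short},p_i>x\})$. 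I would then argue that $x$ can be replaced by $\ell$ or $u$ without decreasing $\BOUND/\OPT$; this puts $S$ at $0$ or $2$, or merges it with another short processing time, in either case strictly decreasing the number of distinct short processing times in $(0,2)$, so iterating proves the claim.

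The crux is the shape of $\BOUND/\OPT$ as a function of $x$ on $[\ell,u]$. Here \BOUND's cost is exactly \emph{linear} in $x$: \BOUND tests all jobs regardless, each job of $S$ forms a test-plus-execution block of length $1+x$, and the position of the block $S$ among the other jobs (long jobs follow; the remaining short jobs split by $p$-value around $x$, with no split point inside $(\ell,u)$) is frozen by the choice of $\ell,u$. The optimum's cost, in contrast, is only \emph{piecewise} linear in $x$: as $x$ grows, a short job of $S$ may reorder past a job that \OPT runs untested (when $1+x$ crosses that job's upper limit), or \OPT may flip a short job of $S$ from tested to untested (when $1+x$ reaches its own upper limit). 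At each such internal breakpoint the cost \OPT is continuous, and—this is the key point—\OPT's slope in $x$ can only \emph{decrease} there, since every such event either removes a job from those delayed by an $S$-block or removes an $S$-block from the set of $x$-dependent blocks. As \BOUND's slope stays constant while \OPT's slope only drops, the derivative of $\BOUND/\OPT$ jumps \emph{upward} at each internal breakpoint, so $\BOUND/\OPT$ has no interior strict local maximum on $[\ell,u]$ and is maximized at $x=\ell$ or $x=u$ — the conclusion we need, which is Proposition~\ref{prop:change-indep-deltap} with its linear denominator replaced by a concave piecewise-linear one.

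I expect the main obstacle to be making the \OPT-side bookkeeping rigorous: enumerating the internal breakpoints, checking continuity of \OPT and the slope-drop at each, and—equally important—verifying that \BOUND's structure (hence its constant slope) really is untouched by breakpoints that concern \OPT only. It helps to first normalize so that \OPT tests every short job: if \OPT runs a short job $j$ untested then $\p_j\le 1+p_j$, and lowering $\p_j$ to $1+p_j$ leaves \BOUND unchanged (for short jobs it uses only $\p_j\ge 2$) and cannot increase \OPT, so worst-case-ness is preserved, and afterwards such a job induces a breakpoint only at a genuine short processing time. The residual internal breakpoints then come solely from long jobs that \OPT runs untested, and those are exactly the ones the slope argument disposes of. (A cleaner-looking route would be to invoke Lemma~\ref{lem:linear-pj} with $[\ell',u']=[0,2]$ directly, but its hypothesis that \OPT's schedule is invariant under order-preserving changes of the processing times fails here, which is precisely why the concavity argument is required.)
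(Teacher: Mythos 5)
Your proof is correct, and it takes a genuinely different route from the paper's. The paper fixes two specific short jobs -- the first one below its target value and the last one above $0$ (in \BOUND's test order) -- and shifts mass $\Delta$ between them, with a separate case when these two coincide, and a final step bumping the remaining \OPT-untested short jobs to~$2$; crucially it never touches upper limits. You instead move a whole group $S$ of equal-$p_j$ short jobs in lock-step and argue via concavity of \OPT. The cleanest way to see the concavity you want is the abstract one: for any fixed schedule (order plus test/no-test decisions) the cost is an affine function of $x$, and $\OPT(x)$ is the pointwise minimum of that finite family of affine functions, hence concave piecewise linear; $\BOUND(x)$ is affine since its schedule structure is frozen on $[\ell,u]$; and the ratio of an affine function over a positive concave piecewise-linear one has no interior local maximum (your slope-jump argument). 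This abstract view makes the breakpoint enumeration and the normalization $\p_j\leftarrow 1+p_j$ unnecessary -- they are valid but are only needed to verify concavity "by hand," which the min-of-affine-functions observation gives for free. One point worth flagging: the normalization does modify upper limits, and the paper deliberately avoids that (it notes the upper-limit-preserving proof is reused when analyzing \BOUND under uniform $\p$); since the normalization is dispensable under the abstract concavity argument, you can recover that property too, but as written your proof would need that small adjustment before it could substitute for the paper's in the later uniform-$\p$ section.
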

We give a proof without modifying upper limits, which is not necessary in this section but will come handy later when we analyze \BOUND for arbitrary uniform upper limits.
\begin{proof}
	Consider short jobs that are tested by both, the optimum and \BOUND, i.e., short jobs with $p_j<\p_j-1$. We argue that we can either decrease the processing time of a short job $j$ to $0$ or increase it to $\min\{2, \p_j-1\}$ without decreasing the worst-case ratio. With respect to the order in which \BOUND executes the jobs, let $\ell$ be the first short job with $p_{\ell} < \min\{2, \p_{\ell} - 1\}$ and let $i$ be the last short job with $p_i > 0$.

	Suppose $i\neq \ell$. Let $\Delta=\min\{p_i, \min\{2, \p_{\ell} - 1\}  - p_{\ell}\}$. We decrease $p_i$ by $\Delta$ and at the same time increase $p_{\ell}$ by $\Delta$. The value $\Delta$ is chosen in such a way that either $p_i$ will become $0$ or $p_{\ell}$ will be $\min\{2, \p_{\ell} - 1\}$, as desired. The schedule produced by the algorithm will be the same except that jobs $\ell,\ldots,i-1$ complete $\Delta$ units later. In the optimal schedule $\ell$ and $i$ are scheduled in opposite order. Suppose we keep the schedule fixed when changing the processing times of jobs $i$ and $\ell$. Then $i$'s completion time as well as those of jobs between $i$ and $\ell$ decreases. In an optimal schedule jobs might be re-ordered, but this only improves the total objective further. Hence, the total ratio of objective values does not decrease.

	Now, assume $i = \ell$, i.e., there is exactly one short job with processing time $p_i$ strictly between~$0$ and $\min\{2,\p_i-1\}$. We argue that either increasing or decreasing~$p_i$ to $\min\{2,\p_i-1\}$ or $0$ will not decrease the worst-case ratio. Such a change~$\Delta$ does not change the order of jobs in the algorithm's solution and thus the change in the objective is~$\Delta$ times the number of jobs completing after $i$. In an optimum solution, there are untested short or long jobs which are scheduled between short tested jobs and their relative order with $i$ may change when $i$ is in-/decreased by $\Delta$. However, let us consider a possibly not optimal schedule that simply does not adjust the order after changing $i$. Then the change in the objective is linear in $\Delta$ in the above-given range, as it is for the algorithm, and thus, by Proposition~\ref{prop:change-indep-deltap} either increasing or decreasing $p_i$ by $\Delta$ does not decrease the ratio of objective values. Now, the truly optimal objective value is not larger and thus, the true worst-case ratio is not smaller.

	Now, we may assume that all short jobs remaining with processing times different from $0$ and $2$ are untested in the optimum solution because their processing time is at least $\p_j-1$. Again, the optimum does not test those jobs, and hence, increasing the processing time to $2$ has no impact on the optimal schedule, while our algorithm's cost only increases. Thus, the worst-case ratio increases, which concludes the proof.
	\qed
	\end{proof}

\BOUND tests all jobs and makes scheduling decisions depending on job processing times~$p_j$ but independently of upper limits of jobs. Since all short jobs have $p_j\in \{0,2\}$, we can reduce all their upper limits to $\p_j=2$ without affecting the schedule, whereas it may only improve the optimal schedule. In particular we may assume now the following.

\begin{lemma}\label{prop:short}
	There is a worst-case instance in which all short jobs have $\p_j=2$ and execution times are $0$ or $2$.
\end{lemma}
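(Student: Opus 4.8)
The plan is to start from the worst-case instance furnished by Lemma~\ref{lem:wc-short-jobs}, in which every short job already has processing time in $\{0,2\}$, and then simply lower the upper limit of each short job to $\ub_j=2$, showing that this transformation does not decrease the ratio $\BOUND/\OPT$. So the only work is to control how both the algorithm's cost and the optimal cost react to this change of upper limits.

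For the algorithm side, I would first invoke Lemma~\ref{lem:all_jobs_greater_c} to restrict attention to instances with $\ub_j\ge 2$ for all jobs; on such instances \BOUND tests every job. The key point is that the three-phase structure of \BOUND's worst-case schedule (test phase, short-jobs phase ordered by non-increasing $p_j$, long-jobs phase ordered by non-decreasing $p_j$) is determined entirely by the revealed processing times $p_j$ and not at all by the upper limits, as long as those upper limits stay at least $2$. Reducing each short job's upper limit from its original value $\ub_j\ge 2$ down to exactly $2$ keeps all upper limits at least $2$ and keeps the instance valid since $p_j\le 2=\ub_j$. Hence \BOUND produces exactly the same schedule on the modified instance, and so $\BOUND$ is unchanged.

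For the optimal side, I would argue that lowering an upper limit can only decrease $\OPT$: any schedule that is feasible for the original instance remains feasible for the modified one, because a job executed untested now takes no longer than before and a job that is tested and then executed is unaffected. Therefore $\OPT$ does not increase under the transformation, and consequently $\BOUND/\OPT$ does not decrease, so the modified instance (with all short jobs having $\ub_j=2$ and $p_j\in\{0,2\}$) is again a worst-case instance.

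I do not expect a genuine obstacle here; it is a monotonicity/dominance argument rather than a second-order computation. The one place to be careful is the boundary case $\ub_j=2$: I must check that \BOUND still tests such jobs (it does, since its first phase only handles jobs with $\ub_j<2$) and that reducing to exactly $2$ still respects $p_j\le\ub_j$, which holds because the short jobs in the starting instance have $p_j\in\{0,2\}$. One could alternatively phrase the argument via the philosophy behind Proposition~\ref{prop:change-indep-deltap}, but since only $\OPT$ changes (monotonically) while $\BOUND$ is constant, the direct dominance argument is cleaner.
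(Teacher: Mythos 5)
Your proposal matches the paper's own argument: starting from Lemma~\ref{lem:wc-short-jobs}, reduce each short job's upper limit to $2$, observe that \BOUND's schedule depends only on the revealed $p_j$ (not on $\p_j$, given all $\p_j\ge 2$) and so is unchanged, while $\OPT$ can only decrease since untested executions become no longer than before. This is exactly the monotonicity/dominance reasoning the paper uses in the short remark preceding the lemma.
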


\begin{lemma}\label{lem:wc-long-jobs}
	There is a worst-case instance in which long jobs with $p_j>2$ have a uniform upper limit $\p$ and processing times $p_j=\p_j=2+\eps$ for infinitesimally small $\eps>0$.
\end{lemma}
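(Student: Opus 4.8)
The plan is to apply the same style of argument used in Lemma~\ref{lem:wc-short-jobs} and Proposition~\ref{prop:change-indep-deltap}, but now to the long jobs, reducing them first to a uniform upper limit and a uniform processing time, and then driving that common processing time down to $2+\eps$.

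First I would fix a worst-case instance for \BOUND that already has the structure established in Lemmas~\ref{lem:wc-short-jobs} and~\ref{prop:short}: all short jobs have $\p_j=2$ and $p_j\in\{0,2\}$, and the algorithm's schedule consists of the test phase, the short jobs phase, and the long jobs phase. Recall that on such instances \BOUND tests all jobs, and its scheduling decisions for long jobs depend only on the processing times $p_j$, not on the upper limits. Hence, for any long job with $p_j>2$, I can freely lower its upper limit to $\p_j=p_j$ (this is legal since $p_j\le\p_j$, and after the change we still have $\p_j>2$ so Lemma~\ref{lem:all_jobs_greater_c} still applies): this does not change \BOUND's schedule at all, whereas the optimal schedule can only improve (\OPT now has the option of executing the job untested, or its position in the $\min\{1+p_j,\p_j\}$ ordering can only move earlier). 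So the worst-case ratio does not decrease, and we may assume $p_j=\p_j$ for every long job.

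Next I would make the processing times of the long jobs uniform. Here I would invoke Lemma~\ref{lem:linear-pj} (or, more elementarily, repeated applications of Proposition~\ref{prop:change-indep-deltap}) applied to the long jobs. The key observations needed to satisfy the hypotheses of Lemma~\ref{lem:linear-pj} with $[\ell',u']=(2,\infty)$ (or any closed subinterval): \BOUND tests all long jobs and executes them in order of non-decreasing $p_j$ in the long jobs phase, so as long as the relative order of the long processing times is unchanged, \BOUND's schedule is combinatorially the same and its cost is linear (indeed affine) in each long processing time; and \OPT, restricted to jobs with $p_j>2$, keeps each such job tested (since $1+p_j<\p_j+$ would require $p_j<\p_j-1$, which holds here because $p_j=\p_j>2$ gives $1+p_j=1+\p_j>\p_j$\,— wait, that is the opposite) — more carefully, once $p_j=\p_j$, testing is never beneficial for \OPT, so \OPT executes every long job untested, in order of non-decreasing $\p_j=p_j$, and hence \OPT's cost is also affine in each long processing time with the order preserved. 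Since both \OPT and \BOUND are affine and the schedules are preserved under order-preserving perturbations, Lemma~\ref{lem:linear-pj} lets us push each long processing time to one of the two neighbouring values, and iterating collapses all long processing times to a single common value $\p>2$; equivalently all long jobs now have a uniform upper limit $\p$ with $p_j=\p_j=\p$.

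Finally I would argue that we may take $\p=2+\eps$. With all long jobs identical (processing time $\p$, upper limit $\p$), both \BOUND and \OPT have fixed combinatorial structure as $\p$ ranges over $(2,\infty)$: \BOUND executes the long block at the end, \OPT executes the long block untested in its SPT position, and both costs are affine in $\p$ over this range. By Proposition~\ref{prop:change-indep-deltap} the ratio is monotone in $\p$, so the worst case is attained at one of the endpoints; the endpoint $\p\to\infty$ makes the instance degenerate (the long jobs dominate everything, and one checks the ratio tends to $1$, since both \BOUND and \OPT then essentially just run the long block), whereas the endpoint $\p\to 2^+$, i.e.\ $\p=2+\eps$, keeps the ratio bounded away from $1$. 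Hence the worst case is at $\p=2+\eps$ for infinitesimally small $\eps>0$, as claimed. The main obstacle I anticipate is being careful at the boundary $p_j=2$ between "short" and "long": I need the perturbation intervals for the long jobs to stay strictly above $2$ so that \BOUND still classifies them as long and defers them, and I need to confirm that at $\p=2+\eps$ the long jobs are genuinely handled in the long jobs phase (not executed immediately), which is why the statement uses $2+\eps$ rather than exactly $2$; this is exactly the kind of edge case where Lemma~\ref{lem:linear-pj}'s hypothesis about order-preserving changes must be checked against \BOUND's threshold behaviour.
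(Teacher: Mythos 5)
Your overall plan mirrors the paper's: make each long job satisfy $p_j=\bar p_j$, then argue the processing times can be pushed down to $2+\eps$. Your step 1 is in fact a shortcut: you set $\bar p_j=p_j$ directly (OPT can only improve, \BOUND is unchanged), whereas the paper first sets $\bar p_j=1+p_j$ so that OPT is indifferent, and then raises $p_j$ to $\bar p_j$ so that \BOUND gets worse. Both routes lead to $p_j=\bar p_j$ and are valid.

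The gap is in your step 2. Lemma~\ref{lem:linear-pj} concludes that every job with $p_j\in[\ell',u']$ can be pushed to one of the \emph{two} endpoints $\{\ell',u'\}$; it does not assert that they collapse to a single common value. The iterative mechanism inside its proof pushes an interior value to one of its neighbours, and without more the process terminates at a bimodal set $\{\ell',u'\}$, not at a uniform $\bar p$. To actually get a single value, you need the observation that the paper uses instead: by Lemma~\ref{prop:short} the long jobs occupy a contiguous terminal block \emph{in the same order} in both the \BOUND schedule and the optimal schedule, so decreasing any long processing time decreases $\ALG$ and $\OPT$ by exactly the same amount. This is the ``same slope'' case of the last sentence of Proposition~\ref{prop:change-indep-deltap}, which tells you that pushing downwards never hurts, so you can slide every long job's processing time down to $2+\eps$ in one stroke. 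Your step 3 would then be unnecessary (and as written it is also a bit loose --- ``the worst case is attained at one of the endpoints'' is ill-defined when one endpoint is $\infty$; the clean version is again the same-slope clause of Proposition~\ref{prop:change-indep-deltap}). So: the idea is right and close to the paper's, but you cannot get uniformity from Lemma~\ref{lem:linear-pj} alone, and you should instead invoke the shared-slope argument that the paper uses for its final reduction to $\bar p=2+\eps$.
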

\begin{proof}
	For all long jobs, which are tested by the optimum, we reduce the upper limit to $\p_j=1+p_j$. This does not change  the algorithm's solution. But the optimum may as well run those previously tested jobs also untested and would not change its  total objective value.

	Now the optimum solution runs all long jobs without testing them. Thus, increasing the processing time of long jobs to $p_j=\p_j$ does not affect the optimum cost whereas the algorithm's cost increases.

	Lemma~\ref{prop:short} implies that all long jobs are scheduled in the same order by the algorithm and an optimum without any short jobs in between. Then, setting $\p=2+\eps$ decreases the objective values of both algorithms by the same amount and thus does not decrease the ratio. The lemma follows.
	\qed
	\end{proof}

Now we are ready to prove the main result.
\begin{theorem}\label{th:bound2}%
 Algorithm \BOUND has competitive ratio at most $2$ for scheduling with testing 
with the objective of minimizing the sum of completion times.
\end{theorem}
\begin{proof}
	We consider worst-case instances of the type derived above. Let $a$ be the number of short jobs with $p_j=0$, let $b$ be the number of short jobs with $\p_j=p_j=2$, and let~$c$ be the number of long jobs with $\p_j=2+\eps$, see Figure~\ref{fig:threshold:ap}.

	\begin{figure}
	\begin{center}
		\includegraphics[width=10cm]{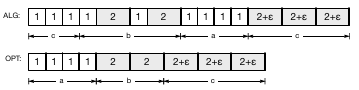}
		\caption{Worst case instance for \BOUND.}
		\label{fig:threshold:ap}
	\end{center}
	\end{figure}

 	\BOUND's solution for a worst-case instance first tests all long jobs, then tests and executes the short jobs in decreasing order of processing times, and completes with the executions of long jobs. The total objective value $ALG$ is
 	\begin{multline*}
		\ALG = (a+b+c)c + b(b+1)/2\cdot 3 +3b(a+c) + \\
			+ a(a+1)/2 + a\cdot c + c(c+1)/2\cdot (2+\eps) .
	\end{multline*}

	An optimum solution tests and schedules first all $0$-length jobs and then executes the remaining jobs without tests. The objective value is
	\begin{align*}
		\OPT &= a(a+1)/2 + a(b+c) +b(b+1)/2\cdot 2 + 2bc + c(c+1)/2\cdot (2+\eps).
	\end{align*}
	Simple transformation shows that $\ALG\leq 2\cdot \OPT$ is equivalent to
		\begin{align*}
			2 ab + 2c^2 & \leq a^2 + b^2 + a + b + c(c+1)(2+\eps) &
			\Leftrightarrow
			\\
			0 &\leq (a-b)^2 + a + b + c^2 \eps + c(2+\eps),
		\end{align*}
	which is obviously satisfied and the theorem follows.
	\qed
\end{proof}


Note that the analysis of  \BOUND is tight. Indeed it has ratio $2-\varepsilon$ on the instance consisting of a single job $j$ with $p_j=0$ and $\p_j=2-\varepsilon$, for arbitrarily small $\varepsilon > 0$.  The algorithm does not test the job, but the optimal schedule does.



We conclude
this section with an observation that was brought to our attention. Consider a
slight modification of \BOUND that delays \emph{all} jobs after their test,
regardless of the revealed processing time. This algorithm, which we
call \TA, seems to produce
worse schedules than \BOUND. For example, when all jobs have upper bound~$2$ and
processing time~$0$, \TA has a cost of $n^2$, while \BOUND has a cost
of $n(n+1)/2$. Nevertheless, the competitive ratio of \TA is
also~$2$, which can be shown as follows: Again,
by Lemma~\ref{lem:all_jobs_greater_c} we may assume that all jobs have upper
limit at least~$2$. Hence, \TA starts by testing all jobs, and then
executes them in order of non-decreasing processing times. By Lemma~\ref{lem:linear-pj},
we can assume that all jobs $j$ with $0\le p_j\le 1$ (which are tested by
both \OPT and \TA) satisfy $p_j\in\{0,1\}$.
For the jobs with $p_j\in [1,2]$, we can then first set $\bar{p}_j=2$ (this can only
help \OPT) and next set $p_j=2$ (this can only increase the cost of \TA but does
not affect \OPT).
Finally, we can set $\bar{p}_j=p_j$ for all jobs with $p_j>2$ (this can only
help \OPT) and then set $p_j=\bar{p}_j=2$
for all these jobs (this decreases the objective values of \OPT
and \TA by the same amount and thus does not decrease the ratio).
Let the resulting instance consist of $a$ jobs with processing time~$0$ and
$b$ jobs with processing time~$2$. The cost of \TA
is $A=(a+b)(a+b)+b(b+1)=a^2+2ab+2b^2+b$, and the cost of $\OPT$ is
$\frac12 a(a+1)+ab+b(b+1)=\frac{a^2}{2}+\frac{a}{2}+ab+b^2+b\ge\frac12 A$.


\subsection{Deterministic lower bound}\label{subsec:detLB}
In this section we give a lower bound on the competitive ratio of any deterministic algorithm. The instances constructed by the adversary have a very special form: All jobs have the same upper limit $\bar{p}$, and the processing time of every job is either $0$ or $\bar{p}$.

Consider instances of $n$ jobs with uniform upper limit $\bar p>1$, and consider any deterministic algorithm. We say that the algorithm \emph{touches} a job when it either tests the job or executes it untested. We re-index jobs in the order in which they are first touched by the algorithm, i.e., job $1$ is the first job touched by the algorithm and job $n$ is the last. The adversary fixes a fraction $\delta\in[0,1]$ and sets the processing time of job $j$, $1\le j\le n$, to:
$$
p_j = \begin{cases}
    0               &       \text{, if $j$ is executed by the algorithm untested, or $j> \delta n$}    \\
    \bar p      &       \text{, if $j$ is tested by the algorithm and $j\leq \delta n$}
\end{cases}$$
A job $j$ is called \emph{short} if $p_j=0$ and \emph{long} if $p_j=\bar p$.
Let $j_0$ be the smallest integer that is greater than~$\delta n$.
Job $j_0$ is the first of the last $(1-\delta)n$ jobs that are short no matter
whether the algorithm tests them or not.


We assume the algorithm knows $\bar p$ and $\delta$, which can only improve the performance of the best-possible deterministic algorithm.
Note that with $\delta$ and $\bar p$ known to the algorithm, it has full information about the actions of the adversary.
Nevertheless, it is still non-trivial for an algorithm to decide for each of the first $\delta n$ jobs whether to test it
(which makes the job a long job, and hence the algorithm spends time $\bar p+1$ on it while the optimum executes
it untested and spends only time $\bar p$) or to execute it
untested (which makes it a short job, and hence the algorithm spends time $\bar{p}$ on it while the optimum
spends only time~$1$).

\begin{figure}
\begin{center}
  \includegraphics[width=12cm]{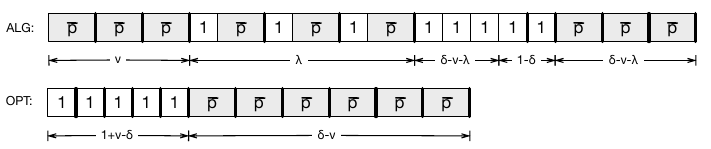}
  \caption{Lower bound construction.}
  \label{fig:detLB2}
  \end{center}
\end{figure}

Let us first determine the structure of the schedule produced by an algorithm that achieves the best possible competitive ratio for instances created by this
adversary, as displayed in Figure~\ref{fig:detLB2}.


\begin{lemma}
The schedule of a deterministic algorithm with best possible competitive ratio
has the following form, where $\lambda,\nu\ge 0$ and $\nu+\lambda\le\delta$:
The algorithm first executes $\nu n$ jobs untested,
then tests and executes $\lambda n$ long jobs, then tests $(\delta-\nu-\lambda)n$
long jobs and delays their execution, then tests and executes the remaining
$(1-\delta)n$ short jobs, and finally
executes the $(\delta-\nu-\lambda)n$ delayed long jobs that were tested earlier, see Figure~\ref{fig:detLB2}.
\end{lemma}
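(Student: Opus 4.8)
The plan is to take an arbitrary deterministic algorithm and massage its schedule, by a sequence of interchange arguments, into the claimed five-block form, checking that none of these steps increases the algorithm's cost and that none changes the number of long jobs produced — the only quantity on which $\OPT$ depends — so that an algorithm of best possible competitive ratio may be assumed to have this form. Since the algorithm knows $\bar p$ and $\delta$, it has complete information about the adversary and is effectively oblivious, so ``its schedule'' is well defined. I will repeatedly use two facts: (a) swapping two adjacent operations, each of which completes its job on the spot, does not increase $\sum_j C_j$ precisely when the shorter one is put first (Smith's rule, or the sign argument of Proposition~\ref{prop:change-indep-deltap}); and (b) the completion time of a long job executed strictly later than its test depends only on the position of its execution, not of its test, so shifting such a test to the right past any operation that completes its job on the spot strictly decreases $\sum_j C_j$.

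\emph{Step 1 (safe jobs are tested).} Every one of the last $(1-\delta)n$ jobs in the touch order is short no matter what, so replacing an untested execution of such a job (length $\bar p$) by a test followed by the immediate execution of the revealed length-$0$ job (length $1$) strictly shortens the schedule, strictly decreasing all later completion times, while leaving the touch order, the set of dangerous jobs, the number of long jobs, and hence $\OPT$ unchanged. \emph{Step 2 (the obligatory order).} The $\delta n$ jobs touched first are exactly the dangerous ones; among them those executed untested are short and those tested are long, giving $\nu n$ untested executions and $(\delta-\nu)n$ long jobs for some $\nu\ge 0$, and each long job is either executed immediately after its test (``glued'') or later (``delayed''), say $\lambda n$ of the former and $(\delta-\nu-\lambda)n$ of the latter. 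By the re-indexing, the $\delta n$ first touches of the dangerous jobs all precede the $(1-\delta)n$ first touches of the safe jobs, and only executions of already-tested (delayed long) jobs may be interleaved elsewhere; moreover no interchange may move a dangerous first touch past a safe first touch, since that would turn a safe job into a long one and change $\OPT$.

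\emph{Step 3 (placing the blocks).} First, the tests of the delayed long jobs (length $1$) are shifted as far right as possible within the dangerous phase: by fact (b) each shift past an untested execution or a glued long job strictly helps, and the shift is blocked only by the corresponding execution and by the safe first touches, so these tests form the last sub-block of the dangerous phase; in particular a genuinely delayed long job has its execution \emph{after} the dangerous phase, since otherwise shifting its test right would glue it. Second, among the untested executions (length $\bar p$) and the glued long jobs (length $1+\bar p$), both of which complete their job on the spot, fact (a) puts the shorter ones first, so all untested executions precede all glued long jobs. Third, the executions of delayed long jobs (length $\bar p$, completing their job on the spot, with no successor) are moved to the very end: by fact (a) each may be shifted right past a safe job (length $1$) with a strict improvement, and a direct computation shows that moving such an execution from the end to the front of the block of safe jobs increases the cost by $(1-\delta)n(\bar p-1)>0$, so in an optimal schedule all delayed executions follow all safe jobs. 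Combining these with Steps 1--2 yields exactly the order: $\nu n$ untested executions, $\lambda n$ glued long jobs, $(\delta-\nu-\lambda)n$ tests of delayed long jobs, $(1-\delta)n$ safe jobs, $(\delta-\nu-\lambda)n$ executions of delayed long jobs, with $\nu,\lambda\ge 0$ and $\nu+\lambda\le\delta$; the values $\nu$ and $\lambda$ remain free parameters, to be optimized in the ensuing analysis of the ratio.

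The most delicate step is Step 3, where the interchanges must simultaneously respect two constraints pulling in opposite directions: the precedence constraint that an execution follow its test, and the rigidity that no dangerous first touch may cross a safe first touch (which would change the instance, hence $\OPT$). The clean way to organize this is to first fix $\nu$ and the glued/delayed partition — which fixes the multiset of operations and all precedence relations — and only then permute, so that every interchange is local and its effect on $\sum_j C_j$ has a sign depending only on the two operation lengths, exactly as in the analysis of \BOUND and in Proposition~\ref{prop:change-indep-deltap}.
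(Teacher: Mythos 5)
Your proof is correct and follows essentially the same approach as the paper's: both fix the partition into $\nu n$ untested executions, $\lambda n$ glued long jobs, and $(\delta-\nu-\lambda)n$ delayed long jobs (which fixes $\OPT$), and then argue via SPT-style exchange arguments that the claimed block order is the one the algorithm would choose, taking care that no exchange moves a dangerous first touch past a safe one. The paper organizes this around the cut time $t$ at which the first safe job is tested and invokes SPT optimality before and after $t$, whereas you make the pairwise exchange computations explicit, but the underlying reasoning is the same.
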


\begin{proof}
It is clear that the algorithm will test
the last $(1-\delta)n$ jobs and execute each such job (with processing time~$0$)
right after its test, as executing any of them untested does not affect the optimal solution but increases the objective value of the algorithm. Furthermore, consider the time $t$ when the algorithm
tests job~$j_0$. From this time until the end of the schedule, the algorithm
will test and execute the last $(1-\delta)n$ jobs (spending time $1$ on each such job),
and execute all the long jobs that were tested earlier but not yet
executed (spending time $\bar p>1$ on
each such job). As the \algoname{SPT} rule is optimal for minimizing the sum of completion times,
it is clear that from time $t$ onward the algorithm will first test and execute
the $(1-\delta)n$ short jobs and afterwards execute the long jobs that were tested but
not executed before time~$t$.

Before time~$t$, the algorithm touches the first $\delta n$ jobs. Each of these
can be executed untested (let $\nu n$ be the number of such jobs), or tested and also
executed before time~$t$ (let $\lambda n$ be the number of such jobs), or tested
but not executed before time~$t$ (this happens for the remaining $(\delta-\nu-\lambda)n$ jobs).
To minimize the sum of completion times of these jobs, it is clear that the
algorithm first executes the $\nu n$ jobs untested (spending time $\bar p$ per job),
then tests the $\lambda n$ long jobs and executes each of them right after its
test (spending time $1+\bar p$ per job), and finally tests the
remaining $(\delta-\nu-\lambda)n$ long jobs.
\qed
\end{proof}

The cost of the algorithm in dependence
on $\nu$, $\lambda$, $\delta$ and $\bar p$ can now be expressed as:
\begin{eqnarray*}
\lefteqn{\ALG(\nu,\lambda,\delta,\bar p) =
n^2\Big(
\frac{\nu^2}{2}\bar p + \nu \bar p (1-\nu)
+ \frac{\lambda^2}{2} (1+\bar p) + \lambda (1+\bar p) (1-\nu-\lambda)}\\
&& \mbox{}+ (\delta-\nu-\lambda)(1-\nu-\lambda)
+ \frac{(1-\delta)^2}{2} + (1-\delta) (\delta-\nu-\lambda)
+ \frac{(\delta-\nu-\lambda)^2}{2} \bar p\Big)+O(n)\\
&=& \frac{n^2}{2}\left(1 + 2\delta(1-\nu \bar p) + \delta^2(\bar p-1)+ 2\nu(\nu+\bar p-2)
+ \lambda^2 +2\lambda(\nu+\bar p-1-\delta\bar p)\right)+O(n)
\end{eqnarray*}
The optimal schedule first tests and executes the $(\nu+1-\delta)n$ short
jobs and then executes the $(\delta-\nu)n$ long jobs untested. Hence,
the optimal cost, which depends only on $\nu$, $\delta$ and
$\bar p$, is:
\begin{eqnarray*}
\OPT(\nu,\delta,\bar p) &=&
n^2\left(\frac{(\nu+1-\delta)^2}{2} + (\nu+1-\delta)(\delta-\nu) + \frac{(\delta-\nu)^2}{2}\bar p\right)+O(n)\\
& =&  \frac{n^2}{2}\left(1+(\delta-\nu)^2(\bar p-1)\right)+O(n)
\end{eqnarray*}
We introduce the notations
\begin{align*}
\ALG'(\nu,\lambda,\delta,\bar p) &= \lim_{n\to\infty} \frac{2}{n^2}\ALG(\nu,\lambda,\delta,\bar p)
&\mbox{and}
\\
\OPT'(\nu,\delta,\bar p) & = \lim_{n\to\infty} \frac{2}{n^2}\OPT(\nu,\delta,\bar p).
\end{align*}
As the
adversary can choose $\delta$ and $\bar p$, while the algorithm can choose $\nu$ and $\lambda$,
the value
$$
R = \max_{\delta,\bar p} \min_{\nu,\lambda} \frac{\ALG'(\nu,\lambda,\delta,\bar p)}{\OPT'(\nu,\delta,\bar p)}
$$
gives a lower bound on the competitive ratio of any deterministic algorithm
in the limit for $n\to\infty$.
By making $n$ sufficiently large, the adversary can create instances with
finite $n$ that give a lower bound arbitrarily close to $R$.

The exact optimization of $\delta$ and $\bar p$ is rather tedious and
technical as it involves the optimization of rational functions of
several variables. In the following, we therefore only show that the choices
$\delta= 0.6306655$ and $\bar p= 1.9896202$ give a lower bound
of~$1.854628$ on the competitive ratio of any deterministic algorithm.
(The fully optimized value of $R$ is less than $1.1\cdot10^{-7}$ larger
than this value.) For this choice of $\delta$ and $\bar p$ we have:
\begin{eqnarray*}
\ALG'(\nu,\lambda,\delta,\bar p)
&\approx&  1.32747 + \nu(\nu-1.26516)+\frac12 \lambda^2 +\lambda(\nu-0.265165)\\
\OPT'(\nu,\delta,\bar p) &\approx& 0.696805+\nu(0.49481\nu-0.624119)
\end{eqnarray*}
The part of $\ALG'(\nu,\lambda,\delta,\bar p)$
involving $\lambda$ is $\frac12\lambda^2+\lambda(\nu+\bar p-1-\delta\bar p)$, which
is a quadratic function minimized at $\lambda=1+\delta\bar p-\bar p-\nu\approx 0.265165-\nu$.
As $\lambda$ must be non-negative,
we distinguish two cases depending on whether this expression is non-negative or not. Let $\tau=1+\delta\bar p-\bar p\approx 0.265165$.

\paragraph{Case 1:} $\nu\le \tau$. In this case the best choice of $\lambda$
for the algorithm is $\lambda=\tau-\nu$. The ratio $\ALG'/\OPT'$ then simplifies to:
$$
f(\nu)=\frac{1.29231+\nu(\frac12\nu-1)}{0.696805+\nu(0.49481\nu-0.624119)}
=
1.01049+\frac{1.18874-0.746417\nu}{1.40823+\nu(\nu-1.26133)}
$$
In the range $0\le\nu\le \tau$,
the only local extremum of this function is a local maximum at
$\nu\approx 0.201266$, so the function attains its minimum
in the range at one of the two endpoints. As we have
$f(\tau)> f(0)\approx 1.854628$,
the function is minimized at $\nu=0$, giving a lower bound of
$1.854628$ on the competitive ratio.

\paragraph{Case 2:} $\nu>\tau$.
In this case, the best choice of $\lambda$ for the algorithm is $\lambda=0$. The
ratio $\ALG'/\OPT'$ then becomes:
$$
g(\nu)=
\frac{1.32747+\nu(\nu-1.26516)}{0.696805+\nu(0.49481\nu-0.624119)}
=
2.02098+\frac{-0.163208-0.00774781\nu}{1.40823+\nu(\nu-1.26133)}.
$$
This function is monotonically decreasing in
the range $\tau<\nu\le\delta$, so it is
minimized for $\nu=\delta$, giving a ratio of $g(\delta)\approx 1.854628$.

As we get a lower bound of $1.854628$ in both cases, this lower
bound holds generally.

\begin{theorem}\label{thm:detLB}
No deterministic algorithm can achieve a competitive ratio or asymptotic
competitive ratio below $1.854628$ for scheduling with testing 
with the objective of minimizing the sum of completion times. This holds even for instances with
uniform upper limit where each processing time is either $0$ or equal to the
upper limit.
\end{theorem}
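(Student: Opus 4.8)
The plan is to assemble the ingredients already in place above. The adversary keeps a uniform upper limit $\bar p>1$, re-indexes the jobs in the order in which the algorithm first touches them, and forces $p_j=\bar p$ exactly when the algorithm tests a job with $j\le \delta n$, and $p_j=0$ otherwise. Since $\bar p$ and $\delta$ are revealed to the algorithm, its only genuine decisions concern the first $\delta n$ jobs, and by the preceding structural lemma an optimal online strategy against this adversary is captured by just the two parameters $\nu$ (fraction of early jobs run untested) and $\lambda$ (fraction of early jobs both tested and executed before the test of job $j_0$). With the block-cost formulas for $\ALG'(\nu,\lambda,\delta,\bar p)$ and $\OPT'(\nu,\delta,\bar p)$ in hand, the target is to prove $R=\max_{\delta,\bar p}\min_{\nu,\lambda}\ALG'/\OPT'\ge 1.854628$; since taking $n\to\infty$ recovers this limiting ratio, no value below $R$ is achievable even on the finite-$n$ instances, which settles the bound.

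First I would instantiate the adversary's parameters at the claimed near-optimal values $\delta=0.6306655$ and $\bar p=1.9896202$, substituting to obtain the two quadratics in $\nu$ and $\lambda$ displayed above. Next I would perform the algorithm's minimization: the $\lambda$-dependence of $\ALG'$ is a convex quadratic with unconstrained minimizer $\lambda=\tau-\nu$, where $\tau=1+\delta\bar p-\bar p\approx 0.265165$, so the optimization splits into \textbf{Case 1} ($\nu\le\tau$, where it is optimal to set $\lambda=\tau-\nu$) and \textbf{Case 2} ($\nu>\tau$, where $\lambda=0$ is forced). In each case $\ALG'/\OPT'$ collapses to a univariate rational function, $f$ respectively $g$, on a bounded $\nu$-interval, and I would lower-bound it by locating critical points and evaluating the endpoints: in Case 1, show $f$ has a single interior critical point which is a local maximum, so its minimum over $[0,\tau]$ is $\min\{f(0),f(\tau)\}=f(0)\approx 1.854628$; in Case 2, show $g$ is strictly decreasing on $(\tau,\delta]$, so its minimum is $g(\delta)\approx 1.854628$.

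The \emph{main obstacle} is precisely this one-variable analysis of rational functions: deciding the sign of the derivative of a ratio of two quadratics reduces to checking the sign of another low-degree polynomial, and telling local maxima from minima, while elementary, is tedious and error-prone by hand; moreover one must confirm that the chosen pair $(\delta,\bar p)$ is genuinely close to the true maximizer of $R$ (the companion notebook verifies the fully optimized $R$ exceeds $1.854628$ by less than $1.1\cdot 10^{-7}$). I would therefore carry out the symbolic differentiation, root isolation, and second-order checks with a computer algebra system. Finally, the two auxiliary claims follow directly from the construction itself: nothing in the argument used finiteness of $n$ beyond letting it grow, so the bound holds for the asymptotic competitive ratio as well, and by design the adversary only ever uses a single upper limit $\bar p$ together with processing times in $\{0,\bar p\}$.
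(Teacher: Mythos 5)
Your proposal follows the paper's proof essentially verbatim: same adversary construction, same reduction to parameters $(\nu,\lambda)$ via the structural lemma, same instantiation $\delta=0.6306655$, $\bar p=1.9896202$, and the same two-case analysis of the resulting univariate rational functions $f$ and $g$ on their respective $\nu$-ranges. The plan is correct and the intended computer-assisted verification of the critical-point and monotonicity claims matches what the paper does with Mathematica.
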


\section{Randomized Algorithms}\label{sec:rand}

\subsection{Algorithm \RTE}

\begin{algo}[\RTE]
The randomized algorithm \RTE has parameters $1\leq T\leq E$ and works in 3 phases. First it executes all jobs with $\ub_j<T$ without testing in order of increasing $\ub_j$. Then it tests all jobs with $\bar{p}_j\ge T$ in uniform random order. Each tested job $j$ is executed  immediately after its test if $p_j\le E$ and is deferred otherwise. Finally all deferred jobs are executed in order of increasing processing time.
\end{algo}



We analyze the competitive ratio of \RTE, and optimize the parameters $T,E$ such that the resulting competitive ratio is $T$.

By Lemma~\ref{lem:all_jobs_greater_c} we restrict to instances with $\ub_j \geq T$ for all jobs.  Then, the schedule produced by \RTE can be divided into two parts. Part~(1) contains all tests, of which those that yield processing time $p_j$ at most $E$ are immediately followed by the job's execution. Part~(2) contains all jobs that \emph{have been} tested and with processing time larger than $E$. These jobs are ordered by increasing processing time. Jobs in the first part are completed in an arbitrary order.

Furthermore, we can assume $\ub_j = \max \{ p_j, T\}$ for all jobs.  Reducing $\ub_j$ to this value does not change the cost or behavior of \RTE, but may decrease the cost of $\OPT$.  We make further assumptions along the following lines.
Let $\epsilon>0$  be an arbitrary small number such that $p_j\geq E+\epsilon$ for all jobs $j$ with $p_j>E$.  These jobs are executed by \RTE in part (2) of the schedule in non-decreasing order of processing time.  The same holds for $\OPT$, which by the \emph{\algoname{SPT} Policy} also schedules these jobs in the end in exactly the same order.  Hence if we set $\ub_j=p_j=E+\epsilon$ for all these jobs, then we reduce the objective value of \RTE and of $\OPT$ by the same value.  According to Proposition~\ref{prop:change-indep-deltap} this transformation only increases the competitive ratio of the algorithm.

Using again the assumption that $\ub_j = \max \{ p_j, T\}$ for all jobs, we now have that
all jobs $j$ in part~(2) satisfy $\ub_j=p_j=E+\epsilon$ and the remaining jobs satisfy either $\ub_j = p_j \in [T,E]$ or $\ub_j = T$ and $p_j\leq T$.
Now we apply Lemma~\ref{lem:linear-pj} to show that for all jobs $j$ with $\ub_j =  p_j\in[T,E]$ we can in fact assume $\ub_j =  p_j\in\{T,E\}$.
%
The usage of the lemma is a bit subtle as the output of \RTE is a distribution of schedules. For any fixed scheduling order corresponding to a realization of the random execution of the algorithm, the conditions of the lemma are satisfied. But we cannot apply the lemma on each order individually, as we might end up with different problem instances. However, the expected cost of \RTE is linear in the execution times of jobs $j$ within $p_j\in[T,E]$.  This is the key condition which is used in the proof of Lemma~\ref{lem:linear-pj}. Hence we conclude that the statement of the lemma still holds.

Now we turn to jobs $j$ with $\ub_j=T$ and $p_j \leq T$.  For the jobs with $0\leq p_j \leq T-1$, the same argument implies that $p_j\in\{0, T-1\}$.  However jobs $j$ with $\ub_j = T$ and $T-1\leq p_j \leq T$ are not tested in $\OPT$.  Therefore increasing their processing time to $p_j=T$ does not change $\OPT$ but increases the cost of \RTE and therefore increases the competitive ratio.

In conclusion a worst case instance is described completely by the number of jobs $n$ and fractions $\alpha,\beta,\gamma$ as follows, see Figure~\ref{fig:randUB:ap}.
\begin{itemize}
    \item A $1-\alpha-\beta-\gamma$ fraction of the jobs have $\ub_j=T$ and $p_j=0$. (type 0 jobs)
    \item An $\alpha$ fraction of the jobs have $\ub_j=T$ and $p_j=T$. (type T jobs)
    \item A $\beta$ fraction of the jobs have $\ub_j=E$ and $p_j=E$. (type E jobs)
    \item A $\gamma$ fraction of the jobs have $\ub_j=E+\epsilon$ and $p_j=E+\epsilon$ for some arbitrarily small $\epsilon>0$. (type E+ jobs)
\end{itemize}

\begin{figure}
\begin{center}
	\includegraphics[width=12cm]{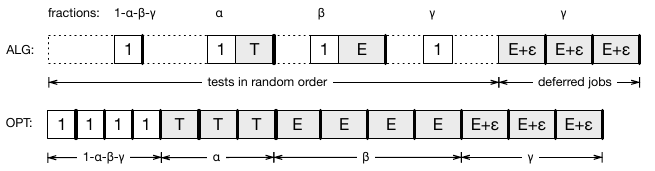}
    \caption{Worst case analysis of the algorithm \RTE.}
    \label{fig:randUB:ap}
\end{center}
\end{figure}

\subsubsection{Cost of \RTE}
Let $n$ be the total number of jobs in the instance. In the following expressions for simplification we will omit $\epsilon$.
We denote by $L:=n+T\alpha n+E\beta n$ the length of part~(1).
This means that for a job $j$ of type $0,T$ or $E$, the expected time its test starts is $(L-1-p_j)/2$ and hence its expected completion time, which is $1+p_j$ time units later, is $(L+1+p_j)/2$.
The expected objective value of \RTE can be expressed as
\begin{align}
    \textrm{ALG} =& (1-\gamma)n(n+1+T\alpha n+E\beta n)/2 \label{RTE:1}
    \\
    &+ T\alpha n/2 + E\beta n/2 \label{RTE:2}
    \\
    &+ \gamma n(n+T\alpha n+E\beta n) \label{RTE:3}
    \\
    & + E\gamma n(\gamma n+1)/2 \label{RTE:4}
\end{align}
where \eqref{RTE:1} is the sum of $(L+1)/2$ for all jobs completed in the first part, \eqref{RTE:2} is the additional part in the expected completion time for job types $T$ and $E$.  Jobs completed in the second part have all the same processing time.  The $i$-th to be completed in part (2) has completion time $L+Ei$.  Hence the total completion time of these jobs is expressed as the sum of the expressions~\eqref{RTE:3} and \eqref{RTE:4}.

\subsubsection{Cost of \OPT}
By the \emph{smallest processing time first rule}, the optimal schedule first tests and executes all type 0 jobs. Then it executes untested all type $T,E$ and $E^+$ jobs in that order. Hence the optimal objective value is stated as follows, where every other expression represents the total completion times of some job type followed by the delay these jobs induce on subsequent job types.
\begin{align*}
    \textrm{\OPT} = & (1 - \alpha - \beta - \gamma) n ((1 - \alpha - \beta - \gamma) n + 1)/2 +
     \\
  & (1 - \alpha - \beta - \gamma) n (\alpha + \beta + \gamma) n +\\
  &T \alpha n (\alpha n + 1)/2 + \\
  &T \alpha n (\beta + \gamma) n +\\
  &E \beta n (\beta n + 1)/2 + \\
  &E \beta n \gamma n +\\
  &E \gamma n (\gamma n + 1)/2.
\end{align*}

\subsubsection{Competitive ratio}
We say that fractions $\alpha,\beta,\gamma$ are \emph{valid} iff $\alpha,\beta,\gamma\geq 0$ and $\alpha+\beta+\gamma\leq 1$.
The algorithm is $T$-competitive if $T\cdot \textrm{OPT} - \textrm{ALG} \geq 0$ for all $n\geq 0$ and  all valid fractions $\alpha,\beta,\gamma$.  The costs can be written as $\ALG=\frac{n^2}2 \ALG_2 + \frac n2 \ALG_1$ and $\OPT=\frac{n^2}2 \OPT_2 + \frac n2 \OPT_1$  for
\begin{align*}
    \ALG_2 =& 1 + \gamma + \beta E + \beta \gamma E + \gamma^2 E + \alpha T + \alpha \gamma T
    \\
    \ALG_1 =& 1 - \gamma + \beta E + \gamma E + \alpha T
    \\
    \OPT_2  = &1 - \alpha^2 - 2 \alpha \beta - \beta^2 - 2 \alpha \gamma - 2 \beta \gamma - \gamma^2
    \\ &  + \beta^2 E + 2 \beta \gamma E +
     \gamma^2 E + \alpha^2 T + 2 \alpha \beta T + 2 \alpha \gamma T
 \\
    \OPT_1 =& 1 - \alpha - \beta - \gamma + \beta E + \gamma E + \alpha T.
\end{align*}
It suffices to show separately the inequalities $T\cdot \OPT_2 - \ALG_2 \geq 0$
and $T\cdot \OPT_1 - \ALG_1 \geq 0$ for all valid $\alpha,\beta,\gamma$ fractions.

We start with the first inequality, and consider the following left hand side.
\[
G = T[1 + (\beta + \gamma)^2 (E-1) + \alpha^2 (T-1) +
    2 \alpha (\beta + \gamma) (T-1) - \alpha-
 \alpha \gamma] - \gamma - 1 - E(\gamma^2 +  \beta \gamma + \beta).
\]

\subsubsection{Breaking into cases}
We want to find parameters $T,E$ with minimal $T$ such that $G(T,E,\alpha,\beta,\gamma)\geq0$ for all valid fractions, i.e.\ $\alpha,\beta,\gamma\geq 0$ with $\alpha+\beta+\gamma\leq 1$.  We call this the \emph{validity polytope} for $\alpha,\beta,\gamma$, see Figure~\ref{fig:randUBabc:ap}.  For this purpose we made numerical experiments which gave us a range where the optima could belong, namely $T\in[1.71,1.89],E\in[2.81,2.89]$.

\begin{figure}
    \centerline{\includegraphics[width=4cm]{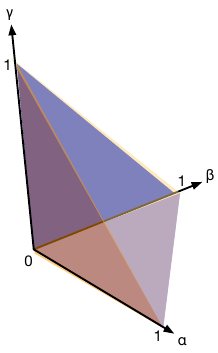}}
    \caption{Validity region for $(\alpha,\beta,\gamma)$.}
    \label{fig:randUBabc:ap}
\end{figure}
Our general approach consists in identifying values $(\alpha,\beta,\gamma)$ which are local minima for $G$.  Each of these points $(\alpha,\beta,\gamma)$ generate conditions on $T,E$ of the form $G(T,E,\alpha,\beta,\gamma)\geq 0$.  The optimal pair $(T,E)$ is then the pair with minimal $T$ satisfying all the generated conditions.

The analysis follows a partition of the validity polytope. First we consider the open region $\{(\alpha,\beta,\gamma)|0<\alpha,0<\beta,0<\gamma,\alpha+\beta+\gamma<1\}$. Then we consider the 4 open facets on the border defined by the equations $\alpha+\beta+\gamma=1, \alpha=0, \beta=0, \gamma=0$. Finally we consider the 6 closed edges that form the edges of the polytope.  Note that the vertices of the polytope $(0,0,0),(0,0,1),(0,1,0),(1,0,0)$ belong each to several edges.

\begin{itemize}
    \item Case 1: open polytope.
    The second order derivatives of $G$ in $\alpha,\beta,\gamma$ are
    \begin{align*}
        \frac{\partial^2 G}{\partial^2 \alpha} &= 2T(T-1)
        \\
        \frac{\partial^2 G}{\partial^2 \beta} &= 2T(E-1)
        \\
        \frac{\partial^2 G}{\partial^2 \gamma} &= 2T(E-1)-2E
    \end{align*}
    which are all positive in the considered $T$- and $E$-range.
    Hence a local minimum on the open polytope must be a point $(\alpha,\beta,\gamma)$ that is a root for the derivative in each of the 3 directions.
    Hence we choose $\alpha$ as the root
    \[
        \alpha = \beta - \gamma + \frac{1 + \gamma}{2 (T-1)},
    \]
    $\beta$ as the root
    \[
        \beta = \frac{1 + \gamma - 2 \gamma T}{2 T},
    \]
    and $\gamma$ as the root
    \[
        \gamma = \frac{(E (T-1) - T) (2T-1)}{E(T-1)+T}.
    \]
    For this point the condition $G\geq 0$ translates into the following condition on $T,E$.
    \begin{equation}                                        \label{eq:randUB:cond1}
        E^2 (T-1)^2 + T (2T-1)  - E T^2  \geq 0.
    \end{equation}

    \item Case 2: facet $\alpha+\beta+\gamma=1$.
    In that case the derivative of $G$ in $\beta$ is $1-\alpha(E-T)$.
    This means that $G$ is linear in $\beta$, and a local minimum lies on the boundary of the triangle, which we considered open.  Hence such a local minimum will be considered in a case below.
    Note that in the degenerate case $\alpha=1/(E-T)$ the value of $G$ is independent of $\beta$, hence it is enough to consider an equivalent point on the boundary.

    \item Case 3: facet $\gamma=0$.
    In this case the extreme $\alpha$ value for $G$ is
    \[
        \alpha = \frac{1}{2(T-1)} - \beta,
    \]
    and then the extreme $\beta$ value for $G$ is
    \(
    \beta = 1/2T. 
    \)
    For this point the condition $G\geq 0$ translates into the following condition on $T,E$.
    \begin{equation}                                        \label{eq:randUB:cond2}
        \frac{1}{T-1} + 4(T-1) - \frac{E}{T} \geq 0.
    \end{equation}

    \item Case 4: facet $\alpha=0$.
    In this case the extreme $\beta$ value for $G$ is
    \[
        \beta = \frac{E +\gamma E+ 2 \gamma T - 2 \gamma ET}{2T(E-1)},
    \]
    but then the second order derivative of $G$ in $\gamma$ is
    \[
        \frac{\partial^2 G}{\partial^2 \gamma} = -\frac{E^2 }{2T(E-1)}
    \]
    which is negative. Hence local minimum of this triangle is on its boundary.

    \item Case 5: facet $\beta=0$.
    The extreme $\alpha$ value for $G$ is
    \[
        \alpha = \frac{1 + 3 \gamma - 2 \gamma T}{2(T-1)},
    \]
    and then the extreme $\gamma$ value for $G$ is
    \[
    \gamma = \frac{(2-T)(2T-1)}{4E(T-1^2) - T(5-4T(2-T))}.
    \]
    But in the considered region for $(T,E)$ the value of $\alpha+\gamma$ exceeds 1, and is therefore outside the boundaries of the triangle.

    \item Case 6: edge $(\alpha,\beta,\gamma)=(x,1-x,0)$ for $0\leq x\leq 1$.
    The extreme point for $x$ is
    \[
        x = 1-\frac{1}{2T}.
    \]
    For this point the condition $G\geq 0$ translates into the following condition on $T,E$.
    \begin{equation}                                        \label{eq:randUB:cond3}
        T(T-1)-\frac34 - \frac{E}{4T} \geq 0.
    \end{equation}

    \item Case 7: edge $(\alpha,\beta,\gamma)=(x,0,1-x)$ for $0\leq x\leq 1$.
    The extreme point for $x$ is
    \[
        x = \frac{2ET+2T-2T^2-2E-1}{2(E-T)(T-1)},
    \]
    which generates the following condition
    \begin{equation}                                        \label{eq:randUB:cond4}
        4 E (1 - (2-T) T^2) -(2 T (T-1) -1 )^2  \geq 0.
    \end{equation}

    \item Case 8: edge $(\alpha,\beta,\gamma)=(0,x,1-x)$ for $0\leq x\leq 1$.
    Here $G$ is linear increasing in $x$, hence a local minimum is reached at $x=0$, generating the condition
    \begin{equation}                                        \label{eq:randUB:cond5}
        E(T-1) - 2 \geq 0.
    \end{equation}

    \item Case 9: edge $(\alpha,\beta,\gamma)=(x,0,0)$ for $0\leq x\leq 1$.
    The extreme point for $x$ is
    \[
        x = \frac{1}{2(T-1)},
    \]
    generating the condition
    \begin{equation}                                        \label{eq:randUB:cond6}
         4T-5 - \frac{1}{T-1}  \geq 0.
    \end{equation}

    \item Case 10: edge $(\alpha,\beta,\gamma)=(0,x,0)$ for $0\leq x\leq 1$.
    The extreme point for $x$ is
    \[
        x = \frac{E}{2T(E-1)},
    \]
    generating the condition
    \begin{equation}                                        \label{eq:randUB:cond7}
        4(T-1)-\frac{E^2}{T(E-1)}  \geq 0.
    \end{equation}

    \item Case 11: edge $(\alpha,\beta,\gamma)=(0,0,x)$ for $0\leq x\leq 1$.
    The extreme point for $x$ is
    \[
        x = \frac{1}{2(ET-E-T)},
    \]
    generating the condition
    \begin{equation}                                        \label{eq:randUB:cond8}
        T-1-\frac{1}{4(ET-E-T)}   \geq 0.
    \end{equation}
\end{itemize}

In summary we want to find values $T,E$ that satisfy all conditions \eqref{eq:randUB:cond1} to \eqref{eq:randUB:cond8} and minimize $T$.  In the considered region for $T$ and $E$, the conditions \eqref{eq:randUB:cond3}, \eqref{eq:randUB:cond5}, \eqref{eq:randUB:cond6} and \eqref{eq:randUB:cond7} are satisfied.  Hence we focus on the remaining conditions, and find out that the optimal point lies on the intersection of the left hand sides of condition~\eqref{eq:randUB:cond2} and \eqref{eq:randUB:cond4}. The solutions are roots to a polynomial of degree $5$, and in only one of them $T$ is larger than the golden ratio, which it has to.  Numerically we obtain the optimal parameters $T \approx 1.7453$ and $E \approx 2.8609$.

\begin{figure}
    \centerline{\includegraphics[width=6cm]{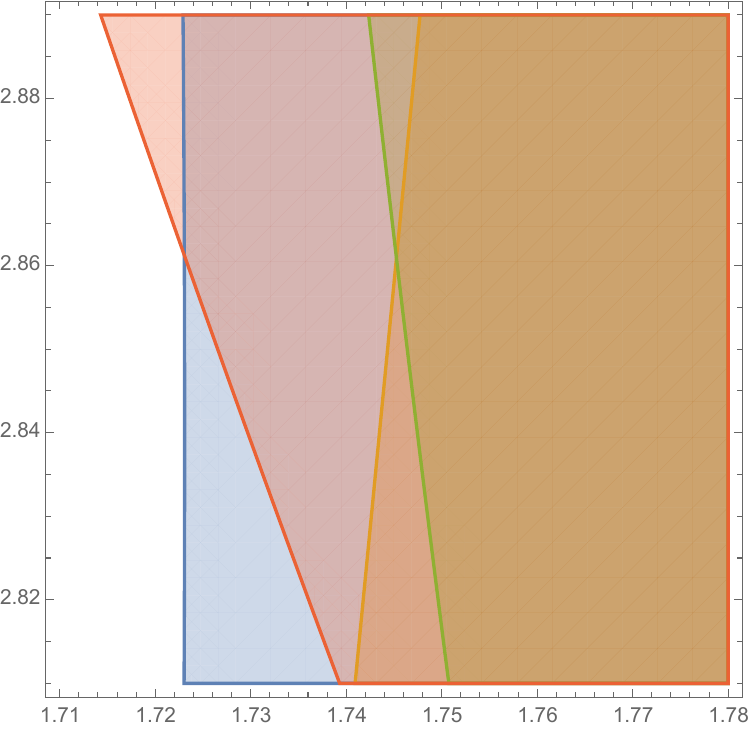}}
    \caption{Regions where conditions \eqref{eq:randUB:cond1}:blue, \eqref{eq:randUB:cond2}:orange,  \eqref{eq:randUB:cond4}:green and \eqref{eq:randUB:cond8}:red are satisfied by points $(T,E)$, with $T$ ranging horizontally and $E$ ranging vertically.}
    \label{fig:randUBcond}
\end{figure}

We conclude the proof by considering the inequality $T\cdot \OPT_1 - \ALG_1 \geq 0$ which is
\[
    \gamma (E-1) (T-1) + \beta (E-1) t + (1 - \alpha (2-T)) T  -1 - \beta E
    \geq 0.
\]
Taking the derivative of the left hand side reveals that it is decreasing in $\alpha$ and increasing in $\beta$ and $\gamma$ for the chosen values $T,E$.  Hence the expression is minimized at $\alpha=1,\beta=0,\gamma=0$, where its value is
\(
    T(T-1)-1 > 0.
\)
Therefore we have shown the following theorem.

\begin{theorem}\label{thm:RTE}
    The competitive ratio of the algorithm \RTE is at most $1.7453$ for scheduling with testing 
with the objective of minimizing the sum of completion times.
\end{theorem}


\subsection{Lower bound for randomized algorithms}

In this section we give a lower bound on the best possible competitive ratio of any randomized algorithm against
an oblivious adversary. We do so by specifying a probability distribution over inputs and proving a
lower bound on $E[\ALG]/E[\OPT]$ that holds for all
deterministic algorithms~$\ALG$. By Yao's principle~\cite{yao1977probabilistic,BorodinEY98} this gives the desired lower bound.

The probability distribution over inputs with $n$ jobs has a constant
parameter $0<q<1$ and
is defined as follows: Each job $j$ has upper limit~$\p_j=1/q>1$, and its processing time
$p_j$ is set to $0$ with probability $q$ and to $1/q$ with probability $1-q$.
\paragraph{Estimating $E[\OPT]$.}
Let $Z$ denote the number of jobs with processing time~$0$. Note that $Z$
is a random variable with binomial distribution.
The optimal schedule first tests and executes the $Z$ jobs with $p_j=0$
and then executes the $n-Z$ jobs with $p_j=1/q$ untested. Hence, the objective value
of $\OPT$ is:
$$
\frac{Z(Z+1)}{2} +Z(n-Z) + \frac{(n-Z)(n-Z+1)}{2q}.
$$
Using $E[Z]=nq$ and $E[Z^2]=(nq)^2+nq(1-q)$, we obtain
$$
E[\OPT] = \frac{n^2}{2} \left(\frac1q+3q-2-q^2\right) + O(n).
$$

\paragraph{Estimating $E[\ALG]$.}
First, observe that we only need to consider algorithms that schedule a job $j$ immediately if the job has been tested and $p_j=0$. Furthermore, we only need to consider algorithms that never create idle time before all jobs are completed.

We claim that any such algorithm satisfies $E[\ALG]\ge \frac{n^2}{2q}$ for all~$n$.
We prove this by induction on~$n$. Let $\ALG(k)$ denote the objective
value of the algorithm $\ALG$ executed for a random instance with $k$ jobs
that is generated by our probability distribution for $n=k$ (i.e.,
all $k$ jobs have $\bar{p}_j=1/q$ and $p_j$ is set to $0$ with probability $q$
and to $1/q$ otherwise).

Consider the base case $n=1$. If $\ALG$ executes job $1$ without testing, then $\ALG(1)=1/q$. If $\ALG$ tests the job and then necessarily executes it right away, since there are no other jobs, then $E[\ALG(1)]=1 + (q\cdot 0 + (1-q)\cdot (1/q)) = 1/q$. In both cases, $E[\ALG(1)]=1/q\ge \frac{n^2}{2q}$.

Now assume the claim has been shown for $n-1$, i.e., $E[\ALG(n-1)]\ge \frac{(n-1)^2}{2q}=\frac{n^2}{2q}-n/q+\frac{1}{2q}>\frac{n^2}{2q}-n/q$.
Consider the execution of $\ALG$ on an instance with $n$ jobs, and make a case distinction on how the algorithm handles the first job it tests or executes. Without loss of generality, assume that this job is job~$1$.

\begin{itemize}
\item Case 1: $\ALG$ executes job $1$ without testing (completing at time~$C_1=1/q$), or it tests jobs $1$ and then
executes it immediately independent of its processing time (with
expected completion time $E[C_1]=1+(1-q)/q=1/q$).
After the completion of job~$1$, the algorithm schedules the remaining
$n-1$ jobs, which is a random instance with $n-1$ jobs. Hence,
the objective value is $E[C_1]+ E[C_1](n-1)+E[\ALG(n-1)] =
1/q+(n-1)/q+E[\ALG(n-1)]\ge n/q + \frac{n^2}{2q}-n/q=\frac{n^2}{2q}$.

\item
Case 2: $\ALG$ tests job~$1$ and then executes it immediately if its processing time is $0$, but defers it if its processing time is~$1/q$. Assume first that if $p_1=1/q$, then $\ALG$ defers the execution of $p_1$ to the very end of the schedule. We have
$$E[\ALG(n)|p_1=0] = 1 + (n-1) + E[\ALG(n-1)] $$
 and
$$E[\ALG(n)|p_1=1/q] = n + E[\ALG(n-1)] + E[len(\ALG(n-1))] + 1/q, $$
where $len(\ALG(n-1))$ is the length of the schedule for $n-1$ jobs.
Note that every job contributes $1/q$ to the expected
schedule length no matter whether it is tested (in which case it
requires time $1$ for testing and an additional expected
$(1-q)/q$ time for processing)
or not (in which case its processing time is $1/q$ for sure).
Therefore, $E[len(\ALG(n-1))]=(n-1)/q$.
So we have:
\begin{eqnarray*}
E[\ALG(n)]&=&q(n+E[\ALG(n-1)])+(1-q)(n+n/q+E[\ALG(n-1)])\\
&=& qn+n+n/q-qn-n + E[\ALG(n-1)]\\
&=& n/q + E[\ALG(n-1)]\\
&\ge& \frac{n^2}{2q}.
\end{eqnarray*}
Finally, we need to consider the possibility that $p_1=1/q$ and $\ALG$ defers
job~$1$, but schedules it at some point during the schedule for the remaining
$n-1$ jobs instead of at the very end of the schedule. Assume that $\ALG$
schedules job~$1$ in such a way that $k$ of the remaining $n-1$ jobs
are executed after job~$1$. We compare this schedule to the schedule
where job~$1$ is executed at the very end of the schedule. Let $K$ be
the set of $k$ jobs that are executed after job~$1$ by $\ALG$.
Note that the jobs in the set $K$ can be jobs that are scheduled
without testing (and thus executed with processing time $1/q$), jobs that are tested
and executed after the execution of job $1$ (so that the expected time
for testing and executing them is $1/q$), or jobs that are tested before
the execution of job~$1$ but executed afterwards (in which case their
processing time must be $1/q$, since jobs with processing time~$0$ are
executed immediately after they are tested). Hence, moving the
execution of job $1$ from the very end of the schedule ahead of
$k$ job executions will change the expected objective value as follows:
The expected completion time of job $1$ decreases by $k/q$, and the
completion time of each of the $k$ jobs in $K$ increases by $1/q$.
Therefore, $E[\ALG(n)]$ is the same as when job~$1$ is executed at the
end of the schedule, and we get $E[\ALG(n)]\ge\frac{n^2}{2q}$ as before.
\end{itemize}

\begin{theorem}\label{thm:randLB}
   No randomized algorithm can achieve a competitive ratio less than $1.6257$ for scheduling with testing 
with the objective of minimizing the sum of completion times.
\end{theorem}
%
\begin{proof}
    Since we have $E[\OPT] = \frac{n^2}{2} \left(\frac1q+3q-2-q^2\right) + O(n)$
    and $E[\ALG]\ge \frac{n^2}{2q}$,
    Yao's principle~\cite{yao1977probabilistic,BorodinEY98} gives a lower bound that is arbitrarily close
    (for large enough~$n$) to
    $$
    \frac{
    1/q
    }{
    1/q+3q-2-q^2
    }
    $$
    for randomized algorithms against an oblivious adversary.
    The bound is maximized for $q=1-1/\sqrt3 \approx 0.42265$,
    giving a lower bound of $1.62575$.
    \qed
\end{proof}

\section{Deterministic Algorithms for Uniform Upper Limits}\label{sec:uniform}

In this section we investigate the problem of scheduling with testing on instances in which all jobs have a uniform upper limit $\ub$. In Subsection~\ref{subsec:uniform}, we give a deterministic algorithm that achieves a ratio strictly less than~$2$. In Subsection~\ref{subsec:extreme-uniform}, we study an even further restricted class of \emph{extreme uniform} instances that consist of jobs with uniform upper limit $\ub$ and processing times in $\{0, \ub\}$. We give an algorithm with improved competitive ratio that is particularly interesting as it is near-optimal algorithm for the class of worst-case instances for deterministic algorithm from Theorem~\ref{thm:detLB} in Subsection~\ref{subsec:detLB}.

\subsection{An improved algorithm for uniform upper limits}\label{subsec:uniform}

We assume that all jobs have upper limit $\ub$. We design an algorithm with a competitive ratio strictly less than $2$ by combining \thresh, presented in Section~\ref{sec:bound2}, with a new algorithm \beat. The new algorithm \beat performs well on instances with upper limit roughly $2$, but its performance becomes worse for larger upper limits. Therefore, we employ in this case the algorithm \thresh .

 To simplify the analysis, we consider the limit of $\ALG(I)/\OPT(I)$ when the number~$n$ of jobs approaches infinity.
 We say that an algorithm $\ALG$ is \emph{asymptotically $\ratio$-competitive} or \emph{has asymptotic competitive ratio at most $\ratio$} if \[\lim_{n\to\infty} \sup_{I} \ALG(I)/\OPT(I) \le \ratio.\]

\begin{algo}[\beat]
The algorithm \beat balances the time testing jobs and the time executing jobs while there are untested jobs. A job is called \emph{short} if its running time is at most $\Ex = \max\{1, \ub - 1\}$, and \emph{long} otherwise. Let $\mathrm{TotalTest}$ denote the time we spend testing long jobs and let $\mathrm{TotalExec}$ be the time long jobs are executed. We iterate testing an arbitrary job and then execute the job with smallest processing time either, if it is a short job, or if $\mathrm{TotalExec}+ p_k$ is at most $\mathrm{TotalTest}$. Once all jobs have been tested, we execute the remaining jobs in order of non-decreasing processing time.
The pseudocode is shown in Pseudocode~\ref{alg:BEAT}.
\end{algo}
\begin{algorithm}
\SetAlgorithmName{Pseudocode}
	\KwIn{A set of $n$ jobs with uniform upper limit $\ub$.}
	\KwOut{A schedule of tests and executions of all jobs.}
		    TotalTest $\leftarrow$ 0\tcp*[l]{total time of executed tests of long jobs }
		    TotalExec $\leftarrow$ 0\tcp*[l]{total time of executed long jobs}
		    \While{there are untested jobs}{
		        $k \leftarrow$ tested, not executed job with minimum $p_k$\tcp*[l] {$p_k=\infty$ if no such job}
		        \uIf{$\mathrm{TotalExec} + p_k \le \mathrm{TotalTest}$}{
		        			execute $k$\; TotalExec $\leftarrow$ TotalExec + $p_k$\;
		        }
		        \Else{
		      		  $j \leftarrow$ an arbitrary untested job \; 
		      		  test $j$\;
		     		   \uIf{$p_j \le \Ex$}{
		      			  		execute $j$\;
		      			  }
		      			\Else{
		      				    TotalTest $\leftarrow$ TotalTest + 1\;
		      			}
		      }
		}
    	execute all remaining jobs in order of non-decreasing $p_j$\;
\caption{\beat}
\label{alg:BEAT}
\end{algorithm}

We will analyze algorithm \beat in Section~\ref{subsec:beat}.
In Lemma~\ref{lem:beat_adv_algo_structure} we will make a structural observation about the algorithm schedule for a worst-case instance.
%
%
%
In Lemma~\ref{lem:beat_cr} we prove that the asymptotic competitive ratio of \beat for $\ub <3$ is at most
\begin{align*}
	\ratio^{BEAT} = \frac{1 + 2 (-2 + \ub) \ub + \sqrt{(1 - 2 \ub)^2 (-3 + 4 \ub)}}{2 (-1 + \ub) \ub}.
\end{align*}
This function decreases, when $\ub$ increases. Alternatively, for small upper limit we can execute each job without test. Then there is a worst-case instance where all jobs have processing time $p_j = 0$. The optimal schedule tests each job only if the upper limit $\ub$ is larger than one and executes it immediately. For $\ub<1$ this means the competitive ratio is $1$ and otherwise it is $\ub$, which monotonically increases. Thus, we choose a threshold \T$\approx 1.9338$ for $\ub$, where we start applying \beat: the fixpoint of the function $\ratio^{BEAT}$.

For upper limits $\ub > 3$, the performance behavior of \beat changes and the asymptotic competitive ratio increases. Thus, we employ the algorithm \thresh for large upper limits. Recall from Section~\ref{sec:bound2} that for $\bar p >2$ \thresh tests all jobs, executes those with $p_j \leq 2$ immediately and defers the other jobs. In Subsection~\ref{subsec:thresh}, we argue that there is a worst-case instance with short jobs that have processing time $0$ or $2$ and long jobs with processing time $\ub_j = \ub$ and that no long job is tested in an optimal solution. This allows us to prove in Theorem~\ref{thm:thresh-p} that the competitive ratio for \thresh is at most
	\[
	\ratio^{THRESH} = \left\{
                \begin{array}{ll}
                \frac{-3 + \p + \sqrt{-15 + \p (18 + \p)}}{2(\p-1)} \qquad & \textup{if }  \p\in (2,3)\\
				\sqrt{3}\approx 1.73 & \textup{if }  \p \geq 3.
			 	\end{array}
             \right.
	\]
The function for small $\p$ is a monotone function decreasing from $2$ to $\sqrt{3}$ in the limits for $\p \in (2,3)$. We choose a threshold, where we change from applying \beat to employing \thresh at \tbeatthresh $\approx 2.2948$, the crossing point of the two functions describing the competitive ratio of \beat and \thresh in $(2,3)$.
\begin{algo}\label{alg:alg-beat2}
	Execute all jobs without testing them, if the upper limit $\ub$ is less than \T$\approx 1.9338$. Otherwise, if the upper limit $\ub$ is greater than \tbeatthresh$\approx 2.2948$, execute the algorithm \thresh. For upper limits between \T and \tbeatthresh, execute the Algorithm~ \beat.
\end{algo}
The function describing the asymptotic competitive ratio depending on $\ub$ is displayed in Figure~\ref{fig:BEATcrPlot2}. Its maximum is attained at \T, which is a fixpoint. Thus we obtain the following result.

\begin{theorem}\label{thm:uniformUB}
    For scheduling with testing 
with the objective of minimizing the sum of completion times, the asymptotic competitive ratio 
    of Algorithm \ref{alg:alg-beat2} on instances with uniform upper limits is $\ratio = $ \T$\approx 1.9338$,
    which is the only real root of $2 \ub^3 - 4 \ub^2 + 4 \ub - 1 - \sqrt{(1 - 2 \ub)^2 (4 \ub - 3)}$. 
\end{theorem}
\begin{figure}
\begin{center}
	\begin{tikzpicture}[scale=1.8,every node/.style={},font=\scriptsize]
	  \draw[->] (0,0) -- (5,0) node[right,below] {$\ub$};
      \draw[->] (0,0) --  node[above,sloped] {competitive ratio}(0,2);
      \draw[dotted] (0,1.9338) node[left] {$\ratio=T_1$} -- (1.9338,1.9338) ;
      \draw[dotted]  (1,0) node[below] {$1$} -- (1,1) ;
      \draw[dotted]  (1.9338,0) node[below] {$1.93$} -- (1.9338,1.9338) ;
      \draw[dotted]  (2.3,0) node[below] {$2.29$} -- (2.3,1.905) ;
      \draw[dotted]  (3,0) node[below] {$3$} -- (3,{sqrt(3)}) ;
      \draw[domain=0:1,smooth,variable=\x] plot ({\x},{1});
      \draw[domain=1:1.93,smooth,variable=\x] plot ({\x},{\x});
      \draw[domain=1.93:2.3,smooth,variable=\x]  plot ({\x},{(1 + 2 *(-2 + \x) *\x + sqrt((1 - 2*\x)*(1 - 2*\x) *(-3 + 4*\x))))/(2 *(-1 + \x) *\x)});
     \draw[domain=2.3:3,smooth,variable=\x] plot ({\x},{(-3 +\x + sqrt(-15 + \x *(18 + \x)))/(2 *(-1 + \x))});
        \draw[domain=3:4.8,smooth,variable=\x] plot ({\x},{sqrt(3)});
			\draw[|<->,dashed] (0,-0.4) -- node[above] {\textsc{No Tests}} (1.9338,-0.4);
			\draw[|<->|,dashed] (1.9338,-0.4) -- node[above] {\beat} (2.3,-0.4);
			\draw[<->,dashed] (2.3,-0.4) -- node[above] {\thresh} (5,-0.4);
    \end{tikzpicture}
    \caption{Competitive ratio depending on $\ub$.}\label{fig:BEATcrPlot2}
    \end{center}
\end{figure}
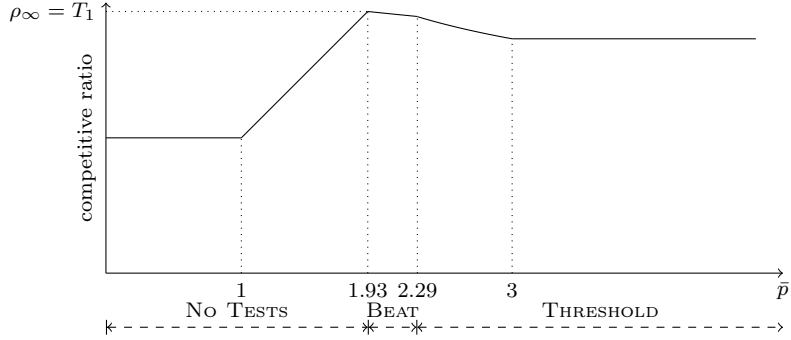

\subsubsection{Analysis of \beat}
\label{subsec:beat}%
We first make a structural observation about the algorithm schedule for a worst-case instance.

\begin{lemma}\label{lem:beat_adv_algo_structure}
There are worst-case uniform instances for \beat in which
the jobs are tested in order of decreasing~$p_j$, at most one job
has $p_j\in (E, \ub)$, and all other jobs have
$p_j \in \{ 0, \Ex, \ub\}$.
\end{lemma}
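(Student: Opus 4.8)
Start from an arbitrary worst-case instance $I$ for \beat, together with a worst-case resolution of \beat's choices of ``an arbitrary untested job'', and transform it by a sequence of steps that never decrease $\ALG(I)/\OPT(I)$ until it has the claimed form. Note first that for the relevant range of upper limits ($1<\ub<3$) \beat tests every job, that a job is short iff $p_j\le\Ex$, and that a long job satisfies $p_j\in(\Ex,\ub]$. The fact used throughout is that $\OPT$ never tests a long job: since $p_j>\Ex\ge\ub-1$ we get $1+p_j>\ub$, so $\OPT$ runs every long job untested and $\OPT(I)$ does not depend on the processing times of the long jobs at all, while $\OPT$ does test the short jobs with $p_j<\Ex$.

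\paragraph{Testing order.}
Every test of a long job increments $\mathrm{TotalTest}$ by exactly $1$ irrespective of which long job it is, and all executions of long jobs (the interspersed ones governed by the condition $\mathrm{TotalExec}+p_k\le\mathrm{TotalTest}$ as well as the final batch) are performed in non-decreasing order of $p_j$; hence the order in which the long jobs are tested is irrelevant and we may assume it is non-increasing in $p_j$. Each short job contributes a single chunk of length $1+p_j$ (its test immediately followed by its execution), and an exchange argument shows that moving the longer chunk earlier increases $\sum_j C_j$: if a chunk of length $1+p_a$ precedes one of length $1+p_b$ with $p_a<p_b$, separated by a block of total length $M$ completing $k$ jobs, swapping the two chunks increases the objective by $(k+1)(p_b-p_a)>0$; so we may assume the short jobs are tested in non-increasing order of $p_j$. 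Finally, it is never advantageous that \beat tests a short job before a long job: a long test completes no job, so front-loading all long tests delays every job maximally while completing none of them early. Together these exchanges bring $I$ to an instance tested entirely in non-increasing order of $p_j$, which pins down the phase structure of \beat's schedule used later in the proof of Lemma~\ref{lem:beat_cr}.

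\paragraph{Reducing the processing times.}
With the test order fixed, perturbing the processing times of the short jobs does not change \beat's schedule combinatorially (short tests affect neither $\mathrm{TotalTest}$ nor $\mathrm{TotalExec}$), while $\OPT$ orders the short jobs by SPT; an iterated use of Lemma~\ref{lem:linear-pj} and Proposition~\ref{prop:change-indep-deltap} then lets us assume every short job has $p_j\in\{0,\Ex\}$ (when $\ub<2$ one additionally treats the short jobs with $p_j\in[\ub-1,\Ex]$ separately: $\OPT$ runs these untested independently of $p_j$, so raising their processing time to $\Ex$ leaves $\OPT$ unchanged and only increases $\ALG$). For the long jobs we exploit that $\OPT(I)$ is independent of their processing times, so over perturbations of these values maximizing the ratio is the same as maximizing $\ALG(I)$. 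If two long jobs $j,j'$ have $p_j\le p_{j'}$ with both in $(\Ex,\ub)$, transfer processing time from $j$ to $j'$ keeping $p_j+p_{j'}$ fixed. As long as no value crosses $\Ex$ or $\ub$, \beat's schedule --- in particular which long jobs are executed before the short phase --- is unchanged, so $\ALG$ is linear and $\OPT$ constant in the transfer parameter; Proposition~\ref{prop:change-indep-deltap} lets us move to an endpoint, at which either $p_{j'}$ reaches $\ub$ or $p_j$ reaches $\Ex$ (becoming a short job with $p_j=\Ex\in\{0,\Ex,\ub\}$). Either way one fewer long job has $p_j\in(\Ex,\ub)$; iterating leaves at most one, which is the claim.

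\paragraph{Main obstacle.}
The delicate point, present in all steps, is keeping \beat's schedule combinatorially fixed under the perturbations: the test $\mathrm{TotalExec}+p_k\le\mathrm{TotalTest}$ depends on absolute magnitudes, so a perturbation can cross a ``balance breakpoint'' where the set of long jobs executed before versus after the short phase changes, and in the ``short before long'' exchange moving a long test earlier shifts the times at which $\mathrm{TotalTest}$ increases. I would handle this by restricting each perturbation to a range between two consecutive breakpoints and re-applying the argument, or equivalently by checking that $\ALG$ is a continuous, piecewise-linear, and (in the relevant direction) monotone function of the perturbation, so that its extremum is attained at an endpoint of the claimed kind.
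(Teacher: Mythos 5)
Your proof has two substantive gaps, both concerning the long jobs.

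\textbf{The ``irrelevance'' claim is false.} You argue that since each long test increments $\mathrm{TotalTest}$ by exactly $1$ and executions go in SPT order, the order in which long jobs are tested is irrelevant. This is incorrect: at each step \beat{} executes the smallest $p_k$ \emph{among the already-tested} pending jobs, so the test order determines which long jobs become executable during the loop and which end up delayed. Concretely, take $\ub=2.1$, $E=1.1$, and three long jobs with $p\in\{1.5,1.9,2.0\}$. Testing in increasing order yields the interspersed execution of the job $1.5$ and sum of completion times $18.3$; testing in decreasing order yields the interspersed execution of job $1.9$ and sum $18.7$. The paper does \emph{not} claim irrelevance; it proves, via two careful swap arguments (first that each executed long job is tested immediately before its execution, then that executed long jobs are tested in decreasing $p_j$ order, controlling the fact that an execution can slide behind at most one delayed test since $p_{j'}-p_j<1$), that the decreasing order is the worst case. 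Your proof needs this argument rather than an assertion of irrelevance.

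\textbf{The mass-transfer step does not reach the claimed endpoint.} You transfer processing time between two long jobs in $(E,\ub)$, observe that the schedule is combinatorially invariant only between breakpoints, and then conclude that ``the extremum is attained at an endpoint of the claimed kind.'' Proposition~\ref{prop:change-indep-deltap} only gives you a local extremum at a \emph{piece} endpoint, which could be an internal breakpoint (where the executed/delayed partition changes) rather than one of $\{E,\ub\}$; you give no argument that the ratio cannot be maximised at such a breakpoint, and continuity plus piecewise linearity alone does not rule it out. The paper sidesteps this by separating the long jobs into delayed ones (all of which can simply be raised to $\ub$, since this does not change the schedule at all) and executed ones, and then argues only about the \emph{last} executed long job via a case distinction on whether it is followed by a delayed test or a short test --- in each case identifying a direction in which $\ALG$ strictly increases while $\OPT$ is fixed, so that a worst case cannot have more than one executed long job with $p_j<\ub$. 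This localisation is what lets one control how the schedule reacts (the execution moves behind at most one further delayed test) and is the key missing ingredient in your sketch.

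The remainder of your proof --- that short jobs are tested last and in decreasing order, that $\OPT$ never tests a long job so $\OPT$ is independent of long processing times, and the reduction of short-job processing times to $\{0,E\}$ via Lemma~\ref{lem:linear-pj} together with raising the $[\ub-1,E]$ range to $E$ --- matches the paper's argument and is sound.
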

\begin{proof}
Let an arbitrary worst-case instance be given. Recall that
a job is called \emph{short} if its running time is at most $\Ex = \max\{1, \ub - 1\}$,
and \emph{long} otherwise.
We first argue that the short jobs are tested last.
If not, the test and execution of some short job $j_s$ is followed
by the test of a long job~$j_l$. If $j_l$ is not executed immediately
after its test, moving the test of $j_l$ in front of the test of $j_s$ increases
the cost of the algorithm by~$1$. If $j_l$ is executed immediately
after its test, moving the test and execution of $j_l$ in front of the test
of $j_s$ increases the cost of the algorithm by $p_{j_l}-p_{j_s}>0$.
Hence, in a worst-case instance the short jobs are tested after all the tests of long jobs.

We call a long job an \emph{executed long job} if it is executed
by \beat in line 6 of Pseudocode~\ref{alg:BEAT}, and a \emph{delayed long job} or
\emph{delayed job} if it is excuted in line 15.
The long jobs have processing time larger than $\Ex\geq \ub - 1$, which means
they are not tested by \OPT. Hence, increasing the processing time of a long job
does not increase the optimal cost. For the delayed jobs, increasing their processing
time to $\ub$ increases the algorithm cost, but does not change the schedule,
so in a worst-case instance we can assume that all delayed jobs have $p_j = \ub$.

For the executed long jobs, note that no two jobs are executed without a test in
between, as their processing time is larger than one, the length of a test. We
claim that we can assume that each executed long job is tested immediately before
its execution. If not, consider an executed long job $j$ that was tested earlier
and is executed immediately after the test of another long job~$j'$.
Note that $p_{j'}\ge p_j$ and that all long jobs $j''$ executed between the
test of $j$ and the execution of $j$ satisfy $p_{j''}\le p_j$. Hence, we
can swap the tests of $j$ and $j'$ without affecting the schedule.

%
%
%
%
%
%
%
%
%

\begin{figure}[!htb]
\begin{minipage}[t]{0.48\textwidth}
\centering 
\raisebox{0.20cm}{\makebox[2em][r]{(a)}}
\quad%
\TestL{$j$}\ExecL{1.8}{$j$}%
\Test\Test%
\TestL{$j'$}\ExecL{2.2}{$j'$}%

\bigskip

\raisebox{0.20cm}{\makebox[2em][r]{(b)}}
\quad%
\TestL{$j'$}%
\ExecL{2.2}{$j'$}\Test\Test%
\TestL{$j$}\ExecL{1.8}{$j$}%

\caption{(a) Long job $j$ with $p_j<p_{j'}$ is executed before $j'$;
(b) the tests (and executions) of $j$ and $j'$ have been swapped.}
\label{fig:beatdec0}
\end{minipage}
\hfill
\begin{minipage}[t]{0.48\textwidth}
\centering 
\raisebox{0.20cm}{\makebox[2em][r]{(a)}}%
\quad%
\TestL{$j$}\ExecL{1.8}{$j$}%
\Test\Test%
\TestL{$j'$}\ExecL{2.2}{$j'$}%

\bigskip

\raisebox{0.20cm}{\makebox[2em][r]{(b)}}%
\quad%
\TestL{$j'$}%
\Test\ExecL{2.2}{$j'$}\Test%
\TestL{$j$}\ExecL{1.8}{$j$}%

\caption{(a) Long job $j$ with $p_j<p_{j'}$ is executed before $j'$;
(b) the tests and executions of $j$ and $j'$ have been swapped,
and the execution of $j'$ has moved after the test of a long
delayed job.}
\label{fig:beatdec}
\end{minipage}
\end{figure}%

Next, we claim that we can assume that the executed long jobs are
tested in order of decreasing processing times. If not, there must be
an executed long job~$j$ that precedes an executed long job~$j'$
(potentially with some tests of delayed long jobs in between) such that
$p_j<p_{j'}$. Swap the tests of $j$ and~$j'$. If job $j'$ is still
executed immediately after its test in the new position (see
Figure~\ref{fig:beatdec0}), the cost of the algorithm increases
by $p_{j'}-p_j$. If job $j'$ is executed only after a further test
of a delayed job (see Figure~\ref{fig:beatdec}; this happens if $\mathrm{TotalExec}+p_{j'}>\mathrm{TotalTest}$
holds after testing~$j'$),
the cost of the algorithm increases by
$1+(p_{j'}-p_j)$. Note that the execution of $j'$ cannot move behind
two or more tests of delayed jobs because $\Ex< p_j<p_{j'}\le\ub$ implies
$p_{j'}-p_j<1$. As the cost of the algorithm increases in both cases while
the optimal cost remains unchanged,
the executed long jobs must indeed be tested in order of decreasing processing
times in a worst-case instance.

\begin{figure}[!htb]
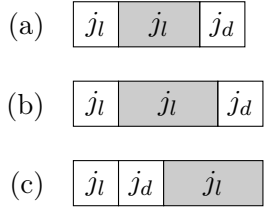

\raisebox{0.17cm}{\makebox[3cm][r]{(a)}}%
\quad%
\TestL{$j_l$}\ExecL{1.8}{$j_l$}\TestL{$j_d$}

\bigskip

\raisebox{0.17cm}{\makebox[3cm][r]{(b)}}%
\quad%
\TestL{$j_l$}\ExecL{2.2}{$j_l$}\TestL{$j_d$}

\bigskip

\raisebox{0.17cm}{\makebox[3cm][r]{(c)}}%
\quad%
\TestL{$j_l$}\TestL{$j_d$}\ExecL{2.2}{$j_l$}

\caption{(a) The execution of the last long executed job $j_l$ is followed
by the test of a delayed job~$j_d$;
(b) increasing $p_{j_l}$ increases the cost of the algorithm;
(c) if the execution of $j_l$ moves after the test of $j_d$,
the increase in cost is even larger.}
\label{fig:lem13case1}
\end{figure}

\begin{figure}[!htb]
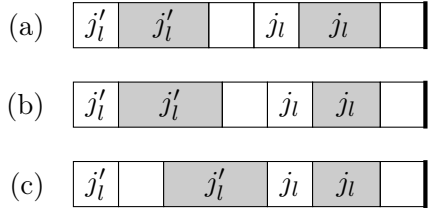

\raisebox{0.17cm}{\makebox[3cm][r]{(a)}}%
\quad%
\TestL{$j_l'$}\ExecL{2.0}{$j_l'$}\Test%
\TestL{$j_l$}\ExecL{1.8}{$j_l$}\Test

\bigskip

\raisebox{0.17cm}{\makebox[3cm][r]{(b)}}%
\quad%
\TestL{$j_l'$}\ExecL{2.3}{$j_l'$}\Test%
\TestL{$j_l$}\ExecL{1.5}{$j_l$}\Test

\bigskip

\raisebox{0.17cm}{\makebox[3cm][r]{(c)}}%
\quad%
\TestL{$j_l'$}\Test\ExecL{2.3}{$j_l'$}%
\TestL{$j_l$}\ExecL{1.5}{$j_l$}\Test

\caption{(a) The execution of the last long executed job $j_l$ is followed
by the test of a short job;
(b) increasing $p_{j_l'}$ and decreasing $p_{j_l}$ increases the cost of
the algorithm;
(c) if the execution of $j_l'$ moves after the test of a delayed
long job, the increase in cost is even larger.}
\label{fig:lem13case2}
\end{figure}
Now, we want to show that we can also assume that
the processing times of all the executed long jobs (with at most one exception)
are equal to~$\ub$.
Consider the last executed long job~$j_l$, and assume that
$p_{j_l}<\ub$ (otherwise, all executed long jobs have
processing time~$\ub$).
Case 1: If $j_l$ is followed by the test of a delayed
long job~$j_d$, we increase $p_{j_l}$ to $\ub$, an increase
of less than~$1$. After this increase, $j_l$ will either
still be executed immediately after its test, or it will
be executed after the test of~$j_d$ (see Figure~\ref{fig:lem13case1}).
The cost of the algorithm has thus increased by at least~$\ub-p_{j_l}>0$.
Case 2:
If the execution of the last executed long job is followed by the
test of a short job and there is at least one other executed long job
with processing time strictly less than~$\ub$, we proceed as follows:
We shift processing time from the last executed long job $j_l$ to
the one before, say $j_l'$, until either $p_{j_l'}=\ub$ or
$p_{j_l}=\Ex$. This increases the completion time of the first of the two jobs
(the execution of that job may potentially also move after the test of a delayed
long job), but does not change the completion time of any other job
(see Figure~\ref{fig:lem13case2}). If $p_{j_l}$
becomes equal to $\Ex$, the job $j_l$ becomes a short job, but the
schedule of the algorithm does not change.
Thus, in both cases the cost of the algorithm can be increased
while keeping the optimal cost unchanged, a contradiction to
the instance being a worst-case instance.
Hence, neither Case~1 nor Case~2 can apply in a worst-case instance,
and therefore we have at most one
executed long job with processing time strictly less than~$\ub$, and that
job (if it exists) is tested last among all long jobs.

Finally we observe that both the algorithm and the optimal schedule test all
short jobs with $p_j\in [0, \ub - 1]$ independent of their actual processing
time. Also the execution order of the algorithm and the optimal schedule solely
depend on the ordering of the processing times. Therefore
Lemma~\ref{lem:linear-pj} implies that we can assume that the short jobs have processing
times either $0$ or $\ub - 1$. Next, observe that increasing the processing
times of all short jobs with processing times in $[ \ub -1, \Ex]$ to $\Ex$
does not change the optimal cost as \OPT can execute these jobs untested
(recall that a job with $p_j=\ub-1$ takes time $\ub$
no matter whether it is tested and executed, or executed untested).
It increases the algorithm cost, however. Thus, we can
assume that in a worst-case instance all short jobs have $p_j \in \{0, \Ex\}$.
It is also clear that in a worst-case instance the short jobs are tested in
order of decreasing processing times by the algorithm, and hence all jobs are tested
in order of decreasing processing times (first the long jobs with processing time
$\ub$, then possibly the one long job with processing time between $E$ and $\ub$,
and finally the short jobs).
\qed
\end{proof}
Consequently, the schedule produced by \beat consists of the following
parts (in this order), see also Figure~\ref{fig:BEATschedule3}:
\begin{itemize}
\item The tests of the $\lambda$ fraction of jobs, that are long jobs, interleaved with executions of the $\eta$ fraction of all jobs, that are also long jobs and that are executed during the ``while there are untested jobs'' loop.
\item The tests and immediate executions of the short jobs, which is a $\sigma = 1 - \lambda$ fraction of all jobs. Let $\delta$ be the fraction of short jobs with $p_j = \Ex$.
\item The executions of the $\psi = \lambda - \eta$ fraction of jobs, that are delayed long jobs, in the ``execute all remaining jobs'' statement.
\end{itemize}
\medskip
\OPT consists of the following parts (in this order), see also Figure~\ref{fig:BEATschedule3}:
\begin{itemize}
\item The tests and immediate executions of the $(1- \delta)\sigma$ fraction of jobs that are short and have processing time~$0$.
\item The untested executions of the $\delta \sigma$ fraction of jobs which are short and have $p_j=E$ and the $\lambda$ fraction of jobs that are long.
\end{itemize}
\begin{figure}
\centering 
\includegraphics[width=11cm]{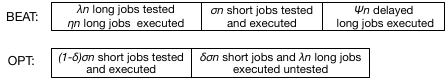}
\caption{Structure of schedules produced by \beat and \OPT.}
\label{fig:BEATschedule3}
\end{figure}

We note that TotalTest has value $\lambda n$ when all long jobs are tested, so the total execution time in Phase~1, which is at least $\ub (n \eta -1) + E$ by Lemma~\ref{lem:beat_adv_algo_structure}, cannot exceed $\lambda n$. As long jobs have $p_j > E\geq 1$, there are always at least as many long jobs tested as are executed. Thus, TotalExec never decreases below $\mathrm{TotalTest} - \ub$, as then some job can be executed. Hence, we have
\begin{equation}\label{eq:nalpha}%
	\ub \eta \le \lambda + O(1/n) <  \bar p \eta + O(1/n).
\end{equation}
Furthermore, we have $\lambda = \eta + \psi$, which yields
\begin{equation}\label{eq:nbeta}
	\psi \le \left(1 - 1 / \ub \right) \lambda + O(1/n).
\end{equation}
%
	We first consider the algorithm schedule.
\begin{lemma}\label{lem:beat_algo_cost}
	For 
	a fraction $\delta \in [0,1]$ of short jobs with processing time $p_j = \Ex$, we can bound the algorithm cost by
\begin{align*}
	 \ALG	\le &		\frac{n^2}{2} \left[ \lambda^2 \left(\ub + 2  - \frac{1}{\ub} \right)
	 						+ \sigma^2 ((1 + \Ex) (2 \delta - \delta^2) + (1 - \delta)^2) \right.	\\
	 				 &		\qquad \left.	+ 2\lambda \sigma \left(2 + \left(1 - \frac{1}{\ub}\right)(1 + \Ex \delta)\right)\right] + O(n).
\end{align*}
\end{lemma}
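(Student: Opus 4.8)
The plan is to use the structural description of a worst-case schedule for \beat that precedes the lemma (established in Lemma~\ref{lem:beat_adv_algo_structure} and the bullet-point decomposition after it), and then to compute the cost block by block in the style of the example in the Preliminaries section, bounding each contribution from above. Writing $n$ for the number of jobs, the schedule has three consecutive parts: (i) a Phase-1 segment consisting of the $\lambda n$ tests of long jobs interleaved with the $\eta n$ executions of ``executed long jobs'', (ii) the $\sigma n$ tests-and-immediate-executions of short jobs, of which a $\delta$-fraction have processing time $\Ex$ and a $(1-\delta)$-fraction have processing time $0$, and (iii) the $\psi n = (\lambda-\eta)n$ executions of the delayed long jobs, each of length $\ub$. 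I would compute, for each part, the sum of completion times of the jobs inside it plus the delay it imposes on all later jobs, and then sum the three contributions.

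The key steps, in order, are as follows. First, bound the length of Phase~1: its total length is $\lambda n$ (the tests) plus the total execution time of executed long jobs, and by \eqref{eq:nalpha} the latter is between $\ub\eta n$ and $\ub\lambda n$, so the length of Phase~1 is at most $\lambda n(1 + \ub) + O(1)$; more crucially, since the executed long jobs each have length at most $\ub$ and there are $\eta n$ of them with $\ub\eta \le \lambda$, the \emph{within-Phase-1} sum of completion times plus its delay on the remaining $(1-\lambda)n+\psi n$ jobs can be bounded by a quadratic in $\lambda$ whose leading coefficient, after using $\ub\eta\le\lambda<\ub\eta+O(1/n)$ to eliminate $\eta$, is $\ub + 2 - \tfrac1\ub$ — this is exactly the $\lambda^2$ term in the statement, and it is the step that most needs care. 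Second, handle Phase~2: it is a block of $\sigma n$ short jobs tested and executed, contributing $(1+\Ex)\delta$ time per long-short job and $1$ per zero-short job; its internal sum of completion times is $\tfrac{\sigma^2 n^2}{2}\big((1+\Ex)(2\delta-\delta^2) + (1-\delta)^2\big) + O(n)$ (the standard ``sum of completion times of a block with given lengths'' computation), plus a delay term on the $\psi n$ delayed jobs. Third, Phase~3 is a block of $\psi n$ executions of length $\ub$, contributing $\tfrac{\psi^2 n^2}{2}\ub + O(n)$ internally and nothing afterward; using $\psi\le(1-\tfrac1\ub)\lambda$ from \eqref{eq:nbeta} this is absorbed into the $\lambda^2$ coefficient. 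Fourth, collect the cross terms: the length of Phase~1 (at most $\lambda n(1+\ub)+O(1)$, but one should instead bound the \emph{delay on Phase~2 and Phase~3 jobs} carefully — using that Phase-1 length contributes to all $\sigma n + \psi n$ later jobs) times $\sigma n$, plus the length of Phase~2 times $\psi n$; after substituting $\psi\le(1-\tfrac1\ub)\lambda$ and the length bounds, the coefficient of $\lambda\sigma n^2$ works out to $2 + (1-\tfrac1\ub)(1+\Ex\delta)$, matching the stated $2\lambda\sigma$ coefficient.

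The main obstacle I expect is the bookkeeping in the first step: the executed long jobs in Phase~1 do not all have length $\ub$ (the structural lemma allows one exceptional job with $p_j\in(\Ex,\ub)$ and, more importantly, $\eta$ is a free fraction not pinned down exactly), so one cannot simply write the within-Phase-1 cost as a clean closed form; instead one must \emph{over-estimate} it by replacing every executed long job's contribution by $\ub$ and then use the inequality $\ub\eta\le\lambda$ (from \eqref{eq:nalpha}) to re-express everything in terms of $\lambda$ alone, checking that the resulting quadratic coefficient is exactly $\ub+2-\tfrac1\ub$ and not larger. A secondary subtlety is that the delayed-long-job count $\psi=\lambda-\eta$ depends on $\eta$, so its contribution must likewise be bounded via $\psi\le(1-\tfrac1\ub)\lambda$ before combining; one has to be careful that these two substitutions ($\eta$ as large as possible vs.\ as small as possible) are applied consistently so that the final bound is genuinely an upper bound. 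Everything else is the routine ``block decomposition'' arithmetic already illustrated in the Preliminaries, carried to $O(n)$ precision, and I would relegate the explicit polynomial algebra to the companion Mathematica notebook.
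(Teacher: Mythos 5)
Your proposal is correct and follows essentially the same route as the paper: decompose the schedule into the three blocks, compute each block's internal cost plus the delay it imposes on later jobs, and then use the (near-)equalities $\ub\eta=\lambda+O(1/n)$ from \eqref{eq:nalpha} and $\psi=(1-1/\ub)\lambda+O(1/n)$ from \eqref{eq:nbeta} to express everything in $\lambda$ alone. Two small points to tidy when writing out the details: the coefficient $\ub+2-1/\ub$ arises from the combined contributions of Phase~1 internal cost, Phase~1's delay on Phase~3, \emph{and} Phase~3's internal cost (not from Phase~1 alone), and your worry about ``consistent'' substitutions for $\eta$ and $\psi$ is moot because \eqref{eq:nalpha} pins $\eta$ down to $\lambda/\ub$ up to $O(1/n)$ rather than leaving it free.
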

\begin{proof}
	There is an $\eta$ fraction of jobs completed in the first part, each executed when TotalExec$+ p_j \le$ TotalTest in the algorithm. Thus, the completion time of the $i$-th such job is at most $2i \ub +1$. The sum of these completion times is $\ub \eta^2 n^2 + O(n)$.
A fraction of $\delta \sigma$ jobs is short and has
$p_j = E$.
They are executed before the other $(1 - \delta)\sigma$ fraction of jobs with $p_j = 0$ is executed. This means the completion times of the short jobs contribute
	\begin{align*}
		 \frac{n^2}{2}\left[(1 + \Ex) \delta^2 \sigma^2 + (1-\delta)^2 \sigma^2 + 2 (1 + \Ex) \delta \sigma (1-\delta) \sigma \right]+ O(n)	\\
		 =	\frac{n^2}{2} \left[ \sigma^2 ((1 + \Ex) (2 \delta - \delta^2) + (1 - \delta)^2) \right] + O(n) .
	\end{align*}
Additionally there is an $\psi$ fraction of jobs, which are executed at the end of the schedule, each with processing time $\ub$. Thus their contribution to the algorithm cost is $\ub \psi^2 n^2/2 + O(n)$. The execution of the fraction $\sigma$ of short jobs starts latest at time $n \lambda + \ub n \eta$, and the execution of the fraction $\psi$ of jobs is delayed by at most $n \lambda + \ub n \eta + (1 + \Ex \delta)n \sigma$. Thus, the total objective value of \beat is at most:
\begin{align*}
\ALG&	 \le		\frac{n^2}{2} \left[ 2 \ub \eta^2
						 + \sigma^2 ((1 + \Ex) (2 \delta - \delta^2) + (1 - \delta)^2)
						 + \ub \psi^2 \right. 	\\
					&	\qquad  \left. + 2(\lambda + \ub \eta)\sigma + 2(\lambda + \ub \eta + (1 + \Ex \delta)\sigma)\psi \right] + O(n).
\end{align*}
By \eqref{eq:nalpha} and \eqref{eq:nbeta}, we know that $\eta \leq \lambda/\ub + O(1/n)$ and $\psi \le \left(1-1/\ub\right) \lambda + O(1/n)$. Together with $\eta + \psi = \lambda$, this yields the desired bound.
\qed
\end{proof}

\begin{lemma}\label{lem:beat_cr}
	For uniform upper limit $\ub \in [$1.5, 3$]$, the asymptotic competitive ratio of \beat is at most
	\[
		\frac{1 + 2 (-2 + \ub) \ub + \sqrt{(1 - 2 \ub)^2 (-3 + 4 \ub)}}{2 (-1 + \ub) \ub}.
	\]
\end{lemma}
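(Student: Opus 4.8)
The plan is to feed the cost bound of Lemma~\ref{lem:beat_algo_cost} and the exact value of $\OPT$ on the structured worst-case instances into a two-variable optimization parametrized by $\ub$. Using the structure of $\OPT$ described right after Lemma~\ref{lem:beat_adv_algo_structure}---$\OPT$ tests and immediately executes the $(1-\delta)\sigma n$ short jobs of processing time $0$, and then executes untested the $\delta\sigma n$ short jobs of processing time $\Ex$ together with the $\lambda n$ long jobs, each of the latter two groups taking time exactly $\ub$ per job---one obtains, with $w:=\lambda+\sigma\delta$ the fraction of jobs of nonzero processing time and $\sigma=1-\lambda$,
\[
  \OPT=\frac{n^2}{2}\bigl(1+(\ub-1)w^2\bigr)+O(n).
\]
Substituting $v:=\sigma\delta$ and $u:=\sigma(1-\delta)$ (so $\lambda+u+v=1$) and dividing both this and the bound of Lemma~\ref{lem:beat_algo_cost} by $n^2/2$, both become homogeneous quadratic forms $\ALG'(\lambda,u,v)$ and $\OPT'(\lambda,u,v)$, so that $\lim_{n\to\infty}\sup_I\ALG(I)/\OPT(I)\le\max\ALG'/\OPT'$ over the probability simplex in $(\lambda,u,v)$.

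The next, and main, step is to show that this maximum is attained on the face $v=0$ (equivalently $\delta=0$), i.e.\ on instances built only from $0$-length short jobs and $\ub$-length long jobs. On the rest of the boundary the ratio is easy to control: at $\lambda=1$ it equals $(\ub+2-1/\ub)/\ub$, at $\lambda=0$ it equals $1$, and on the face $u=0$ (all short jobs of length $\Ex$) a short computation shows the ratio is again maximized at $\lambda=1$ with value $(\ub+2-1/\ub)/\ub$; all three are strictly below $\ratio^{BEAT}$ for $\ub\in[1.5,3]$. Ruling out interior critical points amounts to verifying that the quadratic form $\ratio^{BEAT}\cdot\OPT'-\ALG'$ is nonnegative on the positive orthant, with equality precisely at the claimed maximizer; concretely, its restriction to $v=0$ is a nonnegative multiple of a perfect square in $(\lambda,u)$, and one checks that moving in the remaining ($v$-)direction from that square only increases its value. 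This is exactly the kind of second-order verification the paper delegates to Mathematica, and I expect it to be the main obstacle. A minor technical point is that the ranges $\ub\ge2$ (where $\Ex=\ub-1$) and $\ub\in[1.5,2)$ (where $\Ex=1$) must be treated separately, but all $\Ex$-dependent terms of $\ALG'$ carry a factor $\delta$ and hence vanish on the face $\delta=0$ where the maximum lives, so the closed form below is the same in both cases.

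Finally, on the face $\delta=0$ with $\sigma=1-\lambda$ the ratio is the rational function
\[
  \rho(\lambda)=\frac{(\ub^2-3\ub+1)\lambda^2+(4\ub-2)\lambda+\ub}{\ub\bigl((\ub-1)\lambda^2+1\bigr)} .
\]
Setting $\rho'(\lambda)=0$ and clearing denominators, the cubic terms cancel and the critical-point condition collapses to $(\ub-1)\lambda^2+\lambda-1=0$, whose only root in $[0,1]$ for $\ub\in[1.5,3]$ is $\lambda^\star=\frac{-1+\sqrt{4\ub-3}}{2(\ub-1)}$ (the other root is negative). Plugging this back and using the relation $(\ub-1)(\lambda^\star)^2=1-\lambda^\star$ to simplify---so that the denominator becomes $\ub(2-\lambda^\star)$---the value $\rho(\lambda^\star)$ reduces to
\[
  \rho(\lambda^\star)=\frac{2\ub^2-4\ub+1+(2\ub-1)\sqrt{4\ub-3}}{2\ub(\ub-1)},
\]
which is exactly $\ratio^{BEAT}$ since $1+2(\ub-2)\ub=2\ub^2-4\ub+1$ and $\sqrt{(1-2\ub)^2(4\ub-3)}=(2\ub-1)\sqrt{4\ub-3}$ for $\ub>\tfrac12$. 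As $\lambda^\star$ is the unique critical point of $\rho$ in $[0,1]$ and $\rho(\lambda^\star)$ exceeds both endpoint values, it is the maximum on that face, and combined with the boundary and copositivity arguments of the previous paragraph this shows $\ALG'/\OPT'\le\ratio^{BEAT}$ throughout the simplex, proving the lemma.
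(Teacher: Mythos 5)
Your proposal follows essentially the same route as the paper: plug the cost bound of Lemma~\ref{lem:beat_algo_cost} and the structured $\OPT$ into a parametrized ratio, reduce to the boundary case $\delta=0$, and optimize over a single remaining variable; the paper does the $\delta=0$ reduction and the subsequent optimization with Mathematica (after the substitution $\sigma=\alpha\lambda$), and you do the univariate part by hand but likewise delegate the multivariate copositivity check to a computer-algebra step, so the two arguments carry equal rigor. One small inaccuracy: on the edge $\lambda=0$ the ratio is not identically $1$; as a function of $\delta$ it peaks at $\sqrt{\ub}$ (e.g.\ $\sqrt{3}\approx1.732$ when $\ub=3$), which is still strictly below $\ratio^{BEAT}$ on $[1.5,3]$, so your conclusion that the maximum does not lie on that edge stands, but the reasoning as written should say so rather than claim the ratio equals~$1$ there.
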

\begin{proof}
	We bounded the algorithm cost in Lemma~\ref{lem:beat_algo_cost} and thus first consider the optimal cost. In \OPT, first a fraction $(1 - \delta) \sigma$ of the short jobs is tested and executed with processing time $0$. Then the remaining fraction $\delta \sigma$ of short jobs is executed with processing time $\ub$ without test. Thus their contribution to the sum of completion times is
\[	 \frac{n^2}{2} \left[\sigma^2\left((1-\delta)^2 + 	\ub \delta^2 	+ 	2 \delta (1-\delta)\right)\right] + O(n)
\	=	\frac{n}{2}\left[\sigma^2 \left((\bar p - 1) \delta^2 + 1\right)\right] + O(n).\]
	 All long jobs are executed untested at the end of the schedule and take $\ub$ time units. Their sum of completion times is $\ub \lambda^2 n^2/ 2 + O(n)$ and they are each delayed by $\sigma n(1 + (\ub -1) \delta)$), giving:
$$
\OPT =	\frac{n^2}{2} \left[ \lambda^2 \ub
				+ 	\sigma^2 ((\ub - 1) \delta^2 + 1)
				+ 2 \lambda \sigma (1+ (\ub -1) \delta)\right] + O(n).
$$
Then the asymptotic competitive ratio $\ratio$ for upper limit $\bar p$ in $[1.5, 3]$
$$
\ratio = \frac{\lambda^2 \left(\ub + 2 - \frac{1}{\ub}\right)
	+ 	\sigma^2 ((1 + \Ex) (2 \delta - \delta^2) + (1 - \delta)^2)
	+ 2\lambda \sigma (2 + \left(1 - \frac{1}{\ub}\right)(1 + \Ex \delta ))}
{\ub \lambda^2
	+ \sigma^2 ((\ub -1) \delta^2 + 1)
	+ 2\lambda \sigma (1+ (\ub -1) \delta)}.
$$
For $\sigma = 0$ or $\lambda = 0$ this fulfills the claim. For the other values we set $\sigma = \alpha \lambda$ so the ratio becomes:
$$
\frac{\ub + 2 - \frac{1}{\ub}
	+ \alpha^2 ((1 + \Ex) (2 \delta - \delta^2) + (1 - \delta)^2)
	+ 2 \alpha (2 + \left(1 - \frac{1}{\ub}\right)(1 +  \Ex \delta))}
{\ub +
\alpha^2 ((\ub -1) \delta^2 + 1)
+ 2\alpha (1+ (\ub -1)\delta)}.
$$
We take the term to Mathematica to find the best bounds for it. For the case $1.5 < \ub < 2$ we show that the adversary chooses $\delta = 0$ and $\alpha$ such that the first derivative in $\alpha$ equals $0$. Otherwise, in the case $2 \leq \ub \leq 3$, we show for $\delta = 0$ that we get exactly the same expression as for $\ub < 2$. We prove the adversary chooses this case, which means the competitive ratio is bounded by the following function
\begin{align*}
	&&\frac{1 + 2 (-2 + \ub) \ub + \sqrt{(1 - 2 \ub)^2 (-3 + 4 \ub)}}{2 (-1 + \ub) \ub}.
	\end{align*}
    \qed
\end{proof}

\subsubsection{Analysis of \BOUND for uniform $\bar{p}$}
\label{subsec:thresh}

In this section we analyze Algorithm \BOUND (see Section~\ref{sec:bound2})
for instances with uniform upper limit $\bar{p}>2$ and derive a competitive ratio as a function of $\p$.

Recall that for $\bar{p}>2$, \BOUND tests all jobs. It executes a job immediately
if $p_j\le 2$, and defers it otherwise. We have proved in Lemma~\ref{lem:wc-short-jobs} that we may assume that all jobs with $p_j\leq 2$ have execution times either $0$ or $2$. We also argued that in a worst case, \BOUND tests first all long jobs, i.e., jobs $j$ with $p_j>2$, then follow the short jobs with tests (first length-$2$ jobs and then length-$0$ jobs), and finally \BOUND executes the deferred long jobs in increasing order of processing times.

An optimum solution tests a job $j$ only if $p_j + 1 < \p$. We show next that such long jobs to be tested in an optimal solution do not exist.

\begin{lemma}\label{lem:no-medium-jobs}
	There is a worst-case uniform instance with short jobs that have processing times $0$ or $2$ and long jobs with processing time $p_j=\p$. Furthermore, none of the long jobs is tested in an optimal solution.
\end{lemma}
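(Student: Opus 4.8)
The plan is to start from a worst-case instance already normalised by Lemma~\ref{lem:wc-short-jobs}, so that every short job ($p_j\le 2$) has $p_j\in\{0,2\}$, and to recall that for uniform $\p>2$ algorithm \BOUND tests every job and produces the three-phase schedule: tests of all long jobs, then tests-with-immediate-execution of the short jobs (length-$2$ jobs before length-$0$ jobs), then the deferred long jobs in order of non-decreasing $p_j$. An optimal schedule tests $j$ only when $p_j+1<\p$, so a long job ($p_j>2$) can be tested by \OPT only when $p_j<\p-1$, which is possible only for $\p>3$. Hence for $2<\p\le 3$ the second assertion of the lemma is automatic, and only the first remains, which I handle at the end of this plan; for $\p>3$ I first eliminate the long jobs that \OPT tests, which I call the \emph{medium} jobs ($2<p_j<\p-1$).

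Both \OPT and \BOUND test all medium jobs, and in either schedule the relative positions of the medium jobs' tests and executions are determined solely by the relative order of their processing times: in \BOUND they occupy slots in the long-job test phase and in the final block (ordered by $p_j$), while in \OPT each is executed right after its test, i.e.\ with ``effective length'' $1+p_j\in(3,\p)$, slotted among the fixed short jobs, whose effective lengths are $1$ and $3$. Therefore, on any range of medium processing times that keeps every job medium and preserves these orders, both $\ALG$ and $\OPT$ are linear, and Proposition~\ref{prop:change-indep-deltap} applies. Iterating it in the manner of the proof of Lemma~\ref{lem:linear-pj}, I collapse the medium processing times onto the endpoints of their ranges: each step either makes two medium jobs coincide, or drives a medium job up to $p_j=\p-1$ — where \OPT is indifferent between testing it and running it untested (both occupy total time $\p$), so \OPT may be taken not to test it — or drives a medium job down to $p_j=2$, merging it into the pool of length-$2$ short jobs. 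Repeating until no medium job survives yields a worst-case instance in which \OPT tests no long job.

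It remains to take all long processing times equal to $\p$. Once \OPT executes every long job untested, each such execution has length exactly $\p$ independently of $p_j$, so raising every long job's processing time to $\p$ does not change $\OPT$. It does not change the first two phases of \BOUND either (the test phase has length equal to the number of long jobs, and the short-job phase is unaffected by long jobs), so the final block of deferred long-job executions still starts at the same time; and raising a processing time never decreases the optimal $\sum_j C_j$ of a batch with a fixed start, so \BOUND's cost does not decrease. Hence the ratio does not decrease, and we obtain a worst-case instance with short processing times in $\{0,2\}$ and all long processing times equal to $\p$; then $p_j+1=\p+1>\p$ for every long job, so no long job is tested by \OPT.

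The delicate step is the elimination of the medium jobs when $\p>3$. One has to verify the structure-preservation hypotheses of Proposition~\ref{prop:change-indep-deltap}/Lemma~\ref{lem:linear-pj} and, above all, handle the discontinuity of \BOUND's cost at $p_j=2$: a job with $p_j=2$ is executed right after its test, whereas a marginally longer job is deferred to the very end of the schedule, so a medium job driven down to $p_j=2$ changes class and the downward move must be shown not to decrease the ratio across this jump (plausibly via the identity $\partial\ALG/\partial p_j=\partial\OPT/\partial p_j$ for a medium job, which makes ``downward'' the favourable direction but requires comparing the two sides of the discontinuity directly). One must also exclude accidental ties of \OPT's effective lengths — a medium job with $p_j$ near $\p-1$ against a genuine $p_j=\p$ job, or the length-$2$ short jobs against the interval $(3,\p)$ when $\p$ is close to $3$. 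Away from these boundary phenomena the reasoning is routine.
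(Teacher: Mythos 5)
Your plan follows the same broad contour as the paper's argument (normalize via Lemma~\ref{lem:wc-short-jobs}, dispose of $2<\p\le 3$ trivially, eliminate the tested long jobs, then push remaining long jobs to $\p$), but it leaves the single non-routine step unresolved — and that step is precisely what the paper's proof is about. You correctly notice that \BOUND's cost is discontinuous at $p_j=2$: a job with $p_j=2$ is executed in the short-job phase, while a job with $p_j=2+\varepsilon$ is deferred to the final block. Because of this, Lemma~\ref{lem:linear-pj} (and Proposition~\ref{prop:change-indep-deltap}) simply do not apply on any closed interval containing $2$ in its interior or boundary: the lemma's hypothesis that the schedules ``do not change'' is violated as soon as a job crosses $p_j=2$. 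The same problem recurs at the other end $p_j=\p-1$, where \OPT's classification of the job changes; you wave at this by saying \OPT is ``indifferent,'' but indifference is not the lemma's hypothesis. So your iteration collapses the medium jobs only to the \emph{open} endpoints, and the claimed reduction to $\{2,\p-1\}$ does not follow from the cited tools. You flag the issue yourself (``requires comparing the two sides of the discontinuity directly''), but you do not do the comparison, and the ``plausibly via the identity $\partial\ALG/\partial p_j=\partial\OPT/\partial p_j$'' is not established either, since the slopes are being compared on two different schedule structures.

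The paper's proof is shorter because it sidesteps the iteration entirely. First it raises the untested long jobs to $p_j=\p$ (same as your last paragraph, done up front). Then, observing that both \BOUND and \OPT schedule all tested long jobs as a contiguous block in the same order, with exactly the same set of jobs (the untested $p_j=\p$ jobs) after them, it lowers \emph{all} tested long jobs simultaneously to $2+\varepsilon$ in a single application of Proposition~\ref{prop:change-indep-deltap} (same slope, so decreasing is favorable) — no iteration, no crossing of the $p_j=2$ boundary. Finally it takes the limit $\varepsilon\to 0$ with an explicit accounting: turning the first such job $\ell$ into a $p_j=2$ short job moves it ahead of the $a$ jobs with $p_j=0$, decreasing $C_\ell$ by $a$ but increasing each of those $a$ completion times by $2$, for a net change of $2a-a=a\ge 0$ in \BOUND's cost, while \OPT is unaffected up to the infinitesimal $\varepsilon$. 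Repeating this absorbs all tested long jobs into the short $p_j=2$ class. That last calculation is exactly the ``direct comparison across the discontinuity'' you recognized you needed; without it (or an equivalent), your proof has a genuine gap.
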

\begin{proof}
	Consider an instance with short jobs that have processing times $0$ or $2$ (Lemma~\ref{lem:wc-short-jobs}). We may increase the processing time of untested long jobs to their upper limit $\p$ without changing the optimal schedule. This cannot decrease the worst-case ratio as the algorithm's objective value can only increase.

	It remains to consider the long jobs that are tested by an optimal solution. We show that we may assume that those do not exist. This is trivially true if $2 < \p < 3$. Then testing a long job $j$ costs $1+p_j>3$ which is greater than running the job untested at $\p<3$, and thus, an optimal solution would never test it.

	Assume now that $\p\geq 3$. \BOUND schedules any long job {\em after} all short jobs; first it runs long tested jobs with total execution time $1+p_j<\p$ in non-decreasing order of $p_j$ and then the untested jobs with execution time $\p$. As all untested jobs have processing time $p_j=\p$, we may assume that the algorithm and the optimum schedule long jobs in the same order. Reducing the processing times of all tested long jobs to $2+\varepsilon$ for infinitesimally small $\varepsilon>0$ does not change the schedule for any of the two algorithms, and thus, by Proposition~\ref{prop:change-indep-deltap}, the ratio of the objective values of the algorithm and the optimum does not decrease.


	Now, we argue that reducing the processing times of tested long jobs from $2+\varepsilon$ to~$2$ (thus making them short jobs) does not affect the optimal objective value, because~$\varepsilon$ is infinitesimally small, and can only increase the objective value of the algorithm. Consider the first long job that is tested by the optimum and the algorithm, say job~$\ell$. Consider the worst-case schedule of our algorithm for the new instance in which~$\ell$ is turned into a short job with effectively the same processing time. The job $\ell$ is tested and scheduled just before the short jobs with $p_j=0$ instead of after them.
		Let $a$ be the number of those short jobs. Then this change in $p_{\ell}$ to $2$ improves the completion time of job $\ell$ by $a$ and increases the completion time of $a$ jobs by $2$, so the net change in the objective value of the algorithm is $2a-a=a\ge 0$. The argument can be repeated until no tested long jobs are~left.
		\qed
	\end{proof}

\begin{theorem}\label{thm:thresh-p}
	For scheduling with testing 
jobs with uniform upper limit $\p>2$ with the objective of minimizing the sum of completion times, Algorithm \BOUND has an asymptotic competitive ratio at most
	\[
	\ratio = \left\{
               \begin{array}{ll}
                \frac{-3 + \p + \sqrt{-15 + \p (18 + \p)}}{2(\p-1)} \qquad & \textup{if }  \p\in (2,3)\\
				\sqrt{3}\approx 1.73 & \textup{if }  \p \geq 3.
			 	\end{array}
             \right.
	\]
	The function for small $\p$ is a monotone function decreasing from $2$ to $\sqrt{3}$ in the limits for $\p \in (2,3)$.
\end{theorem}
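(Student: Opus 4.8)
The plan is to first collapse a worst-case instance to three job types and then solve the resulting parametric optimization. By Lemma~\ref{lem:all_jobs_greater_c} every job has $\p_j=\p$, and by Lemmas~\ref{lem:wc-short-jobs} and~\ref{lem:no-medium-jobs}, together with the structural description of \BOUND's worst-case schedule recalled just before Lemma~\ref{lem:no-medium-jobs}, I may assume the instance consists of an $a$-fraction of jobs with $p_j=0$, a $b$-fraction with $p_j=2$, and a $c$-fraction with $p_j=\p$, where $a,b,c\ge 0$ and $a+b+c=1$, that \BOUND first tests all long jobs, then tests and immediately executes the $p_j=2$ jobs, then the $p_j=0$ jobs, and finally executes the deferred long jobs, and that \OPT follows the \algoname{SPT} order. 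Accounting block by block exactly as in Section~\ref{sec:bound2}, I would obtain closed-form expressions for $\lim_{n\to\infty}\ALG/n^2$ and $\lim_{n\to\infty}\OPT/n^2$ as quadratic forms in $(a,b,c)$. Here it matters that for $\p\in(2,3)$ the optimum runs both the $p_j=2$ jobs and the long jobs untested (so that $\OPT$ then depends only on $b+c$), whereas for $\p\ge 3$ it tests the $p_j=2$ jobs; this change in \OPT's behaviour is precisely what produces the two regimes in the claimed bound. The asymptotic competitive ratio of \BOUND then equals the maximum of the ratio of these two quadratic forms over the simplex $\{a,b,c\ge 0,\ a+b+c=1\}$.

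Eliminating $a=1-b-c$, this becomes the maximization of a ratio of two quadratics over the triangle $\{b,c\ge 0,\ b+c\le 1\}$. As elsewhere in the paper, such a maximum is attained either at an interior critical point (a common root of the two first partials) or on the relative boundary; on each edge the objective reduces to a ratio of univariate quadratics, whose maximum is either at an endpoint of the edge or at a critical point given by a quadratic equation. I expect the global maximum to lie on the edge $c=0$, where the ratio equals $\frac{1+4b-2b^2}{1+(\p-1)b^2}$ for $\p\in(2,3)$ and $\frac{1+4b-2b^2}{1+2b^2}$ for $\p\ge 3$; its unique critical point in $[0,1]$ solves $2(\p-1)b^2+(\p+1)b-2=0$, and since at a critical point of $f/g$ the value equals $f'/g'=\frac{4-4b}{2(\p-1)b}$, substituting the relevant root yields exactly $\frac{-3+\p+\sqrt{-15+\p(18+\p)}}{2(\p-1)}$ for $\p\in(2,3)$ and $\sqrt 3$ for $\p\ge 3$, the two branches agreeing at $\p=3$. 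What remains is to certify that no other candidate beats this: on $b=0$ the ratio $1+\frac{2c}{1+(\p-1)c^2}$ is maximized at $c=1/\sqrt{\p-1}$ with value $1+\frac{1}{\sqrt{\p-1}}$, on $a=0$ the ratio $\frac{3+2c+(\p-3)c^2}{\p}$ is maximized at $c=1$ with value $1+\frac2\p$ (with an analogous computation for $\p\ge 3$), the simplex vertices give at most $\frac32$ or $1+\frac2\p$, and the interior critical point contributes one further inequality in $\p$. I would discharge all these comparisons with the algebraic solver used throughout the paper.

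For the final assertion I would set $\varphi(\p)=\frac{-3+\p+\sqrt{-15+\p(18+\p)}}{2(\p-1)}$, check $\varphi'(\p)<0$ on $(2,3)$ by a short symbolic differentiation, and compute the one-sided limits $\varphi(\p)\to 2$ as $\p\to 2^+$ — consistent with the tightness example for \BOUND in the non-uniform analysis — and $\varphi(\p)\to\sqrt 3$ as $\p\to 3^-$, which also reconfirms that the piecewise bound is continuous at $\p=3$.

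The main obstacle is the global optimization over the simplex: one must rule out the interior critical point and the facets $a=0$ and $b=0$ as maximizers, and correctly track the change in \OPT's handling of the $p_j=2$ jobs at the threshold $\p=3$, which is the origin of the two regimes. These reduce to a handful of inequalities in the single variable $\p$, most cleanly certified with the computer-algebra assistance the paper already relies on for its other parametric worst-case analyses.
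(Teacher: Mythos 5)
Your proposal is correct and follows essentially the same route as the paper: the same reduction to a three-type worst-case instance (an $\alpha$-fraction with $p_j=0$, a $\beta$-fraction with $p_j=2$, a $\gamma$-fraction with $p_j=\p$) via Lemmas~\ref{lem:all_jobs_greater_c}, \ref{lem:wc-short-jobs} and~\ref{lem:no-medium-jobs}, the same block-by-block $n^2$-cost expressions for \BOUND and \OPT, the same $\p=3$ split driven by whether \OPT tests the $p_j=2$ jobs, and the same appeal to computer-algebra assistance for the final parametric optimization. Your more explicit boundary analysis—observing that the maximizer lies on the edge $\gamma=0$ (consistent with the paper's stated optimizer $\alpha+\beta=1$ for $\p\ge3$), solving the edge critical-point equation $2(\p-1)\beta^2+(\p+1)\beta-2=0$, and using the $f/g=f'/g'$ identity to simplify the value—is a slightly more transparent version of the paper's terse \emph{``Using Mathematica we verify''}, but it is not a genuinely different argument.
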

\begin{proof}
Consider a worst-case instance according to Lemma~\ref{lem:no-medium-jobs}. Let $\alpha n$ denote the number of short jobs of length $0$, let $\beta n$ be the number of short jobs of length $2$, and let $\gamma n$ be the number of long jobs with $p_j=\p$. There are no other jobs, so $\alpha+\beta+\gamma=1$. Recall, that we may assume that \BOUND's schedule is as follows: first $\gamma n$ tests, $\beta n$ tests and executions of length-$2$ jobs, then tests and executions of $\alpha n$ length-$0$ jobs, followed by the execution of long jobs with $p_j=\p$. 	The objective value is
	\begin{align}
		\ALG= n^2\left(\gamma (\alpha + \beta + \gamma) + \frac{\beta^2}{2}\cdot 3 + 3\beta (\alpha + \gamma) + \frac{\alpha^2}{2} + \alpha \gamma + \frac{\gamma^2}{2} \cdot \p \right)+ O(n).\label{eq:alg}
	\end{align}

	To estimate the objective value of an optimal solution, we distinguish two cases for the upper limit $\p$.

	\paragraph{Case: $\p > 3$.} In this case, an optimal solution would test all short jobs, first the length-$0$ jobs and then the length-$2$ jobs. Then follow all long jobs without testing them (Lemma~\ref{lem:no-medium-jobs}). 	Using the above notation, we have an optimal objective value
	\begin{align*}
		\OPT &= n^2\left(\frac{\alpha^2}{2} + \alpha(\beta+\gamma) + \frac{\beta^2}{2}\cdot 3 + 3\beta\gamma + \frac{\gamma^2}{2} \cdot \p\right) + O(n).
	\end{align*}

	Using $\gamma = 1-\alpha-\beta$, the asymptotic competitive ratio for any $\p >3$ can be bounded by
	\[
	\frac{2 - \alpha^2 - 2 \alpha \beta + (4 - 3 \beta) \beta + \p (-1 + \alpha + \beta)^2}{-\alpha^2 + \alpha (2 - 6 \beta) - 3 (-2 + \beta) \beta + \p (-1 + \alpha + \beta)^2},
	\]
which has its maximum at $\sqrt{3}$ for $\alpha = (3 - \sqrt{3})/2$ and $\beta=(\sqrt{3}-1)/2$.

\paragraph{Case: $\p \leq 3$.} In this case, an optimal solution tests only short jobs with $p_j=0$ and executes all other jobs untested, also short jobs with $p_j=2$. The value of an optimum schedule is
		\begin{align*}
			\OPT &= n^2\left( \alpha^2/2 + \alpha(\beta + \gamma) + \p \cdot \beta^2/2  + \p \cdot \beta \gamma + \p \cdot \gamma^2/2 \right) + O(n).
		\end{align*}
	With the value of \BOUND's solution given by Equation~\eqref{eq:alg}, the asymptotic competitive ratio is
	\[
	 \ratio = \frac{\alpha^2 + 3 \beta^2 + 8 \beta \gamma + \alpha (6 \beta + 4 \gamma) + \gamma^2 (2 + \p)}{\alpha^2 +  2 \alpha (\beta + \gamma) + (\beta + \gamma)^2 \p}\,.
 	\]

	Using Mathematica we verify that this ratio has its maximum at the desired value
	\[
	\frac{-3 + \p + \sqrt{-15 + \p (18 + \p)}}{2(\p-1)}\,.
	\]
	\qed
\end{proof}


\subsection{A nearly optimal deterministic algorithm for extreme uniform instances}\label{subsec:extreme-uniform}

We present a deterministic algorithm for the restricted class of \emph{extreme uniform} instances, that is almost tight for the instance that yields the deterministic lower bound. An \emph{extreme uniform} instance consists of jobs with uniform upper limit $\ub$ and processing times in $\{0, \ub\}$. Our algorithm \ute\ requires a parameter $\rho\geq 1$ and attains
competitive ratio~$\mratio \approx 1.8668$
for this class of instances when setting the algorithm parameter $\rho$ accordingly.

\begin{algo}[\ute]
Let parameter $\rho\geq 1$ be given. If the upper limit $\ub$ is at most~$\rho$, then all jobs are executed without test. Otherwise, all jobs are tested. The first $\max\{0,\beta\}$ fraction of the jobs are executed immediately after their test. The remaining fraction of the jobs are executed immediately after their test if they have processing time $0$ and are delayed otherwise, see Figure~\ref{fig:UTE2}.  The parameter $\beta$ is defined as
\begin{equation} \label{eq:ute:b2}
    \beta = \frac{1 - \ub + \ub^2 - \rho + 2 \ub \rho - \ub^2 \rho}
                {1 - \ub + \ub^2 - \rho + \ub \rho}.
\end{equation}
\end{algo}
The choice of $\beta$ will become clear in the analysis of the algorithm.

\begin{figure}
\begin{center}
    \includegraphics[width=12cm]{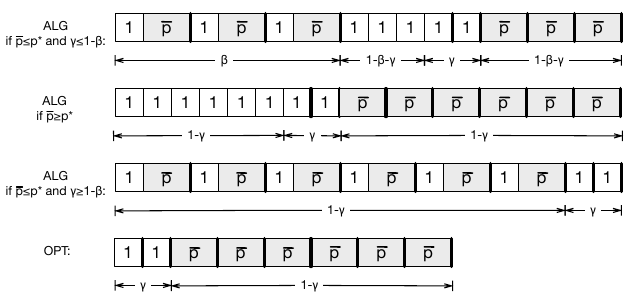}
    \caption{The schedule produced by UTE and the optimal schedule.}
    \label{fig:UTE2}
\end{center}
\end{figure}

\begin{theorem}\label{thm:ext-uniform-UB}
   For scheduling with testing to minimize the sum of completion times, the competitive ratio of \ute on extreme uniform instances is at most~$\rho = \frac{1 + \sqrt{3 + 2 \sqrt{5}}}{2} \approx 1.8668$.
\end{theorem}
%
%
\begin{proof}
    If the upper limit $\ub$ is at most $\rho$, by Lemma~\ref{lem:all_jobs_greater_c} the algorithm has competitive ratio $\ub$, which fulfills the claim. Thus, we assume in the following $\ub \geq \rho$.
    An instance is defined by the job number $n$, an upper limit $\ub$ and a fraction $\gamma$ such that
the first $1-\gamma$ fraction of the jobs tested by \ute have processing time $\ub$, while the jobs in the remaining $\gamma$ fraction have processing time $0$.  The algorithm chooses $\beta$ so as to have the smallest ratio $\rho$.

    With the chosen fixed value of $\rho$, the value $\beta$ from equation~\eqref{eq:ute:b2} is a decreasing function in $\ub$ for $\ub\geq\rho$. Hence there is a threshold value $p^*$ such that $\beta(\ub)\leq 0$ for all $\ub \geq p^*$, which is
    \[
            p^* :=
            \frac{2 \rho + \sqrt{4 \rho-3} - 1}
                    {2 (\rho-1)} \approx 2.7961.
    \]

    As in previous proofs, we start to analyze the ratio only for the $n^2$ dependent part of the costs of \ute and OPT.    We distinguish three cases, depending on the ranges of $\ub$ and $\gamma$.

    \paragraph{Case $\rho \leq \ub \leq p^*$ and $\gamma \leq 1-\beta$.}
    Consider for now $\beta$ and $\rho$ as some undetermined parameters which will be optimized in the analysis of this case.
        The optimal cost is
\[
\OPT = \gamma^2/2 + \ub(\gamma-1)^2/2 + \gamma(1 - \gamma)
\]
    while the cost of \ute is
    \begin{align*}
        \ALG=& (\ub + 1)\beta^2/2+ \\
               & \ub (1- \beta - \gamma)^2/2 + \\
               & \gamma^2/2+ \\
               & (1 - \gamma + \ub \beta) \gamma  + \\
               & (1 + \ub \beta)(1 - \beta - \gamma).\\
    \end{align*}
    The algorithm is $\rho$-competitive in this case if $g \geq 0$ for
    \[
        g := 2 (\rho \OPT - \ALG) =
        -2 - \beta^2 + \beta (2 - 2 \gamma \ub) + \gamma^2 (\ub-1) (\rho-1) + \ub (\rho-1) +
 2 \gamma (\ub + \rho - \ub \rho).
    \]
    The expression $g$ is convex in $\gamma$ as the second derivate is $4(\rho-1)(\ub -1)>0$, hence the adversary chooses the extreme point
    \[
        \gamma = \frac{-\ub + \beta \ub - \rho + \ub \rho}
                        {(\rho-1)(\ub -1)}.
    \]
    The resulting $g$ is concave in $\beta$ as the second derivative is
    \[
        -4 - \frac{4\ub^2}{(\rho-1)(\ub -1)} < 0.
    \]
    Hence the algorithm would like to choose the extreme point
    \[
        \beta = \frac{1 - \ub + \ub^2 - \rho + 2 \ub \rho - \ub^2 \rho}
                {1 - \ub + \ub^2 - \rho + \ub \rho},
    \]
    which is the claimed expression~\eqref{eq:ute:b2}.
    Now $g$ depends solely on $\rho$ and $\ub$ and is increasing in both variables.
    Hence the smallest $\rho$ such that $g \geq 0$ is the root of $g$ in $\rho$ namely
    \begin{equation}  \label{eq:ute:ratio2}
        \rho =
        \frac{-1 - \ub +  2 \ub^2 - \ub^3 + \sqrt{-3 + 6 \ub - 3 \ub^2 - 6 \ub^3 + 10 \ub^4 - 4 \ub^5 + \ub^6}}
            {2 (\ub -1)},
    \end{equation}
    which we would clearly like to simplify.
    Considering the worst upper limit, namely $\ub=\rho$ the ratio simplifies to
    \[
        \rho = \frac{1 + \sqrt{3 + 2 \sqrt{5}}}{2} \approx 1.8668.
    \]

\paragraph{Case $\ub \geq p^*$.}
In this case $\beta\leq 0$ and \ute first tests and postpones the first $1 - \gamma$ fraction of jobs (all of length $\ub$) and then tests and executes the remaining $\gamma$ fraction (all of length 0). Thus the $n^2$ dependent cost for the algorithm is
\[
\ALG= \gamma^2/2 + \ub (1-\gamma)^2 /2 + (1-\gamma)\gamma + (1- \gamma),
\]
while the optimal cost is as in the previous case.

The ratio is at most $\rho$ if $g \geq 0$ for
\[
    g := 2(\rho \OPT - \ALG) =
        \gamma^2 (\ub - 1) (\rho - 1) + \ub (\rho - 1) + 2 \gamma (\ub + \rho - \ub \rho) - 1,
\]
where we used the factor $2$ to obtain a simpler expression.
The expression $g$ is increasing in $\ub$ as its derivative is $(1-\gamma)^2(\rho - 1)>0$.  Therefore we can assume for the worst case $\ub=p^*$.  Now we observe that $g$ is convex in $\gamma$ as the second derivative is $1+\sqrt{4 \rho -3}>0$. Hence the adversary chooses the extreme point for $g$ in $\gamma$, namely
\[
    \gamma = \frac{-1+\sqrt{4 \rho -3}}
                    {1 + \sqrt{4 \rho -3}}.
\]
With these choices of $\ub$ and $\gamma$ the expression $g$ has the form
\[
        g = \frac{3 - 2 (2- \rho) \rho - \sqrt{4 \rho -3 }}
            {2(\rho -1)}.
\]
Evaluated at $\frac{1 + \sqrt{3 + 2 \sqrt{5}}}{2}$ the goal is positive, proving the ratio in this case.

\paragraph{Case $\rho \leq \ub \leq p^*$ and $\gamma \geq 1- \beta$.}
    In this case the algorithm does not postpone the execution of jobs.  The jobs in the first $1- \gamma$ fraction have processing time $\ub$ and the last $\gamma$ fraction jobs have processing time $0$. Therefore the cost of \ute is
    \[
        \ALG= (\ub + 1)(1 - \gamma)^2 /2 +
                \gamma^2/2 +
                (\ub + 1) \gamma (1 - \gamma).
    \]
    In this case $g$ is
    \[
        g := 2 (\rho \OPT - \ALG) = -1 - (1-\gamma^2) \ub + (\ub - (2-\gamma) \gamma (\ub-1)) \rho.
    \]
    The value of $\beta$ is maximized at $\ub=\rho$, which is approximately $\beta^* := 0.2869$.
    We observe that the derivative of $g$ in $\ub$ is negative in the range $\gamma\in[1-\beta^*,1]$, hence $g$ is minimized at $\ub=p^*$.  For this choice $g$ has the approximate form
    \[
        1.4235 + \gamma (-6.7057 + 6.1489 \gamma)
    \]
    which can never become negative, even in the range $\gamma\in[0,1]$.  Therefore we have shown that the ratio is at most $\rho$ also in this last case.

\paragraph{Analysis of the $n$ dependent parts of the costs.}
    Again we consider the same 3 cases as before.
    \begin{description}
        \item[Case $\rho \leq \ub \leq p^*$ and $\gamma \leq 1-\beta$:]
        Here the $n$ dependent costs of \ute and \OPT are
        \begin{align*}
            \ALG&= (\ub + 1)\beta/2 + \gamma/2 + \ub(1- \beta - \gamma)/2
            \\
            \OPT  &= \gamma/2 + \ub(1- \gamma)/2.
        \end{align*}
        The ratio is at most $\rho$ if $g \geq 0$ for
        \[
            g := \rho \OPT - \ALG\approx 0.7309 - 0.4232 \gamma
        \]
        which is positive for all $\gamma\in[0,1]$.
        \item[Case $\ub \geq p^*$:]
        The $n$ dependent cost of \ute is
        \[
            \ALG= \gamma/2 + (\ub+1)(1 - \gamma)/2
        \]
        leading to
        \[
            \rho \OPT - \ALG\approx 0.7118 - 0.2784 \gamma
        \]
        which again is always positive.
        \item[Case $\rho \leq \ub \leq p^*$ and $\gamma \geq 1- \beta$:]
        This time we have
        \[
            \ALG= (\ub+1)(1 - \gamma)/2 + \gamma/2
        \]
        and
        \[
            \rho \OPT - \ALG\approx 0.3508 + 0.0768 \gamma
        \]
        which completes the proof.
    \end{description}
    \qed
\end{proof}

The deterministic lower bound $1.8546$ in Theorem~\ref{thm:detLB} uses the upper limit $\ub\approx 1.9896$. Plugging this choice of $\ub$ into the expression~\eqref{eq:ute:ratio2} shows that \ute has a near-optimal competitive ratio.
\begin{corollary}\label{cor:ute}
	\ute has competitive ratio $\mratio \approx 1.8552$ for scheduling with testing to minimize the sum of completion times for instances with upper limits $\ub\approx 1.9896$. 
\end{corollary}


\section{Optimal Testing for Minimizing the Makespan}\label{sec:makespan}
We consider scheduling with testing with the objective of minimizing the makespan, i.e., the completion time of the last job that is completed. This objective function is special, as the time each job spends on the machine has a linear contribution to the objective function value. This yields that for any algorithm that treats each job independent of the position where it occurs in the schedule, there is a worst-case instance containing only a single job.
\begin{lemma}\label{lem:makespan_add}
	If an algorithm that treats each job independent of the position where it occurs in the schedule is $\mratio$-competitive for one-job instances, it is $\mratio$-competitive also for general instances.
\end{lemma}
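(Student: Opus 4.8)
The plan is to exploit the fact that, for the makespan objective, the cost of any idle-free schedule is simply the sum over all jobs of the machine time devoted to that job. First I would observe that we may assume without loss of generality that the algorithm never leaves the machine idle before all jobs are completed: removing idle time can only decrease the makespan, and it does not change which jobs are tested, so a position-independent algorithm remains position-independent. For such an idle-free schedule, let $a_j$ denote the total machine time spent on job $j$, namely $a_j=\bar p_j$ if $j$ is executed untested and $a_j=1+p_j$ if $j$ is tested. Then $\ALG(I)=\sum_j a_j$, and since an optimal schedule is also idle-free and processes each job $j$ in time $\min\{1+p_j,\bar p_j\}$, we have $\OPT(I)=\sum_j \min\{1+p_j,\bar p_j\}$.

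Next I would use the hypothesis that the algorithm treats each job independently of its position in the schedule. This forces the decision whether to test job $j$ to depend only on $\bar p_j$ (and the subsequent immediate-or-deferred execution choice is irrelevant to $a_j$, since only the total machine time matters for the makespan). Hence $a_j$ is a fixed function of the pair $(\bar p_j,p_j)$ alone — exactly the machine time the algorithm would incur on the one-job instance consisting of $j$ alone. On that one-job instance the algorithm's makespan is $a_j$ and the optimal makespan is $\min\{1+p_j,\bar p_j\}$, so $\mratio$-competitiveness on one-job instances yields $a_j \le \mratio\,\min\{1+p_j,\bar p_j\}$ for every admissible job $j$.

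Finally I would sum this inequality over all jobs of an arbitrary instance $I$:
\[
  \ALG(I) \;=\; \sum_{j} a_j \;\le\; \mratio \sum_{j} \min\{1+p_j,\bar p_j\} \;=\; \mratio\,\OPT(I),
\]
which is the claim. For a randomized algorithm the same argument works verbatim after replacing $a_j$ by $\E{a_j}$ and using linearity of expectation, since $\E{\ALG(I)}=\sum_j \E{a_j}$ and the one-job bound gives $\E{a_j}\le \mratio\,\min\{1+p_j,\bar p_j\}$.

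The only point that really needs care — rather than a genuine obstacle — is pinning down the meaning of \emph{treating each job independently of position}: I would make explicit that it is precisely the property that $a_j$ is determined by $(\bar p_j,p_j)$ uniformly across all instances, which is what allows the one-job competitiveness bound to be applied term by term. Everything else is the routine observation that the makespan of an idle-free schedule is additive over jobs.
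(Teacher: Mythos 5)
Your proposal is correct and follows essentially the same approach as the paper's proof: both exploit the additivity of the makespan over jobs (i.e.\ $\ALG(I)=\sum_j \ALG(\{j\})$ and $\OPT(I)=\sum_j \OPT(\{j\})$), then apply the one-job competitiveness bound term by term. The paper states this more tersely; you have simply spelled out the machine-time decomposition $a_j$ and the role of position-independence in making each $a_j$ equal to the algorithm's cost on the singleton instance, which are the unstated details behind the paper's one-line calculation.
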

\begin{proof}
	Let an instance $I$ with $n$ jobs $j_1, \dots, j_n$ and an arbitrary algorithm as in the statement of the lemma be given. Then the makespan $ALG(I)$ equals the sum of the makespans, if we split the instance into one-job instances. By assumption, the algorithm is $\mratio$-competitive for each one-job instance. Thus, we have
	\begin{align*}
		\ALG(I) & = \sum_{i = 1}^n \ALG(\{j_i\}) \leq \sum_{ i = 1}^n \mratio \cdot \OPT(\{j_i\}) = \mratio \cdot \OPT(I).
	\end{align*}
	\qed
\end{proof}

\paragraph{Deterministic algorithms.}
We apply Lemma~\ref{lem:makespan_add} to give a deterministic algorithm with competitive ratio~$\mratio = \varphi$, the golden ratio, and show this is best-possible.
\begin{theorem}
	Let $\varphi \approx 1.618$ be the golden ratio. Testing each job $j$ if and only if $\ub_j > \varphi$ is an algorithm with competitive ratio $\varphi$ for scheduling with testing to minimize the makespan. This is best possible for deterministic algorithms.
\end{theorem}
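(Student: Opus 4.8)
The plan is to apply Lemma~\ref{lem:makespan_add}, which reduces everything to analyzing one-job instances, since the proposed algorithm clearly treats each job independently of its position. So I would first argue the upper bound on a single job~$j$, and then construct a matching lower bound instance (again a single job suffices).

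For the upper bound, fix a single job with upper limit $\ub_j$ and (unknown) true processing time $p_j\in[0,\ub_j]$. The algorithm tests if and only if $\ub_j>\varphi$. I would distinguish two cases. If $\ub_j\le\varphi$, the algorithm executes the job untested, incurring makespan $\ub_j$, while $\OPT\ge\min\{1+p_j,\ub_j\}\ge 1$ (the optimum needs at least unit time in the worst case $p_j$ could be near $0$, but more simply $\OPT\ge\min\{1,\ub_j\}$ and when $\ub_j\le\varphi$ we have $\ub_j/\OPT\le\ub_j/1\le\varphi$ only if $\ub_j\ge 1$; for $\ub_j<1$ the ratio is even $1$). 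Actually the clean bound is $\OPT=\min\{1+p_j,\ub_j\}$ for one job, so when $\ub_j\le\varphi$ the ratio is $\ub_j/\min\{1+p_j,\ub_j\}$, which is at most $\ub_j/1\le\varphi$ when $\ub_j\ge 1$, and is $1$ when $\ub_j<1$. If $\ub_j>\varphi$, the algorithm tests and then executes, incurring makespan $1+p_j$, while $\OPT=\min\{1+p_j,\ub_j\}$. If $1+p_j\le\ub_j$ the ratio is~$1$; otherwise $\OPT=\ub_j>\varphi$ and the ratio is $(1+p_j)/\ub_j$, which is maximized as $p_j\to\ub_j$, giving $(1+\ub_j)/\ub_j=1+1/\ub_j<1+1/\varphi=\varphi$ by the defining identity $\varphi^2=\varphi+1$. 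In all cases the one-job ratio is at most~$\varphi$, so Lemma~\ref{lem:makespan_add} gives a $\varphi$-competitive algorithm.

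For the lower bound, consider an adversary presenting a single job with upper limit $\ub=\varphi$ (or $\ub$ slightly above, to force the two regimes to meet). Any deterministic algorithm either tests the job or not. If it does not test, the adversary sets $p_j=0$, so $\ALG=\ub$ while $\OPT=1$, giving ratio $\varphi$. If it tests, the adversary sets $p_j=\ub$, so $\ALG=1+\ub$ while $\OPT=\ub$, giving ratio $(1+\ub)/\ub=1+1/\varphi=\varphi$. Hence no deterministic algorithm beats $\varphi$ even on one-job instances. One should be slightly careful about the boundary $\ub=\varphi$ exactly (where both regimes give ratio exactly $\varphi$); taking $\ub=\varphi$ works directly, or one can take a limit over $\ub\to\varphi$ from either side.

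The main obstacle is essentially bookkeeping: making sure the case analysis for the one-job upper bound is exhaustive (the subtlety is the interplay between whether $1+p_j\le\ub_j$ and whether $\ub_j\lessgtr\varphi$, and the degenerate small-$\ub_j$ range), and checking the two adversary responses in the lower bound genuinely pin the ratio to $\varphi$ regardless of the algorithm's (deterministic) choice. No real analytic difficulty arises once the problem is reduced via Lemma~\ref{lem:makespan_add}; the golden-ratio identity $\varphi^2=\varphi+1$ does all the work, balancing the ``tested but turned out long'' loss $1+1/\ub$ against the ``untested but turned out short'' loss $\ub$.
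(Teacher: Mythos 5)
Your proposal is correct and follows essentially the same route as the paper: reduce to one-job instances via Lemma~\ref{lem:makespan_add}, then do the case split on whether the algorithm tests, using $\varphi^2=\varphi+1$ to balance the untested-but-short loss $\ub$ against the tested-but-long loss $(1+\ub)/\ub$, and use the same single job with $\ub=\varphi$ (adversary sets $p=0$ or $p=\ub$) for the matching lower bound. The only cosmetic difference is that you spell out the sub-case $1+p_j\le\ub_j$ explicitly, where the paper jumps directly to the worst case $p=\ub$.
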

\begin{proof}
	By Lemma~\ref{lem:makespan_add} we just need to consider an instance consisting of a single job. Let that job have upper limit~$\ub$ and processing time~$p$. If the algorithm does not test the job, then $\ub\le\varphi$. If $\ub\le 1$, the optimal schedule also executes the job untested, and the competitive
ratio is~$1$. If $\ub>1$, the makespan of the algorithm is $\ub\le \varphi$ and the optimal makespan is at least~$1$, because the optimal makespan is minimized if the job is tested in the optimal schedule and reveals $p = 0$. Thus, the ratio is at most~$\varphi$.

If the algorithm tests the job, then its makespan is $1 + p$, while the optimal makespan is $\min \{\ub, 1 + p\}$. In the worst case, the job has processing time $p = \ub$. Then the ratio is $(1 + \ub)/\ub$, which decreases when the upper limit $\ub$ increases. Thus, it is at most $(1 + \varphi)/\varphi=\varphi$.

To show this is best-possible, consider an instance with a single job with upper limit $\varphi$. Any algorithm that does not test this job has competitive ratio at least $\varphi$, as the optimal makespan is~$1$ if the job has processing time~$0$. Any other algorithm tests the job. If the job has processing time~$\varphi$, the competitive ratio is $(1 + \varphi)/\varphi = \varphi$. 
\qed
\end{proof}
This shows that there is an algorithm that approaches the optimal execution time up to a factor
$\varphi$. However, this does not lead to a $\varphi$-approximation for the problem of minimizing the sum of completion times where the additional difficulty lies in determining the job order. 

\paragraph{Randomized algorithms.}
For randomized algorithms, we first show that no randomized (or deterministic) algorithm can have competitive ratio $\mratio < 4/3$.
\begin{theorem}
	No algorithm has competitive ratio $\mratio < 4/3$ for minimizing the makespan (resp.\ the sum of completion times) for scheduling with testing.
\end{theorem}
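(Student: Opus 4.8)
The plan is to apply Yao's principle, just as for the randomized lower bound in the sum-of-completion-times setting. First I would note that, by the additive decomposition of the makespan over jobs (the reasoning behind Lemma~\ref{lem:makespan_add}), it suffices to exhibit a hard distribution over \emph{single-job} instances: if every deterministic algorithm has expected makespan at least $\tfrac{4}{3}$ times the expected optimal makespan on such a distribution, the bound follows for randomized algorithms by Yao's principle~\cite{yao1977probabilistic,BorodinEY98}. So I would take the distribution $D$ on one-job instances in which the job has upper limit $\bar{p}=2$ and processing time $p=0$ with probability $\tfrac12$ and $p=2$ with probability $\tfrac12$.

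Next I would compute the two expectations. On a single job the optimal makespan is $\min\{1+p,\bar{p}\}$, which is $1$ when $p=0$ and $2$ when $p=2$, so $E_D[\OPT]=\tfrac12\cdot 1+\tfrac12\cdot 2=\tfrac32$. For the algorithm side, observe that a deterministic algorithm facing a single job of known upper limit $2$ has essentially only two options (inserting idle time can only hurt): execute the job untested, which gives makespan $2$ for sure; or test the job and then execute it, which gives makespan $1+p$, i.e.\ $1$ with probability $\tfrac12$ and $3$ with probability $\tfrac12$, hence expected makespan $2$. Both options give $E_D[\ALG]=2=\tfrac43\,E_D[\OPT]$, so every deterministic algorithm has ratio at least $\tfrac43$ on $D$, and Yao's principle yields the theorem.

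The argument has no real obstacle; the only point requiring a little care is the choice of parameters. The instance is engineered so that testing and not testing are \emph{equally bad} in expectation, which is what forces $E_D[\ALG]$ up to $2$ regardless of the algorithm's decision. More systematically one can leave the upper limit as $1/q$ and the probability of $p=0$ as $q$; equalizing the two strategies again gives $\bar{p}=1/q$, and the resulting ratio $1/(q^2-q+1)$ is maximized at $q=\tfrac12$, yielding exactly $\tfrac43$. I would also double-check that no deterministic strategy on a single job can beat expected makespan $2$ — but since the only genuine decision is whether to test, and the distribution neutralizes that decision, there is nothing further to verify.
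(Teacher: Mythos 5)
Your proof is correct and follows essentially the same approach as the paper: a Yao's-principle argument on the single-job distribution with $\bar{p}=2$ and $p\in\{0,2\}$ equiprobable, showing both deterministic strategies (test or not) have expected makespan $2$ against expected optimum $3/2$.
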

\begin{proof}
	We want to apply Yao's principle~\cite{yao1977probabilistic} and give a randomized instance for which no deterministic algorithm is better than $4/3$-competitive. Consider a one-job instance with $\ub = 2$. Let the job have $p = 0$ and $p = 2$ each with probability $0.5$. The deterministic algorithm that does not test the job has expected makespan $2$ and the deterministic algorithm testing the job also has expected makespan $2$. The expected optimal makespan is $3/2$. Thus, the instance yields the desired bound.
	\qed
\end{proof}
For minimizing the makespan the order in which jobs are treated is irrelevant by Lemma~\ref{lem:makespan_add}. Thus, the only decision an algorithm has to take is whether to test a job. Consider a job with upper limit~$\ub$. We show that the algorithm that executes the job untested if $\ub\le 1$ and otherwise tests it with probability $1 - 1/(\ub^2 - \ub +1)$ is best-possible.
\begin{theorem}
	Our randomized algorithm testing each job with $\ub> 1$ with probability $1 - 1/(\ub^2 - \ub +1)$ has competitive ratio $4/3$ for scheduling with testing to minimize the makespan. This is best-possible.
\end{theorem}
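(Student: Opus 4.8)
The plan is to reduce to single-job instances via Lemma~\ref{lem:makespan_add} and then run a short case analysis on the upper limit $\ub$ and on the processing time $p\in[0,\ub]$ chosen adversarially after the algorithm's coin flip. Since the algorithm treats each job independently of its position in the schedule, Lemma~\ref{lem:makespan_add} reduces the claim to showing $4/3$-competitiveness on a one-job instance. If $\ub\le 1$, both the algorithm and the optimum execute the job untested, so the ratio is $1$ and there is nothing to prove. I would then focus on $\ub>1$, where the algorithm tests with probability $q:=1-1/(\ub^2-\ub+1)$; a first sanity check is that $q\in(0,1)$, which follows from $\ub^2-\ub+1=\ub(\ub-1)+1>1$.

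For $\ub>1$ the expected makespan of the algorithm equals $q(1+p)+(1-q)\ub$, while the optimal makespan is $\min\{\ub,1+p\}$. I would split the range of $p$ at the breakpoint $p=\ub-1$. On $[0,\ub-1]$ the optimum equals $1+p$ and the ratio $q+(1-q)\ub/(1+p)$ is decreasing in $p$, so it is maximized at $p=0$ with value $q+(1-q)\ub$. On $[\ub-1,\ub]$ the optimum equals $\ub$ and the ratio $\big(q(1+p)+(1-q)\ub\big)/\ub$ is increasing in $p$, so it is maximized at $p=\ub$ with value $(q+\ub)/\ub$. Hence the worst-case ratio over all $p$ is $\max\{\,q+(1-q)\ub,\ (q+\ub)/\ub\,\}$.

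The crux is that the stated probability $q$ is exactly the one that equalizes these two expressions: writing $q=(\ub^2-\ub)/(\ub^2-\ub+1)$ and substituting, both of them simplify to $\ub^2/(\ub^2-\ub+1)$ in one line. It then remains to verify $\ub^2/(\ub^2-\ub+1)\le 4/3$, which rearranges to $3\ub^2\le 4(\ub^2-\ub+1)$, i.e.\ to $(\ub-2)^2\ge 0$ --- true, with equality exactly at $\ub=2$, matching the lower-bound instance. Finally, optimality (``best-possible'') is immediate from the previous theorem, which already shows that no randomized algorithm beats ratio $4/3$. I do not expect a genuine obstacle here; the only care needed is bookkeeping --- carrying out the two monotonicity arguments on the correct half-intervals and checking that the degenerate breakpoint $p=\ub-1$ (where the two cases agree) causes no inconsistency --- and possibly recording how the formula for $q$ was derived (by equalizing the two endpoint ratios) so that the substitution step does not look like magic.
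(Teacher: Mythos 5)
Your proposal is correct and takes essentially the same route as the paper: reduce to a single job via Lemma~\ref{lem:makespan_add}, argue that the worst case has $p\in\{0,\ub\}$, compute the ratio in both cases to be $\ub^2/(\ub^2-\ub+1)$, and maximize over $\ub$. The only cosmetic difference is that you establish the endpoint reduction by a direct monotonicity check on each half-interval, whereas the paper invokes Proposition~\ref{prop:change-indep-deltap} on $[0,\ub-1]$ and a separate monotone-increase argument on $[\ub-1,\ub]$; and you make explicit the rearrangement to $(\ub-2)^2\ge 0$, which the paper leaves as ``maximized at $\ub=2$.''
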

\begin{proof}
	By Lemma~\ref{lem:makespan_add} we just need to consider an instance consisting of a single job.
	If its upper limit~$\ub$ satisfies $\ub\le 1$, the algorithm executes
	the job untested, which is optimal. Therefore, assume for the
	rest of the proof that $\ub> 1$.

	Note that Proposition~\ref{prop:change-indep-deltap}, which
	was stated in the context of minimizing the sum of completion times,
	holds also for single-job instances where the objective is the makespan,
	because for one job the two objectives are the same.
	If $0<p<\ub-1$, we observe that the optimal makespan
	and the expected makespan of the algorithm depend linearly
	on~$p$, so by Proposition~\ref{prop:change-indep-deltap}
	we can set $p$ to $0$ or $\ub-1$ without decreasing the competitive
	ratio. Now, if $\ub-1\le p<\ub$, observe that increasing $p$ to
	$\ub$ increases the expected makespan of the algorithm but
	does not affect the optimum. Therefore, we can assume that
	$p\in\{0,\ub\}$ in a worst-case instance.

	Let us first consider the case $p = \ub$. Then the optimal solution schedules this job without test. Thus, the ratio of algorithm length over optimal length is
	\[ \mratio = \frac{E[\ALG]}{\OPT} = \left(1 - \frac{1}{\bar p^2 - \bar p +1}\right)\frac{\bar p + 1}{\bar p} + \frac{1}{\bar p^2 - \bar p +1}		
	= \frac{\bar p^2}{\bar p^2 - \bar p +1}.\]
	Otherwise, we have $p=0$. Then  we have
		\[ \mratio = \frac{E[\ALG]}{\OPT} = \left(1 -\frac{1}{\bar p^2 - \bar p +1} \right) + \frac{1}{\bar p^2 - \bar p +1}\bar p		=	 \frac{\bar p^2}{\bar p^2 - \bar p +1}.\]
	This function is maximized at $\bar p = 2$, which yields the competitive ratio $4/3$.
	\qed
\end{proof}


\section{Conclusion}
In this paper we have introduced an adversarial model of scheduling with
testing where a test can shorten a job but the time for the test also
prolongs the schedule, thus making it difficult for an algorithm to find
the right balance between tests and executions. We have presented upper and
lower bounds on the competitive ratio of deterministic and randomized algorithms
for a single-machine scheduling problem with the objective of minimizing the
sum of completion times or the makespan. An immediate open question is whether
it is possible to achieve competitive ratio below~$2$ for minimizing the
sum of completion times with a deterministic algorithm for arbitrary instances.
Further interesting directions for future work include
the consideration of job-dependent test times or other scheduling problems
such as parallel machine scheduling or flow shop problems. More generally, the study
of problems with explorable uncertainty in settings where the costs for
querying uncertain data directly contribute to the objective value is
a promising direction for future work.

\section*{Acknowledgements}
We would like to thank Markus Jablonka and Bruno Gaujal for helpful discussions about
the algorithm \TA, as well as an anonymous referee for pointing us to related work on exploration versus exploitation in the multi-armed bandit framework.

\bibliographystyle{abbrv}
\bibliography{uncertainty}

\end{document}